\begin{document}
\title{Principal Component Analysis Based
Broadband Hybrid Precoding for Millimeter-Wave Massive MIMO Systems}
\author{Yiwei Sun,  
        Zhen Gao,~
        Hua Wang,~
        Byonghyo Shim,~\IEEEmembership{Senior Member,~IEEE}, 
        Guan Gui,~
        Guoqiang Mao,~\IEEEmembership{Fellow,~IEEE},
        and Fumiyuki Adachi,~\IEEEmembership{Life Fellow,~IEEE}}

\maketitle

\begin{abstract}
Hybrid analog-digital precoding is challenging
for broadband millimeter-wave (mmWave) massive MIMO systems,
since the analog precoder is frequency-flat but the mmWave channels are frequency-selective.
In this paper, we propose a principal component analysis (PCA)-based broadband hybrid precoder/combiner design, where both the fully-connected array
and partially-connected subarray (including the fixed and adaptive subarrays) are investigated.
Specifically, we first design the hybrid precoder/combiner for fully-connected array and fixed subarray based on PCA,
whereby a low-dimensional frequency-flat precoder/combiner is acquired
based on the optimal high-dimensional frequency-selective precoder/combiner.
Meanwhile, the near-optimality of our proposed PCA approach
is theoretically proven.
Moreover, for the adaptive subarray,
a low-complexity shared agglomerative hierarchical clustering algorithm
is proposed to group the antennas for the further improvement of spectral efficiency (SE) performance.
Besides, we theoretically prove that the proposed antenna grouping algorithm is only determined by
the slow time-varying channel parameters in the large antenna limit.
Simulation results demonstrate the superiority
of the proposed solution over state-of-the-art schemes in SE, energy efficiency (EE), bit-error-rate performance, and the robustness to time-varying channels.
Our work reveals that the EE advantage of adaptive subarray over fully-connected array is obvious for both active and passive antennas,
but the EE advantage of fixed subarray only holds for passive antennas.
\end{abstract}
\renewcommand{\thefootnote}{}
\footnotetext[1]{

Y. Sun, Z. Gao, and H. Wang are with 
Beijing Institute of Technology, Beijing, China
(emails: gaozhen16@bit.edu.cn).
B. Shim is with Institute of New Media and Communications School of Electrical and Computer Engineering, Seoul National University, Seoul, Korea. G. Gui is with College of Telecommunication and Information Engineering, Nanjing University of Posts and Telecommunications, Nanjing, China. G. Mao is with School of Computing and Communication, The University of Technology Sydney, Australia. F. Adachi is with Dept. of Electrical and Communications Engineering Tohoku University, Sendai, Japan. This paper was presented in part at the IEEE GLOBECOM'18 \cite{conf_1,conf_2}.}
\renewcommand{\thefootnote}{\arabic{footnote}}

\begin{IEEEkeywords}
Hybrid precoding, massive MIMO, OFDM, millimeter-wave, adaptive subarray, energy efficiency
\end{IEEEkeywords}

%
\IEEEpeerreviewmaketitle

\section{Introduction}
Millimeter-wave (mmWave) communication has
been conceived to be a key enabling technology for the next-generation
communications, since it can provide Gbps data rates by leveraging the large transmission bandwidth \cite{R1,R2,R5,ref_plus_7,hTWC,group_1,group_2}. To combat
the severe path loss in mmWave channels, a large number of antennas
are usually employed at both the base stations (BS) and the mobile stations
for beamforming \cite{mao}. However, a large number of antennas could lead to the severe hardware cost and power consumption if each antenna
requires a radio frequency (RF) chain as in conventional fully-digital
MIMO systems. To overcome this problem, hybrid MIMO
has been emerging to trade off hardware
cost with the spectral efficiency (SE) and energy efficiency (EE)~\cite{ref_plus_9,ref_plus_10,R3,R4,R6}.
This tradeoff depends on the specific hybrid MIMO architectures,
which includes the fully-connected array (FCA) and partially-connected subarray (PCS),
and the latter can be further categorized into the fixed subarray (FS) and adaptive subarray (AS) as depicted in Fig.~\ref{FS_AS_fig}~\cite{adp}.
Nevertheless, how to design the hybrid precoding over broadband channels is challenging, as the RF precoding is frequency flat and has the constant-modulus constraint (CMC)~\cite{lim_feedback}.
Therefore, it is of great importance to design an efficient broadband hybrid precoder/combiner for mmWave massive MIMO systems.
\vspace*{-4mm}
\subsection{Related Work}
Narrowband hybrid precoding has been investigated in \cite{OMP,CS_lowcomplx,par_13,ref_plus_9,ref_plus_10,hy-bd,mao7}. Specifically, a compressive sensing (CS)-based hybrid precoding was proposed in \cite{OMP}, where the channel sparsity was exploited
with the aid of orthogonal matching pursuit (OMP) algorithm.
To reduce the computational complexity, a low-complexity CS-based beamspace hybrid precoding was developed in \cite{CS_lowcomplx}.
To improve the EE, an iterative analog precoding was designed for FS \cite{par_13}, but it failed to consider the digital precoding.
Moreover, a constant envelope hybrid precoding scheme was proposed, where the hybrid precoding was designed under the per-antenna constant envelope constraints \cite{ref_plus_9}. Additionally, the codebook-based scheme, the hybrid block diagonal scheme, and the heuristic scheme were respectively proposed in \cite{ref_plus_10}, \cite{hy-bd}, and \cite{mao7} for multi-user MIMO.
In contrast to the described prior art,
hybrid solutions with significantly different
analog architectures have also been explored
in \cite{major1,major2}, which provide benefits of reduced
hardware cost and channel estimation overhead.
However, \cite{OMP,CS_lowcomplx,par_13,ref_plus_9,ref_plus_10,hy-bd,mao7} only assumed the flat fading channels. 

To effectively combat the time dispersive channels, several elegant broadband hybrid precoding solutions
have been proposed in \cite{Kim_2,lim_feedback,170825,fre_flat}. Most of them adopted OFDM so that
the broadband frequency-selective fading channels
were converted into multiple parallel narrowband frequency-flat fading channels.
To be specific, a hybrid precoding scheme has been proposed in \cite{Kim_2} to support
single stream transmission in MIMO-OFDM system, where the optimal beam pair was exhaustively searched from a codebook predefined for FS.
To reduce the computational complexity, the limited-feedback codebook based broadband hybrid precoder has been proposed for fully-connected array (FCA) \cite{lim_feedback}.
Moreover, by exploiting the channel correlation information among different subcarriers, a broadband hybrid precoding was proposed, where both FCA and PCS were investigated \cite{170825}.
However, \cite{lim_feedback} did not specify the combiner design at the receiver, and \cite{170825} assumed the fully-digital array at the receiver. Note that in \cite{170825}, although a greedy algorithm was proposed to group the antennas for AS,
this method could suffer from the poor performance due to the unbalanced antenna grouping.
Besides, by proving the dominant subspaces of frequency domain channel matrices at different subcarriers are equivalent, \cite{fre_flat} has theoretically revealed the optimality of the frequency-flat precoding. However, this conclusion was based on the ideal sparse channels by assuming the discrete angles of arrival/departure (AoA/AoD), and the specific precoder/combiner solution was not explicitly provided.

\vspace*{-4mm}
\subsection{Our Contributions}

\begin{figure}[tb]
    \centering
	\includegraphics[width=\columnwidth, keepaspectratio]{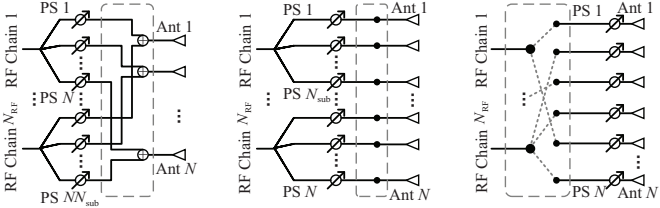}\vspace*{-6mm}
	\caption{Different hybrid MIMO architectures: (a) FCA; (b) FS; (c) AS, where the
		connection between phase shifters and antennas is adaptive, and only one possible connection is shown
		in this figure. ``PS" denotes phase shifter and ``Ant" denotes antenna.}\label{FS_AS_fig}\vspace*{-6mm}
\end{figure}

In this paper, we propose a broadband hybrid precoding for mmWave massive MIMO systems,
where both FCA and PCS are investigated. 
First, the hybrid
precoder/combiner for FCA is designed based on the principal component analysis (PCA),
whereby a near-optimal low-dimensional analog precoder/combiner is acquired from
the optimal high-dimensional fully-digitally precoder/combiner.
Furthermore, this PCA-based approach is generalized to PCS. 
Besides, for the AS, a low-complexity shared agglomerative hierarchical clustering (shared-AHC) algorithm is proposed to group the antennas
adapted to the channels for the further improved SE performance.
The contributions of this paper are summarized as follows:

\begin{itemize}
    \item \textbf{Near-optimal PCA-based hybrid precoder/combiner design}.
    Based on the framework of PCA (weighted PCA), we design the analog precoder (combiner) at the transmitter (receiver)
    according to the optimal fully-digital precoder (combiner).
    By contrast, state-of-the-art solutions usually only design the hybrid precoder,
    nevertheless, the hybrid combiner was not specified and the BER performance was not evaluated \cite{Kim_2,lim_feedback,170825,fre_flat}.
    We theoretically prove the near-optimality of our proposed PCA-based solution, whose SE and BER advantages over state-of-the-art solutions are also verified by simulations.
    %
	\item \textbf{Low-complexity shared-AHC algorithm to group antennas for AS}. The optimal antenna grouping for AS requires the exhaustive search, which
 suffers from the prohibitively high computational complexity. To solve this problem,
 we formulate the antenna grouping problem as the clustering problem in machine learning, and further
propose a low-complexity shared-AHC algorithm.
Meanwhile, we prove that the antenna grouping strategy based on the proposed shared-AHC algorithm
is only determined by the slow time-varying channel parameters in the
large antenna limits.
By comparison, the existing solution \cite{170825}
only considered the antenna grouping at the transmitter (TX), and it could lead to the extremely unbalanced antenna grouping case that no antenna was assigned to one RF chain.

	\item \textbf{EE performance evaluation in practical passive/active antennas}. Passive and active antennas have the different array architectures, which result in the different numbers of power-consuming electronic elements (e.g., power amplifiers).
EE analysis in prior work \cite{EE1,EE2,EE3,niu45} did not distinguish the passive and active antennas. By contrast, we consider the practical passive/active antennas for EE performance analysis.
%
%
Our work demonstrates that the EE advantage
of AS over FCA is overwhelming for both active and passive antennas,
while the EE advantage of FS over FCA can only be observed for passive antennas.
\end{itemize}

The rest of this paper is organized as follows. The system model is introduced in Section II. The proposed PCA-based hybrid precoder/combiner for FCA is presented in Section III.
The proposed PCA-based hybrid precoder/combiner for PCS and the proposed shared-AHC-based antenna grouping for AS are presented in Section IV.
In Section V, we evaluate the system performance. 
Finally, we conclude this paper in Section VII. This paper was presented in part at the IEEE GLOBECOM'18 \cite{conf_1,conf_2}. Except for the work presented in \cite{conf_1,conf_2}, the unique contribution of this paper is the expansion of the PCS structure on hybrid combiner and the evaluation of the performance, including the EE performance of the system, the computational
complexity of the antenna grouping algorithm, the robustness of antenna grouping algorithm to
time-varying channel, and the robustness to channel perturbation.

\textsl{Notations}: Following notations are used throughout this paper. $\mathbf{A}$ is a matrix, $\mathbf{a}$ is a vector, $a$ is a scalar, and $\mathcal{A}$ is a set.
Conjugate transpose and transpose of $\mathbf{A}$ are $\mathbf{A}^H$ and $\mathbf{A}^T$, respectively.
The $(i,j)$th entry of $\mathbf{A}$ is $[\mathbf{A}]_{i,j}$, $[\mathbf{A}]_{i,:}$ ($[\mathbf{A}]_{:,j}$) denotes
the $i$th row ($j$th column) of $\mathbf{A}$, sub-matrix $[\mathbf{A}]_{i_1:i_2,:}$ ($[\mathbf{A}]_{:,j_1:j_2}$) consists of
the $i_1$th to $i_2$th rows ($j_1$th to $j_2$th columns) of $\mathbf{A}$, and sub-matrix $[\mathbf{A}]_{i_1:i_2,j_1:j_2}$ consists of
the $i_1$th to $i_2$th rows and $j_1$th to $j_2$th columns of $\mathbf{A}$. Frobenius norm, $\ell_2$-norm, and determinant are denoted by $||\cdot||_F$,
$||\cdot||_2$, and $\det(\cdot)$, respectively. 
$\text{card}(\mathcal{A})$ is the
cardinality of a set $\mathcal{A}$. $|\bf{A}|$, $\angle(\mathbf{A})$, and  $\mathfrak{R}\{\mathbf{A}\}$ are
matrices whose elements are the modulus values, phase
values, and real parts of the corresponding elements in $\mathbf{A}$, respectively. $\text{round}(\mathbf{A})$ is a matrix by replacing every element in $\mathbf{A}$ with
its closest integer. 
$\otimes$ represents the Kronecker product. Subtraction between sets $\mathcal{A}$ and $\mathcal{B}$ is $\mathcal{A}\setminus\mathcal{B}=\{x|x\in\mathcal{A}~\&\notin\mathcal{B}\}$. $\mathbf{I}_N$ denotes an
identity matrix with size $N\times N$. The $i$th largest singular value of a matrix $\mathbf{A}$ is defined as $\lambda_i(\mathbf{A})$.
$(\cdot)^+$ denotes $(a)^+=a$ if $a>0$, otherwise $(a)^+=0$.
$\text{blkdiag}(\mathbf{a}_1,\cdots,\mathbf{a}_K)$ is a block diagonal matrix with $\mathbf{a}_i$ ($1\leq i\leq K$) on its
diagonal blocks.
Finally, $Col\mathbf{A}$ is the column space of the matrix $\mathbf{A}$, and
$(Col\mathbf{A})^\bot$ is the orthogonal complement space of $Col\mathbf{A}$.
\section{System Model}
\begin{figure}[tb]
  \centering
  \includegraphics[width=\columnwidth, keepaspectratio]{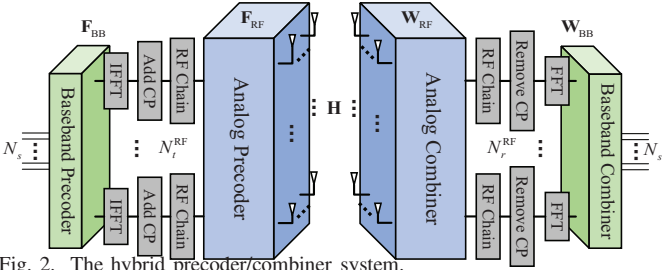}\vspace*{-6mm}
  \caption{The hybrid precoder/combiner system.}\label{fig_sys}\vspace*{-6mm}
\end{figure}
We consider a downlink mmWave massive MIMO system with perfect synchronization
(this can be achieved by efficient algorithms like \cite{per_syn,xiao46}) as shown in Fig. \ref{fig_sys}, where both the BS and the user employ the uniform planar array (UPA), and OFDM is adopted to combat the frequency-selective fading channels. The BS is equipped with $N_t=N_t^v\times N_t^h$ antennas and $N_t^{\rm RF}\ll N_t$ RF chains, where $N_t^v$ and $N_t^h$ are the numbers of vertical and horizontal transmit antennas, respectively. The user is equipped with $N_r=N_r^v\times N_r^h$ antennas and $N_r^{\rm RF}\ll N_r$ RF chains, where $N_r^v$ and $N_r^h$ are the numbers of vertical and horizontal receive antennas, respectively. In the downlink, the received signals of the $k$th subcarrier at the user are \cite{OMP}
\begin{equation}\label{rcv_sig}
\begin{aligned}
\mathbf{r}[k]=&(\mathbf{W}_{\rm RF}\mathbf{W}_{\rm BB}[k])^H(\mathbf{H}[k]\mathbf{F}_{\rm RF}\mathbf{F}_{\rm BB}[k]\mathbf{x}[k]+\mathbf{n}[k]),
\\&1\leq k\leq K,
\end{aligned}
\end{equation}
where $K$ is the number of subcarriers, $\mathbf{F}_{\rm BB}[k]\in\mathbb{C}^{N_t^{\rm RF}\times N_s}$, $\mathbf{F}_{\rm RF}\in\mathbb{C}^{N_t\times N_t^{\rm RF}}$, $\mathbf{W}_{\rm BB}[k]\in\mathbb{C}^{N_r^{\rm RF}\times N_s}$, $\mathbf{W}_{\rm RF}\in\mathbb{C}^{N_r\times N_r^{\rm RF}}$, $\mathbf{H}[k]\in\mathbb{C}^{N_r\times N_t}$, $\mathbf{x}[k]\in\mathbb{C}^{N_s\times 1}$, and $\mathbf{n}[k]\in\mathbb{C}^{N_r\times 1}$ are the digital precoder, analog precoder, digital combiner, analog combiner, frequency-domain channel matrix, transmitted signal, and additive white Gaussian noise (AWGN) associated with the $k$th subcarrier, respectively, and $N_s$ is the number of data streams. $\mathbf{n}[k]\sim \mathcal{CN}(0,\sigma_n^2\mathbf{I}_{N_r})$ and $\mathbf{x}[k]$ satisfies $\mathbb{E}[\mathbf{x}[k]\mathbf{x}^H[k]]=\mathbf{I}_{N_s}$.
Here $\mathbf{H}[k]=\sum_{d=0}^{D-1}\mathbf{\tilde H}[d]e^{-j\frac{2\pi k}{K}d}$, where $D$ is the maximum delay spread of the discretized channels, and $\mathbf{\tilde H}[d]\in\mathbb{C}^{N_r\times N_t}$ is the delay-domain channel matrix of the $d$th delay tap. We consider the clustered channel model \cite{OMP}, where the channel comprises $N_{\rm cl}$ clusters of multipaths with $N_{\rm ray}$ rays in each cluster. Thus, the delay-domain channel matrix is
\begin{equation}\label{H_d}
  \mathbf{\tilde H}[d]=\sum\nolimits_{i=1}^{N_{\rm cl}}\sum\nolimits_{l=1}^{N_{\rm ray}}{\tilde p}_{i,l}[d]\mathbf{a}_r(\phi^r_{i,l},\theta^r_{i,l})\mathbf{a}_t^H(\phi^t_{i,l},\theta^t_{i,l}),
\end{equation}
where ${\tilde p}_{i,l}[d]=\sqrt{N_tN_r/(N_{\rm cl}N_{\rm ray})}\alpha_{i,l}p(dT_s-\tau_{i,l})$ is the delay-domain channel coefficient,
$\tau _{i,l}$,  $\alpha _{i,l}$, and $p(\tau)$ are the delay, the complex path gain, and the pulse shaping filter for $T_s$-spaced signaling,
respectively. Thus the frequency-domain channel coefficient is $p_{i,l}[k]\!\!=\!\!\sum_{d=0}^{D-1}{\tilde p}_{i,l}[d]e^{-j\frac{2\pi k}{K}d}$.
In (\ref{H_d}), $\mathbf{a}_t(\phi^t_{i,l},\theta^t_{i,l})$ and $\mathbf{a}_r(\phi^r_{i,l},\theta^r_{i,l})$ are the steering vectors of the $l$th
path in the $i$th cluster at the TX and receiver (RX), respectively, where $\phi^t_{i,l}$ ($\phi^r_{i,l}$) and $\theta^t_{i,l}$ ($\theta^r_{i,l}$) are the azimuth and elevation
angles of the $l$th ray in the $i$th cluster for AoDs (AoAs).
The steering vector of the $l$th ray in the $i$th cluster for AoD is
$\mathbf{a}_t(\phi ^t_{i,l},\theta^t_{i,l})\!=\!\mathbf{e}_t^v(\Omega^v_{i,l})\otimes \mathbf{e}_t^h(\Omega^h_{i,l})$,
where $\mathbf{e}_t^v(\Omega^v_{i,l})\!=\!\frac{1}{\sqrt{N_t^v}}
\begin{bmatrix}1\ e^{-j2\pi\Omega^v_{i,l}}\
\cdots\ e^{-j2\pi(N_t^v-1)\Omega^v_{i,l}}\end{bmatrix}\!^T$, $\mathbf{e}_t^h(\Omega^h_{i,l})\!=\!
\frac{1}{\sqrt{N_t^h}}\begin{bmatrix}1\ e^{-j2\pi\Omega^h_{i,l}}\
\cdots\ e^{-j2\pi(N_t^h-1)\Omega^h_{i,l}}\end{bmatrix}\!^T$, $\Omega^h_{i,l}=\frac{\sin(\theta ^t_{i,l})\sin(\phi ^t_{i,l})d_h}{\lambda_c}$, $\Omega^v_{i,l}=
\frac{\cos(\theta ^t_{i,l})d_v}{\lambda_c}$, ${\lambda_c}$ is the carrier wavelength, and $d_h=\frac{\lambda_c}{2}$, $d_v=\frac{\lambda_c}{2}$ denote
the horizontally and vertically antenna spacing, respectively \cite{mao}.
Similarly, the receive steering vectors are $\mathbf{a}_r(\phi^r_{i,l},\theta^r_{i,l})\!=\!
[1 \cdots  e^{-j2\pi(m\Psi^h_{i,l}\!+\!n\Psi^v_{i,l})}  \\ \cdots e^{-j2\pi((N_r^h-1)\Psi^h_{i,l}\!+\!(N_r^v-1)\Psi^v_{i,l})}]\!^T/\sqrt{N_r}$, for $1\leq i\leq N_{\rm cl}$, $1\leq l\leq N_{\rm ray}$, where $\Psi^h_{i,l}=\frac{\sin(\theta ^r_{i,l})\sin(\phi ^r_{i,l})d_h}{\lambda_c}$ and $\Psi^v_{i,l}=\frac{\cos(\theta ^r_{i,l})d_v}{\lambda_c}$.

The achievable SE for the mmWave MIMO-OFDM system can be expressed as \cite{lim_feedback}
\begin{equation}\label{SE}
\begin{aligned}
R=\!&\frac{1}{K}\!\!\sum_{k=1}^{K}\!\log_2(\det(\mathbf{I}\!+\!
\mathbf{R}_n^{-1}\![k]\mathbf{W}_{\rm BB}^H\![k]\mathbf{W}_{\rm RF}^H
\mathbf{H}[k]\mathbf{F}_{\rm RF}\mathbf{F}_{\rm BB}\![k]
\\\times&\mathbf{F}_{\rm BB}^H\![k]\mathbf{F}_{\rm RF}^H\mathbf{H}^H\![k]
\mathbf{W}_{\rm RF}\!\mathbf{W}_{\rm BB}[k])),\!\!\!
\end{aligned}
\end{equation}
where $\mathbf{R}_n[k]=\sigma_n^2\mathbf{W}_{\rm BB}^H[k]\mathbf{W}_{\rm RF}^H
\mathbf{W}_{\rm RF}\mathbf{W}_{\rm BB}[k]$, $\sum_{k=1}^K||\mathbf{F}_{\rm RF}\mathbf{F}_{\rm BB}[k]||_F^2=KN_s$, $[\mathbf{F}_{\rm RF}]_{:,i}\in\mathcal{F}_{\rm RF}$ for $1\leq i\leq N_t^{\rm RF}$, $[\mathbf{W}_{\rm RF}]_{:,j}\in\mathcal{W}_{\rm RF}$ for $1\leq j\leq N_r^{\rm RF}$, $\mathcal{F}_{\rm RF}\subseteq\mathbb{C}^{N_t\times1}$ and $\mathcal{W}_{\rm RF}\subseteq\mathbb{C}^{N_r\times1}$ are respectively the sets of feasible RF precoder and combiner satisfying the CMC for each entry. Note that our work is distinctly different from the previous work \cite{170825}, which considered the hybrid precoder  but the fully-digital combiner. In this paper, we consider the hybrid MIMO architecture at both the TX and RX. Given full channel state information (CSI) at both TX and RX by using state-of-art efficient channel estimation solutions \cite{R1,ref_plus_7,hTWC,group_2,chanel1}, our goal is to design the hybrid precoder and combiner that maximize the SE. Since the sum rate $R$ is a function of variables ($\mathbf{F}_{\rm RF}$,$\{\mathbf{F}_{\rm BB}[k]\}_{k=1}^K$,$\mathbf{W}_{\rm RF}$,$\{\mathbf{W}_{\rm BB}[k]\}_{k=1}^K$), it is computationally inefficient to jointly optimize the sum rate. How to solve this intractable problem will be detailed as follows.
\section{PCA-Based Hybrid Precoder/Combiner for FCA}
In this section, we will propose the hybrid precoder/combiner design for mmWave massive MIMO systems based on FCA (weighted FCA), whereby the frequency-flat RF precoder (combiner) can be acquired from the optimal fully-digital frequency-selective precoder (combiner).
\subsection{PCA-Based Hybrid Precoder Design at TX}
We first design the hybrid precoder by solving the following optimization problem
\begin{equation}
\begin{aligned}\label{opt}
(\mathbf{F}_{\rm RF}^{\rm opt},\{\mathbf{F}_{\rm BB}^{\rm opt}[k]\}_{k=1}^K)\!=&\!\!\!\!\!\!\max\limits_{\mathbf{F}_{\rm RF},\{\mathbf{F}_{\rm BB}[k]\}_{k=1}^K}\sum\nolimits_{k=1}^{K}\log_2(\det(\mathbf{I}_{N_r}\!\!+\!\!\frac{1}{\sigma_n^2}\\
\times&\mathbf{H}[k]\mathbf{F}_{\rm RF}\mathbf{F}_{\rm BB}[k]\mathbf{F}_{\rm BB}^H[k]\mathbf{F}_{\rm RF}^H\mathbf{H}^H[k]))
\\\text{s.t. }&[\mathbf{F}_{\rm RF}]_{:,i}\in\mathcal{F}_{\rm RF}\text{ for }1\leq i\leq N_t^{\rm RF}, \\&\sum\nolimits_{k=1}^K||\mathbf{F}_{\rm RF}\mathbf{F}_{\rm BB}[k]||_F^2=KN_s.
\end{aligned}
\end{equation}
The joint optimization of $\mathbf{F}_{\rm RF}$ and $\{\mathbf{F}_{\rm BB}[k]\}_{k=1}^K$ in (\ref{opt}) is
difficult due to the coupling between $\mathbf{F}_{\rm RF}$ and $\{\mathbf{F}_{\rm BB}[k]\}_{k=1}^K$. This motivates us to design $\mathbf{F}_{\rm RF}$ and $\{\mathbf{F}_{\rm BB}[k]\}_{k=1}^K$, separately. We consider $\mathbf{\widetilde{F}}_{\rm BB}[k]=(\mathbf{F}_{\rm RF}^H\mathbf{F}_{\rm RF})^{\frac{1}{2}}\mathbf{F}_{\rm BB}[k]$ to be the equivalent baseband precoder, so that
(\ref{opt}) can be rewritten as
\begin{equation}
\begin{aligned}\label{opt_RF}
&\max\limits_{\mathbf{F}_{\rm RF},\{\mathbf{\widetilde{F}}_{\rm BB}[k]\}_{k=1}^K}\!\sum_{k=1}^{K}\log_2(\det(\mathbf{I}_{N_r}\!\!+\!\frac{1}{\sigma_{ n}^2}
\mathbf{H}[k]\mathbf{F}_{\rm RF}(\mathbf{F}_{\rm RF}^H\mathbf{F}_{\rm RF}\!)^{-\frac{1}{2}}
\\&\times\mathbf{\widetilde{F}}_{\rm BB}[k]\mathbf{\widetilde{F}}_{\rm BB}^H[k](\mathbf{F}_{\rm RF}^H\mathbf{F}_{\rm RF}\!)^{-\frac{1}{2}}\mathbf{F}_{\rm RF}^H\mathbf{H}^H\![k]))
\\&\text{s.t. }[\mathbf{F}_{\rm RF}]_{:,i}\in\mathcal{F}_{\rm RF}\text{ for }1\leq i\leq N_t^{\rm RF},\\&\sum\nolimits_{k=1}^K||\mathbf{\widetilde{F}}_{\rm BB}[k]||_F^2=KN_s.
\end{aligned}
\end{equation}
To solve the optimization problem (\ref{opt_RF}), we first investigate the optimal solution of $\{\mathbf{\widetilde{F}}_{\rm BB}[k]\}_{k=1}^K$. Specifically, we consider the singular value decomposition (SVD) of $\mathbf{H}[k]$ at the $k$th subcarrier
\begin{equation}\label{SVD}
\setlength{\abovedisplayskip}{5pt}
\setlength{\belowdisplayskip}{5pt}
\mathbf{H}[k]=\mathbf{U}[k]\mathbf{\Sigma}[k]\mathbf{V}^H[k],
\end{equation}
and the SVD of the matrix $\mathbf{\Sigma}[k]\mathbf{V}^H[k]\mathbf{F}_{\rm RF}(\mathbf{F}_{\rm RF}^H\mathbf{F}_{\rm RF})^{-1/2}=\mathbf{\widetilde{U}}[k]\mathbf{\widetilde{\Sigma}}[k]
\mathbf{\widetilde{V}}^H[k]$. Therefore, the optimal equivalent baseband precoder is $\mathbf{\widetilde{F}}_{\rm BB}[k]=[\mathbf{\widetilde{V}}[k]]_{:,1:N_s}\mathbf{\Lambda}[k]\in\mathbb{C}^{N_t^{\rm RF}\times N_s}$, and thus the optimal baseband precoder $\mathbf{F}_{\rm BB}[k]$ can be expressed as
\begin{equation}\label{F_BB}
\mathbf{F}_{\rm BB}[k]\!\!=\!\!(\mathbf{F}_{\rm RF}^H\mathbf{F}_{\rm RF})^{-\frac{1}{2}}\mathbf{\widetilde{F}}_{\rm BB}[k]
\!\!=\!\!(\mathbf{F}_{\rm RF}^H\mathbf{F}_{\rm RF})^{-\frac{1}{2}}[\mathbf{\widetilde{V}}[k]]_{:,1:N_s}\!\mathbf{\Lambda}[k],
\end{equation}
where $\mathbf{\Lambda}[k]\in\mathbb{C}^{N_s\times N_s}$ is the water-filling solution matrix, i.e.,
\begin{equation}\label{wf_1}
[\mathbf{\Lambda}[k]]_{i,i}^2=(\mu -\sigma_{\rm n}^2/[\mathbf{\widetilde{\Sigma}}[k]]_{i,i}^2)^+,1\leq i\leq N_s,1\leq k\leq K,
\end{equation}
$\mu$ meets $\sum_{k=1}^K\sum_{i=1}^{N_s}(\mu-\sigma_{\rm n}^2/
[\mathbf{\widetilde{\Sigma}}[k]]_{i,i}^2)^+=KN_s$.
Then the joint optimization of $\mathbf{F}_{\rm RF}$ and $\{\mathbf{F}_{\rm BB}[k]\}_{k=1}^K$ is simplified as the optimization of $\mathbf{F}_{\rm RF}$ in (\ref{opt_RF}).
Considering a conventional frequency-selective precoder as the optimal fully-digital precoder\footnote{Water filling is not considered for the ease of analysis, and it can be adopted to maximize the sum rate according to \cite{fre_flat}.} ${\color{red}\mathbf{F}_{\rm FD}^{\rm opt}}[k]=[\mathbf{V}[k]]_{:,1:N_s}\in\mathbb{C}^{N_t\times N_s}$, we further have the approximation in Lemma 1.
\newtheorem{lem}{\textbf{Lemma}}
\begin{lem}
    Optimization problem in (\ref{opt}) can be approximately written as
    \begin{equation}\label{opt_final}
        \begin{aligned}
            &\max_{\mathbf{F}_{\rm RF}}\sum_{k=1}^{K}
            ||{\mathbf{F}_{\rm FD}^{\rm opt}}^H[k]\mathbf{F}_{\rm RF}\mathbf{F}_{\rm BB}[k]||_F^2
            \\&\text{\rm s.t. }[\mathbf{F}_{\rm RF}]_{:,i}\in\mathcal{F}_{\rm RF}\text{ for }1\leq i\leq N_t^{\rm RF}.
        \end{aligned}
    \end{equation}
    The optimization problem (\ref{opt}) is equivalent to (\ref{opt_final}) when the following requirements are reached:
    \begin{enumerate}
        \item Hybrid precoder $\mathbf{F}_{\rm RF}\mathbf{F}_{\rm BB}[k]$ can be sufficiently ``close" to the optimal fully-digital precoder. 
        \item The signal-to-noise ratio (SNR) is sufficiently high.
    \end{enumerate}
\end{lem}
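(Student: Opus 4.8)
The plan is to collapse the mutual-information objective of (\ref{opt}) onto the projection objective of (\ref{opt_final}) by a two-stage approximation, with each stage justified by one of the two stated conditions. I would start from a single subcarrier term of (\ref{opt}), set $\mathbf{A}[k]=\mathbf{H}[k]\mathbf{F}_{\rm RF}\mathbf{F}_{\rm BB}[k]$, and apply Sylvester's determinant identity $\det(\mathbf{I}_{N_r}+\tfrac{1}{\sigma_n^2}\mathbf{A}[k]\mathbf{A}^H[k])=\det(\mathbf{I}_{N_s}+\tfrac{1}{\sigma_n^2}\mathbf{A}^H[k]\mathbf{A}[k])$ to pass to an $N_s\times N_s$ determinant. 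Substituting the SVD (\ref{SVD}) gives $\mathbf{A}^H[k]\mathbf{A}[k]=\mathbf{C}^H[k]\mathbf{\Sigma}^H[k]\mathbf{\Sigma}[k]\mathbf{C}[k]$ with $\mathbf{C}[k]=\mathbf{V}^H[k]\mathbf{F}_{\rm RF}\mathbf{F}_{\rm BB}[k]$. Partitioning $\mathbf{V}[k]=[\mathbf{F}_{\rm FD}^{\rm opt}[k],\,\mathbf{V}_2[k]]$ splits off the top block $\mathbf{C}_1[k]={\mathbf{F}_{\rm FD}^{\rm opt}}^H[k]\mathbf{F}_{\rm RF}\mathbf{F}_{\rm BB}[k]\in\mathbb{C}^{N_s\times N_s}$, which is exactly the matrix whose squared Frobenius norm appears in (\ref{opt_final}).

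The first stage uses Condition 1. Near-optimality forces $\mathbf{F}_{\rm RF}\mathbf{F}_{\rm BB}[k]$ to lie almost entirely in the column space of $\mathbf{F}_{\rm FD}^{\rm opt}[k]$, so the complementary block $\mathbf{V}_2^H[k]\mathbf{F}_{\rm RF}\mathbf{F}_{\rm BB}[k]$ is negligible and $\mathbf{C}^H[k]\mathbf{\Sigma}^H[k]\mathbf{\Sigma}[k]\mathbf{C}[k]\approx\mathbf{C}_1^H[k]\mathbf{\Sigma}_1^2[k]\mathbf{C}_1[k]$, where $\mathbf{\Sigma}_1^2[k]$ is the diagonal matrix of the $N_s$ dominant squared singular values; this discards precisely the energy leaked onto the weak directions, which near-optimality renders small. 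The second stage uses Condition 2: at high SNR, $\log_2\det(\mathbf{I}_{N_s}+\tfrac{1}{\sigma_n^2}\mathbf{M})\approx\log_2\det(\mathbf{M})-N_s\log_2\sigma_n^2$, so up to an additive constant the objective becomes $\sum_k\log_2\det(\mathbf{C}_1^H[k]\mathbf{\Sigma}_1^2[k]\mathbf{C}_1[k])$. Since $\mathbf{C}_1[k]$ is square, this factors as $\sum_k\big(\log_2\det(\mathbf{\Sigma}_1^2[k])+\log_2|\det\mathbf{C}_1[k]|^2\big)$, and dropping the $\mathbf{F}_{\rm RF}$-independent channel term leaves the surrogate $\sum_k\log_2\det(\mathbf{C}_1^H[k]\mathbf{C}_1[k])$.

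The \textbf{main obstacle} is the final identification of this log-determinant surrogate with the Frobenius-norm surrogate $\sum_k||\mathbf{C}_1[k]||_F^2$ of (\ref{opt_final}). Here I would invoke Condition 1 once more: as the hybrid precoder approaches the fully-digital optimum, $\mathbf{C}_1[k]\to\mathbf{I}_{N_s}$, so I can linearize both objectives about $\mathbf{C}_1[k]=\mathbf{I}_{N_s}$. Writing $\mathbf{C}_1[k]=\mathbf{I}_{N_s}+\mathbf{\Delta}[k]$, both $\log_2\det(\mathbf{C}_1^H[k]\mathbf{C}_1[k])$ and $||\mathbf{C}_1[k]||_F^2$ share the same first-order variation, proportional to $\mathfrak{R}\{\mathrm{tr}(\mathbf{\Delta}[k])\}$, so their maximizers coincide to leading order; maximizing $\sum_k||{\mathbf{F}_{\rm FD}^{\rm opt}}^H[k]\mathbf{F}_{\rm RF}\mathbf{F}_{\rm BB}[k]||_F^2$ is therefore equivalent, to first order, to maximizing the rate. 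I expect this step to be the delicate one, because the equivalence of the two objectives is exact only in the joint limit of Conditions 1 and 2; outside that regime it is a genuine approximation, which is why the lemma is stated as an approximate equivalence contingent on those two conditions.
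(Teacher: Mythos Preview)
Your proposal is correct and reaches the same conclusion, but the route differs from the paper's in a meaningful way. The paper also partitions $\mathbf{V}[k]=[\mathbf{F}_{\rm FD}^{\rm opt}[k],\mathbf{V}_2[k]]$ and uses Condition~1 to kill the $\mathbf{V}_2$-blocks, but instead of applying the high-SNR approximation $\log_2\det(\mathbf{I}+\mathbf{M}/\sigma_n^2)\approx\log_2\det(\mathbf{M})-N_s\log_2\sigma_n^2$ up front, it first invokes the factorization $\mathbf{I}+\mathbf{B}\mathbf{A}=(\mathbf{I}+\mathbf{B})\bigl(\mathbf{I}-(\mathbf{I}+\mathbf{B})^{-1}\mathbf{B}(\mathbf{I}-\mathbf{A})\bigr)$ with $\mathbf{B}=\mathbf{\Sigma}_1^2[k]/\sigma_n^2$ and $\mathbf{A}=\mathbf{C}_1[k]\mathbf{C}_1^H[k]$. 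This splits off the optimal-rate term $\log_2\det(\mathbf{I}+\mathbf{\Sigma}_1^2[k]/\sigma_n^2)$ exactly and leaves a residual $\log_2\det(\mathbf{I}-\mathbf{X})$ with $\mathbf{X}$ small by Condition~1; then $\log_2\det(\mathbf{I}-\mathbf{X})\approx-\mathrm{Tr}(\mathbf{X})$ and the high-SNR step $(\mathbf{I}+\mathbf{\Sigma}_1^2/\sigma_n^2)^{-1}\mathbf{\Sigma}_1^2/\sigma_n^2\approx\mathbf{I}$ land \emph{directly} on $-(N_s-\|\mathbf{C}_1[k]\|_F^2)$, so the Frobenius norm appears without any further argument. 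Your route is conceptually simpler (a standard high-SNR log-det, then factor out $\det\mathbf{\Sigma}_1^2$), but it creates the extra ``main obstacle'' you identified: you must separately argue that $\sum_k\log_2\det(\mathbf{C}_1^H[k]\mathbf{C}_1[k])$ and $\sum_k\|\mathbf{C}_1[k]\|_F^2$ agree to first order near $\mathbf{C}_1[k]=\mathbf{I}$. Your linearization handles this correctly, but the paper's factorization sidesteps it entirely, which is why their derivation never needs to mention this equivalence-of-surrogates step.
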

\begin{proof}
	See Appendix A.
\end{proof}
 Note that the frequency domain MIMO channel matrices $\{\mathbf{H}[k]\}_{k=1}^K$ have the
 same column/row space \cite{fre_flat}, and $\mathbf{F}_{\rm RF}$ is identical for all subcarriers.
 So $\mathbf{F}_{\rm RF}$ can be regarded as a representation of such a column space. This observation motivates us to
 design the RF precoder $\mathbf{F}_{\rm RF}$ under the framework of PCA \cite{mach_lern}, whereby the principal components constituting $\mathbf{F}_{\rm RF}$ can be acquired from the data set matrix ${\widetilde{\mathbf{F}}_{\rm FD}^{\rm opt}}=\begin{bmatrix}{\mathbf{F}_{\rm FD}^{\rm opt}}[1]\ {\mathbf{F}_{\rm FD}^{\rm opt}}[2]\ \cdots\ {\mathbf{F}_{\rm FD}^{\rm opt}}[K]\end{bmatrix}$. To achieve the stable solution with low complexity for PCA problem,
we consider the SVD approach to process ${\widetilde{\mathbf{F}}_{\rm FD}^{\rm opt}}$ \cite{mach_lern} as follows.
\newtheorem{prop}{\textbf{Proposition}}
\begin{prop}
    Given ${\widetilde{\mathbf{F}}_{\rm FD}^{\rm opt}}$ and its SVD ${\widetilde{\mathbf{F}}_{\rm FD}^{\rm opt}}=\mathbf{U}_{\widetilde{\mathbf{F}}_{\rm FD}^{\rm opt}}\mathbf{\Sigma}_{\widetilde{\mathbf{F}}_{\rm FD}^{\rm opt}}\mathbf{V}_{\widetilde{\mathbf{F}}_{\rm FD}^{\rm opt}}^H$, the sub-optimal solution to (\ref{opt_final}) can be expressed as $\mathbf{F}_{\rm RF}=\frac{1}{{\sqrt {{N_t}} }}{\rm exp}(j\angle([\mathbf{U}_{\widetilde{\mathbf{F}}_{\rm FD}^{\rm opt}}]_{:,1:N_t^{\rm RF}}))$.
\end{prop}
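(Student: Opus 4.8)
The plan is to recast (\ref{opt_final}) as a standard principal-component-analysis problem, solve its constraint-relaxed version in closed form through the SVD, and then project the relaxed optimizer onto the constant-modulus feasible set $\mathcal{F}_{\rm RF}$. First I would fix $\mathbf{F}_{\rm RF}$ and insert the induced least-squares baseband precoder $\mathbf{F}_{\rm BB}[k]=(\mathbf{F}_{\rm RF}^H\mathbf{F}_{\rm RF})^{-1}\mathbf{F}_{\rm RF}^H\mathbf{F}_{\rm FD}^{\rm opt}[k]$, so that the effective precoder $\mathbf{F}_{\rm RF}\mathbf{F}_{\rm BB}[k]$ becomes the orthogonal projection $\mathbf{\Pi}_{\mathbf{F}_{\rm RF}}\mathbf{F}_{\rm FD}^{\rm opt}[k]$ of the fully-digital precoder onto $Col\,\mathbf{F}_{\rm RF}$, where $\mathbf{\Pi}_{\mathbf{F}_{\rm RF}}=\mathbf{F}_{\rm RF}(\mathbf{F}_{\rm RF}^H\mathbf{F}_{\rm RF})^{-1}\mathbf{F}_{\rm RF}^H$. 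Under the two near-optimality conditions of Lemma~1 the block $\mathbf{B}_k={\mathbf{F}_{\rm FD}^{\rm opt}}^H[k]\mathbf{\Pi}_{\mathbf{F}_{\rm RF}}\mathbf{F}_{\rm FD}^{\rm opt}[k]$ is close to $\mathbf{I}_{N_s}$, so the summand $\|\mathbf{B}_k\|_F^2=\mathrm{tr}(\mathbf{B}_k^2)$ is driven by, and upper bounded by, $\mathrm{tr}(\mathbf{B}_k)$; summing over $k$ and using the cyclic property of the trace, the objective is tightly controlled by $\mathrm{tr}(\mathbf{\Pi}_{\mathbf{F}_{\rm RF}}\mathbf{R})$ with the correlation matrix $\mathbf{R}=\widetilde{\mathbf{F}}_{\rm FD}^{\rm opt}(\widetilde{\mathbf{F}}_{\rm FD}^{\rm opt})^H=\sum_{k=1}^{K}\mathbf{F}_{\rm FD}^{\rm opt}[k]{\mathbf{F}_{\rm FD}^{\rm opt}}^H[k]$. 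This is precisely the PCA objective suggested by the remark that $\mathbf{F}_{\rm RF}$ should represent the common column space of $\{\mathbf{H}[k]\}_{k=1}^{K}$.

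Next I would solve the relaxed problem in which the constant-modulus constraint is dropped and $\mathbf{F}_{\rm RF}$ is only required to have orthonormal columns. Writing the objective as $\mathrm{tr}(\mathbf{F}_{\rm RF}^H\mathbf{R}\mathbf{F}_{\rm RF})=\|\mathbf{F}_{\rm RF}^H\widetilde{\mathbf{F}}_{\rm FD}^{\rm opt}\|_F^2$, the Ky Fan (Rayleigh--Ritz) theorem shows that the maximum over all $N_t\times N_t^{\rm RF}$ matrices with orthonormal columns is attained by stacking the $N_t^{\rm RF}$ dominant eigenvectors of the Hermitian positive-semidefinite matrix $\mathbf{R}$. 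Since the eigenvectors of $\mathbf{R}=\widetilde{\mathbf{F}}_{\rm FD}^{\rm opt}(\widetilde{\mathbf{F}}_{\rm FD}^{\rm opt})^H$ are exactly the left singular vectors of $\widetilde{\mathbf{F}}_{\rm FD}^{\rm opt}$ ordered by decreasing singular value, the relaxed optimizer is $[\mathbf{U}_{\widetilde{\mathbf{F}}_{\rm FD}^{\rm opt}}]_{:,1:N_t^{\rm RF}}$, i.e. the leading principal components of the data matrix.

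Finally I would reinstate the constant-modulus constraint by projecting this relaxed optimizer onto $\mathcal{F}_{\rm RF}$. For any target matrix, the Frobenius-nearest matrix whose entries all have fixed modulus $1/\sqrt{N_t}$ is obtained entrywise, since for a scalar $re^{j\theta}$ the closest point of the circle $\{c e^{j\psi}\}$ is $c e^{j\theta}$; hence the projection keeps the phases of $[\mathbf{U}_{\widetilde{\mathbf{F}}_{\rm FD}^{\rm opt}}]_{:,1:N_t^{\rm RF}}$ and rescales the modulus, which yields exactly $\mathbf{F}_{\rm RF}=\frac{1}{\sqrt{N_t}}\exp(j\angle([\mathbf{U}_{\widetilde{\mathbf{F}}_{\rm FD}^{\rm opt}}]_{:,1:N_t^{\rm RF}}))$. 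This phase-only retention is what makes the solution \emph{sub-optimal} rather than optimal.

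The main obstacle I anticipate is the first step, namely making the passage from the squared-Frobenius-norm objective in (\ref{opt_final}) to the linear trace surrogate $\mathrm{tr}(\mathbf{\Pi}_{\mathbf{F}_{\rm RF}}\mathbf{R})$ precise: the inequality $\mathrm{tr}(\mathbf{B}_k^2)\le\mathrm{tr}(\mathbf{B}_k)$ (valid because the eigenvalues of $\mathbf{B}_k$ lie in $[0,1]$) only becomes tight when $Col\,\mathbf{F}_{\rm FD}^{\rm opt}[k]$ lies in or is orthogonal to $Col\,\mathbf{F}_{\rm RF}$, so the equivalence with the PCA objective genuinely rests on the near-optimality and high-SNR hypotheses of Lemma~1. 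A secondary subtlety is that the constant-modulus projection in the last step acts only on the relaxed optimizer and does not preserve global optimality, which is exactly consistent with the ``sub-optimal'' wording of the statement.
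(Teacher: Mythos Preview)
Your argument is correct and lands on the same relaxed PCA objective and the same phase-extraction step as the paper, but the route you take to the trace surrogate differs from the paper's. You substitute the \emph{least-squares} baseband precoder $\mathbf{F}_{\rm BB}[k]=(\mathbf{F}_{\rm RF}^H\mathbf{F}_{\rm RF})^{-1}\mathbf{F}_{\rm RF}^H\mathbf{F}_{\rm FD}^{\rm opt}[k]$, obtain $\mathrm{tr}(\mathbf{B}_k^2)$ with $\mathbf{B}_k={\mathbf{F}_{\rm FD}^{\rm opt}}^H[k]\mathbf{\Pi}_{\mathbf{F}_{\rm RF}}\mathbf{F}_{\rm FD}^{\rm opt}[k]$, and then pass to the linear surrogate $\mathrm{tr}(\mathbf{B}_k)$ via the eigenvalue bound and Lemma~1. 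The paper instead keeps the SE-optimal baseband precoder of (\ref{F_BB}), writes $\mathbf{F}_{\rm RF}\mathbf{F}_{\rm BB}[k]=\check{\mathbf{U}}_{\mathbf{F}_{\rm RF}}\mathbf{V}_{\mathbf{F}_{\rm RF}}^H\widetilde{\mathbf{F}}_{\rm BB}[k]$ from the SVD of $\mathbf{F}_{\rm RF}$, drops the water-filling so that $\widetilde{\mathbf{F}}_{\rm BB}[k]=[\widetilde{\mathbf{V}}[k]]_{:,1:N_s}$ is (semi-)unitary, and then uses unitary invariance of the Frobenius norm to obtain $\sum_k\|{\mathbf{F}_{\rm FD}^{\rm opt}}^H[k]\check{\mathbf{U}}_{\mathbf{F}_{\rm RF}}\|_F^2=\mathrm{Tr}(\check{\mathbf{U}}_{\mathbf{F}_{\rm RF}}^H\mathbf{R}\,\check{\mathbf{U}}_{\mathbf{F}_{\rm RF}})$ \emph{exactly} when $N_s=N_t^{\rm RF}$ (with a short additional argument for $N_s<N_t^{\rm RF}$). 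Thus the paper reaches the PCA objective without your $\mathrm{tr}(\mathbf{B}_k^2)\to\mathrm{tr}(\mathbf{B}_k)$ approximation; what this buys is that the only heuristic step left is the final constant-modulus projection, whereas your route stacks a second approximation on top. Conversely, your derivation has the advantage of being a textbook PCA reduction (projection $+$ Ky~Fan) that does not require tracking $\widetilde{\mathbf{V}}[k]$ or splitting into the cases $N_s=N_t^{\rm RF}$ and $N_s<N_t^{\rm RF}$.
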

\begin{proof}
	See Appendix B.
\end{proof}

\begin{algorithm}[t]
    \caption{ PCA-based RF Precoder Design.}
	\label{alg:PCA_AS}
	\begin{algorithmic}[1]
		\renewcommand{\algorithmicrequire}{\textbf{Input:}}
		\renewcommand\algorithmicensure {\textbf{Output:}}
		\Require
		$\{{\mathbf{F}_{\rm FD}^{\rm opt}}[k]\}_{k=1}^K$,
		$N_t^{\rm RF}$, $N_t$, and $Q$.
		\Ensure
		$\mathbf{F}_{\rm RF}$.
		\State ${\widetilde{\mathbf{F}}_{\rm FD}^{\rm opt}}\!=\!\begin{bmatrix}{\mathbf{F}_{\rm FD}^{\rm opt}}[1]\ {\mathbf{F}_{\rm FD}^{\rm opt}}[2]\ \!\cdots\ \! {\mathbf{F}_{\rm FD}^{\rm opt}}[K]\end{bmatrix}$
(Here ${\mathbf{F}_{\rm FD}^{\rm opt}}[k]\!=\![\mathbf{U}[k]]_{:,1:N_s}$, $\mathbf{H}[k]\!=\!\mathbf{U}[k]\mathbf{\Sigma}[k]\mathbf{V}^H[k]$, $\forall k$)
		\State Apply SVD to ${\widetilde{\mathbf{F}}_{\rm FD}^{\rm opt}}$, i.e., ${\widetilde{\mathbf{F}}_{\rm FD}^{\rm opt}}=\mathbf{U}_{\widetilde{\mathbf{F}}_{\rm FD}^{\rm opt}}\mathbf{\Sigma}_{\widetilde{\mathbf{F}}_{\rm FD}^{\rm opt}}\mathbf{V}_{\widetilde{\mathbf{F}}_{\rm FD}^{\rm opt}}^H$, where $\mathbf{U}_F$ corresponds to the principal components
		\State Extract the phases by using an intermediate variable ${{\bf{F}}_{{\rm{int}}}} = \frac{1}{{\sqrt {{N_t}} }}\exp \left( {j\angle ({{[{{\bf{U}}_{\widetilde{\mathbf{F}}_{\rm FD}^{\rm opt}}}]}_{:,1:N_t^{{\rm{RF}}}}})} \right)$
        \State Quantization by ${{\bf{F}}_{{\rm{RF}}}} = \frac{1}{{\sqrt {{N_t}} }}\exp \left( {j{\textstyle{{2\pi } \over 2^Q}}{\rm{round}}({\textstyle{{2^Q\angle ({{\bf{F}}_{{\rm{int}}}})} \over {2\pi }}})} \right)$
	\end{algorithmic}
\end{algorithm}
Besides, given the practical RF phase shifters with the quantization bit $Q$ \cite{quan}, the phase shifter values can only come from the set $\mathcal{Q}=\{0,\frac{2\pi}{2^Q},\cdots,\frac{2\pi(2^Q-1)}{2^Q}\}$. Therefore, this quantization processing will be performed by searching for the element from $\mathcal{Q}$
according to the minimum Euclidean distance from $\angle([\mathbf{F}_{\rm RF}]_{i,j})$. Finally, how to obtain $\mathbf{F}_{\rm RF}$
is summarized in Algorithm \ref{alg:PCA_AS}.

Note that our PCA-based approach is essentially different from that proposed in \cite{170825}, whose 
RF precoder
is acquired from the eigenvectors of channel covariance matrix $\mathbf{R}_{\rm cov}\!\!=\!\!\frac{1}{K}\!\!\sum_{k=1}^{K}\!\mathbf{H}^H[k]\mathbf{H}[k]$.
By contrast, the RF precoder in our solution is obtained by solving the  principal components or basis for the common column space of channel matrices $\{\mathbf{{H}}[k]\}_{k=1}^K$ at all subcarriers, where the processing is listed in steps 1$\sim$2 of Algorithm 1. Simulation results further confirm
 the better performance of our solution than that proposed in \cite{170825} for hybrid MIMO systems. However, the channel covariance matrix based design in \cite{170825} may have a lower channel estimation overhead than the PCA solution presented here, a detailed analysis of which is beyond the scope of this paper.

\subsection{PCA-Based Hybrid Combiner Design at RX}\label{combiner_FCA}
Based on the designed $\mathbf{F}_{\rm RF}$ and $\{\mathbf{F}_{\rm BB}[k]\}_{k=1}^K$, we further design the hybrid combiner to minimize $\sum\nolimits_{k = 1}^K {||{\bf{x}}[k] - {\bf{r}}[k]||_2^2}$. Specifically, the optimal fully-digital combiner is the minimum mean square error (MMSE) combiner, i.e., ${\mathbf{W}_{\rm FD}^{\rm opt}}^H[k]=\mathbf{W}_{\rm MMSE}^H[k]$, which can be expressed as
\begin{equation}\label{W_opt}
\begin{aligned}
\mathbf{W}_{\rm MMSE}^H[k]
=&\mathbf{F}_{\rm BB}^H[k]\mathbf{F}_{\rm RF}^H\mathbf{H}^H[k]
(\mathbf{H}[k]\mathbf{F}_{\rm RF}
\mathbf{F}_{\rm BB}[k]\mathbf{F}_{\rm BB}^H[k]
\\&\times\mathbf{F}_{\rm RF}^H\mathbf{H}^H[k]+\sigma_n^2\mathbf{I}_{N_r})^{-1}.
\end{aligned}
\end{equation}
Defining the signal at the receive antennas as $\mathbf{y}[k]\in\mathbb{C}^{N_r\times 1}$ ($1\leq k\leq K$), we formulate the combiner design problem as the following optimization problem,
\begin{equation}\label{MSE}
\begin{aligned}
	&{(\mathbf{W}_{\rm RF}^{\rm opt},\{\mathbf{W}_{\rm BB}^{\rm opt}[k]\}_{k=1}^K)}=\min\limits_{\mathbf{W}_{\rm RF},\{\mathbf{W}_{\rm BB}[k]\}_{k=1}^K}\sum\nolimits_{k=1}^{K}\mathbb{E}[||\mathbf{x}[k]
\\&-\mathbf{W}_{\rm BB}^H[k]\mathbf{W}_{\rm RF}^H\mathbf{y}[k]||_2^2]
	\\&\text{s.t. }[\mathbf{W}_{\rm RF}]_{:,i}\in\mathcal{W}_{\rm RF}\text{ for }1\leq i\leq N_r^{\rm RF}.
\end{aligned}
\end{equation}
Note that if the CMC in (\ref{MSE}) is removed, the solution to (\ref{MSE}) is the optimal fully-digital MMSE combiner (\ref{W_opt}). The objective
function in (\ref{MSE}) can be further written as
\begin{equation}\label{inner_MSE}
\begin{aligned}
&\sum\nolimits_{k=1}^{K}\mathbb{E}[||\mathbf{x}[k]-\mathbf{W}_{\rm BB}^H[k]\mathbf{W}_{\rm RF}^H\mathbf{y}[k]||_2^2]
	\\=&\!\sum\nolimits_{k=1}^{K}\!\text{Tr}(\mathbb{E}[\mathbf{x}\![k]\mathbf{x}\!^H\![k]])\!
\\&-\!2\!\sum\nolimits_{k=1}^{K}\!\mathcal{R}\{\!\text{Tr}(\mathbb{E}[\mathbf{x}\![k]\mathbf{y}\!^H\![k]]\mathbf{W}\!_{\rm RF}\!\mathbf{W}\!_{\rm BB}[k])\!\}
\\&+\text{Tr}(\mathbf{W}_{\rm BB}^H[k]\mathbf{W}_{\rm RF}^H\mathbb{E}[\mathbf{y}\![k]\mathbf{y}\!^H\![k]]\mathbf{W}\!_{\rm RF}\!\mathbf{W}\!_{\rm BB}[k]).
\end{aligned}
\end{equation}
Since the optimization variables in (\ref{MSE}) are $\mathbf{W}_{\rm RF}$ and $\{\mathbf{W}_{\rm BB}[k]\}_{k=1}^K$, terms unrelated to $\mathbf{W}_{\rm RF}$ and $\{\mathbf{W}_{\rm BB}[k]\}_{k=1}^K$ will not affect the solution. By adding the independent term $\sum_{k=1}^{K}\!\!\text{Tr}({\mathbf{W}_{\rm FD}^{\rm opt}}^H\![k]\\\times\mathbb{E}[\mathbf{y}\![k]\mathbf{y}^H[k]]{\mathbf{W}_{\rm FD}^{\rm opt}}[k]) -\sum_{k=1}^{K}\text{Tr}(\mathbb{E}[\mathbf{x}[k]\mathbf{x}^H[k]])$ to the objective function (\ref{inner_MSE}), the optimization problem (\ref{MSE}) can be further expressed as
\begin{equation}\label{problem_W}
\begin{aligned}
\min\limits_{\mathbf{W}_{\rm RF},\{\mathbf{W}_{\rm BB}[k]\}_{k=1}^K}&\sum\nolimits_{k=1}^{K}||\mathbb{E}[\mathbf{y}[k]\mathbf{y}^H[k]]^{\frac{1}{2}}({\mathbf{W}_{\rm FD}^{\rm opt}}[k]
\\&-\mathbf{W}_{\rm RF}\mathbf{W}_{\rm BB}[k])||_F^2,
\\&\text{s.t. }[\mathbf{W}_{\rm RF}]_{:,i}\in\mathcal{W}_{\rm RF}\text{ for }1\leq i\leq N_r^{\rm RF},
\end{aligned}
\end{equation}
where $\mathbb{E}[\mathbf{y}[k]\mathbf{y}^H[k]]$
has the closed-form expression of $\mathbf{H}[k]\mathbf{F}_{\rm RF}\mathbf{F}_{\rm BB}[k]\mathbf{F}_{\rm BB}^H[k]\mathbf{F}_{\rm RF}^H\mathbf{H}^H[k]+\sigma_n^2\mathbf{I}_{N_r}$,
and it can be easily calculated. 
For (\ref{problem_W}), it is difficult to jointly optimize $\mathbf{W}_{\rm RF}$ and $\{\mathbf{W}_{\rm BB}[k]\}_{k=1}^K$ due to
the coupling between baseband and RF combiners. Therefore, we will design $\mathbf{W}_{\rm RF}$ and $\{\mathbf{W}_{\rm BB}[k]\}_{k=1}^K$, separately.
Similar to the hybrid precoder design, we first consider the weighted LS estimation of $\mathbf{W}_{\rm BB}[k]$ by fixing $\mathbf{W}_{\rm RF}$ as
\begin{equation}\label{W_BB_LS}
\begin{aligned}
  \mathbf{W}_{\rm BB}[k]=&(\mathbf{W}_{\rm RF}^H\mathbb{E}[\mathbf{y}[k]\mathbf{y}^H[k]]\mathbf{W}_{\rm RF})^{-1}\mathbf{W}_{\rm RF}^H\mathbb{E}[\mathbf{y}[k]\mathbf{y}^H[k]]
  \\&\times{\mathbf{W}_{\rm FD}^{\rm opt}}[k].
\end{aligned}
\end{equation}
In this way, the joint optimization in (\ref{problem_W}) is decoupled. Moreover, similar to the PCA-based hybrid precoder design, we will
 design the RF combiner $\mathbf{W}_{\rm RF}$ from the optimal fully-digital combiner $\{{\mathbf{W}_{\rm FD}^{\rm opt}}[k]\}_{k=1}^K$ based on the weighted PCA. This process is shown in Proposition 2.

\begin{prop}
	Given $\mathbf{W}=\begin{bmatrix}
    \mathbb{E}[\mathbf{y}[1]\mathbf{y}[1]^H]^{1/2}{\mathbf{W}_{\rm FD}^{\rm opt}}[1] \ \cdots \ \mathbb{E}[\mathbf{y}[K]\mathbf{y}[K]^H]^{1/2}{\mathbf{W}_{\rm FD}^{\rm opt}}[K]
	\end{bmatrix}$, the SVD $\mathbf{W}=\mathbf{U}_W\mathbf{\Sigma}_W\mathbf{V}_W^H$, and the weighted LS estimation of $\mathbf{W}_{\rm BB}[k]$ in (\ref{W_BB_LS}), the sub-optimal $\mathbf{W}_{\rm RF}$ to (\ref{problem_W}) is $\mathbf{W}_{\rm RF}=\frac{1}{{\sqrt {{N_r}} }}{\rm exp}(j\angle([\mathbf{U}_W]_{:,1:N_r^{\rm RF}}))$.
\end{prop}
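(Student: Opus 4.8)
The plan is to mirror the argument used for the RF precoder (Proposition 1, Appendix B), now applied to the weighted least-squares objective (\ref{problem_W}). Write $\mathbf{R}_y[k]=\mathbb{E}[\mathbf{y}[k]\mathbf{y}^H[k]]$ and $\mathbf{C}[k]=\mathbf{R}_y[k]^{1/2}{\mathbf{W}_{\rm FD}^{\rm opt}}[k]$, so that the data matrix in the statement is $\mathbf{W}=[\mathbf{C}[1]\ \cdots\ \mathbf{C}[K]]$. First I would fix $\mathbf{W}_{\rm RF}$ and substitute the weighted LS combiner (\ref{W_BB_LS}) into the objective of (\ref{problem_W}). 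A direct computation gives $\mathbf{R}_y[k]^{1/2}\mathbf{W}_{\rm RF}\mathbf{W}_{\rm BB}[k]=\mathbf{P}[k]\mathbf{C}[k]$, where $\mathbf{P}[k]=\mathbf{G}[k](\mathbf{G}[k]^H\mathbf{G}[k])^{-1}\mathbf{G}[k]^H$ is the orthogonal projector onto $Col(\mathbf{G}[k])$ with $\mathbf{G}[k]=\mathbf{R}_y[k]^{1/2}\mathbf{W}_{\rm RF}$ (the inverse exists since $\mathbf{R}_y[k]$ is positive definite thanks to the $\sigma_n^2\mathbf{I}_{N_r}$ term). Hence the objective collapses to $\sum_{k=1}^{K}\|(\mathbf{I}_{N_r}-\mathbf{P}[k])\mathbf{C}[k]\|_F^2$, and by the Pythagorean identity minimizing it is equivalent to maximizing $\sum_{k=1}^{K}\|\mathbf{P}[k]\mathbf{C}[k]\|_F^2=\sum_{k=1}^{K}\text{Tr}(\mathbf{P}[k]\mathbf{C}[k]\mathbf{C}[k]^H)$.

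Next I would relax the CMC and take $\mathbf{W}_{\rm RF}$ to be semi-unitary, so that it merely represents an $N_r^{\rm RF}$-dimensional subspace. Using the fact, established earlier, that $\{\mathbf{H}[k]\}_{k=1}^{K}$ and hence the signal subspaces of $\{\mathbf{R}_y[k]\}_{k=1}^{K}$ share a common column space \cite{fre_flat}, the $k$-dependent projectors $\mathbf{P}[k]$ act, up to the weighting, on one and the same subspace; this lets me replace them by the single projector $\mathbf{P}=\mathbf{W}_{\rm RF}\mathbf{W}_{\rm RF}^H$. The maximization then reduces to $\max_{\mathbf{W}_{\rm RF}}\text{Tr}(\mathbf{W}_{\rm RF}^H\mathbf{W}\mathbf{W}^H\mathbf{W}_{\rm RF})$, i.e.\ the best rank-$N_r^{\rm RF}$ subspace fit to the weighted data matrix $\mathbf{W}$, since $\sum_{k=1}^{K}\mathbf{C}[k]\mathbf{C}[k]^H=\mathbf{W}\mathbf{W}^H$. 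By the Ky Fan / Eckart--Young theorem the maximizer is the span of the $N_r^{\rm RF}$ dominant left singular vectors of $\mathbf{W}$, namely $[\mathbf{U}_W]_{:,1:N_r^{\rm RF}}$, which is exactly the weighted-PCA solution.

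Finally I would reinstate the constant-modulus constraint by projecting this unconstrained solution onto $\mathcal{W}_{\rm RF}$. The closest constant-modulus matrix to $[\mathbf{U}_W]_{:,1:N_r^{\rm RF}}$ in the entrywise Euclidean sense keeps each entry's phase and fixes its modulus to $1/\sqrt{N_r}$, which yields $\mathbf{W}_{\rm RF}=\frac{1}{\sqrt{N_r}}\exp(j\angle([\mathbf{U}_W]_{:,1:N_r^{\rm RF}}))$ as claimed; this rounding to the feasible set is what makes the solution sub-optimal rather than optimal.

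I expect the main obstacle to be the reduction in the second step: the weighting $\mathbf{R}_y[k]^{1/2}$ sits inside the projector $\mathbf{P}[k]$ and genuinely depends on $k$, so the multi-subcarrier weighted problem is not exactly a single subspace-fitting problem on $\mathbf{W}$. Making the passage to $\text{Tr}(\mathbf{W}_{\rm RF}^H\mathbf{W}\mathbf{W}^H\mathbf{W}_{\rm RF})$ rigorous requires leaning on the common-column-space property and, as in Lemma 1, a high-SNR / good-approximation regime to absorb the weights into the blocks $\mathbf{C}[k]$ of $\mathbf{W}$. This is precisely why stacking the weighted combiners and extracting the leading left singular vectors, the weighted-PCA step, is the correct surrogate for the intractable joint design.
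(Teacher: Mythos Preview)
Your proposal is correct and follows essentially the same route as the paper's proof in Appendix~C. Both substitute the weighted LS combiner (\ref{W_BB_LS}) into (\ref{problem_W}), reduce the minimization to maximizing $\sum_{k}\mathrm{Tr}\big(\mathbf{A}^H[k]\mathbf{B}[k](\mathbf{B}^H[k]\mathbf{B}[k])^{-1}\mathbf{B}^H[k]\mathbf{A}[k]\big)$ with $\mathbf{A}[k]=\mathbf{C}[k]$ and $\mathbf{B}[k]=\mathbf{G}[k]$ in your notation, collapse the $k$-dependent projectors to a single subspace fit $\mathrm{Tr}(\check{\mathbf{U}}^H\mathbf{W}\mathbf{W}^H\check{\mathbf{U}})$, and finally project the leading left singular vectors $[\mathbf{U}_W]_{:,1:N_r^{\rm RF}}$ onto the constant-modulus set. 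The only stylistic differences are that the paper works directly with the SVD of $\mathbf{B}[k]$ and its left singular vectors $\check{\mathbf{U}}_B[k]$, whereas you phrase the same object as an orthogonal projector and relax $\mathbf{W}_{\rm RF}$ to be semi-unitary; and the paper silently drops the subcarrier index on $\check{\mathbf{U}}_B[k]$ at the very step you flag as the ``main obstacle,'' while you make that approximation explicit and justify it via the common-column-space and high-SNR arguments.
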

\begin{proof}
	See Appendix C.
\end{proof}

Similar to $\mathbf{F}_{\rm RF}$, we can design $\mathbf{W}_{\rm RF}$ using
Algorithm \ref{alg:PCA_AS} by replacing the input parameters $\{{\mathbf{F}_{\rm FD}^{\rm opt}}[k]\}_{k=1}^K$,
		$N_t^{\rm RF}$,
		and $N_t$ for TX with $\{ [ \mathbf{y}[k]\mathbf{y}[k]^H]^{1/2}{\mathbf{W}_{\rm FD}^{\rm opt}}[k]  \}_{k=1}^K$,
		$N_r^{\rm RF}$,
		and $N_r$ for RX.

\vspace*{-3mm}
\section{Hybrid Precoder/Combiner Design for PCS}
In this section, we first investigate the hybrid precoder/combiner design for FS. Moreover, we study how to group the antennas for AS to further improve the SE performance.
\vspace*{-3mm}
\subsection{Hybrid Precoder/Combiner Design for FS}\label{HDCD}
For FS, each antenna is only connected to one RF chain.
We define the set of antenna indexes as $\{1,\cdots,N_t\}$ and $\mathcal{S}_{l}$ ($1\leq {l}\leq N_t^{\rm RF}$) as the subset of the antennas connected to the ${l}$th RF chain.
Besides, we assume ${\text{card}(\mathcal{S}_{l})}=N_t^{\rm sub}=N_t/N_t^{\rm RF}\in \mathbb{Z}$ and $\mathcal{S}_{l}=\{({l}-1)N_t^{\rm sub}+1,\cdots,{l}N_t^{\rm sub}\}$, $\forall {l}$, for ease of analysis.
%
%
Hence $\mathbf{F}_{\rm RF}=\text{blkdiag}(\mathbf{f}_{{\rm RF},\mathcal{S}_1},\cdots,\mathbf{f}_{{\rm RF},\mathcal{S}_{N_t^{\rm RF}}})$, where $\mathbf{f}_{{\rm RF},\mathcal{S}_{l}}\in\mathbb{C}^{N_t^{\rm sub}\times 1}$ is the analog precoder for the ${l}$th subarray connected to the ${l}$th RF chain.

To design the hybrid precoder for FS,
we first consider the equivalent RF precoder  $\mathbf{\bar{F}}_{\rm RF}=\mathbf{F}_{\rm RF}{(\mathbf{F}_{\rm RF}^H\mathbf{F}_{\rm RF})}^{-1/2}$~\cite{170825},
and it can be further written as
\begin{equation}\label{F_bar}
\mathbf{\bar{F}}_{\rm RF} =\text{blkdiag}(\frac{\mathbf{f}_{{\rm RF},\mathcal{S}_1}}{||\mathbf{f}_{{\rm RF},\mathcal{S}_1}||_2},\cdots,\frac{\mathbf{f}_{{\rm RF},\mathcal{S}_{N_t^{\rm RF}}}}{||\mathbf{f}_{{\rm RF},\mathcal{S}_{N_t^{\rm RF}}}||_2}).
\end{equation}
For the FS, according to Lemma 1, the optimization problem (\ref{opt_final}) can be simplified as
\begin{equation}\label{opt_final_eq}
\begin{aligned}
\max_{\mathbf{F}_{\rm RF}}&\sum\nolimits_{k=1}^{K}||{\mathbf{F}_{\rm FD}^{\rm opt}}^H[k]\mathbf{\bar{F}}_{\rm RF}||_F^2,
\\&\text{s.t. } \left( {\ref{F_bar}} \right),~
\mathbf{f}_{{\rm RF},\mathcal{S}_{l}}\in\mathcal{F}_{{\rm RF},\mathcal{S}},\forall {l},
\end{aligned}
\end{equation}
where $\mathcal{F}_{{\rm RF},\mathcal{S}}$ is a set of feasible RF precoder satisfying the CMC, and the optimal fully-digital precoder ${\mathbf{F}_{\rm FD}^{\rm opt}}[k]=[\mathbf{V}[k]]_{:,1:N_s}\in\mathbb{C}^{N_t\times N_s}$ can be expressed
as the following block matrix form
\begin{equation}\label{blk_F_opt}
{\mathbf{F}_{\rm FD}^{\rm opt}}^H[k]=\begin{bmatrix}
\mathbf{F}_{{\rm opt},\mathcal{S}_1}^H[k] \
\cdots \
\mathbf{F}_{{\rm opt},\mathcal{S}_{N_t^{\rm RF}}}^H[k]
\end{bmatrix}.
\end{equation}
Here $\mathbf{F}_{{\rm opt},\mathcal{S}_{l}}[k]\in\mathbb{C}^{N_t^{\rm sub}\times N_s}$, $\forall k$. To solve (\ref{opt_final_eq}), we use the following proposition.
\begin{prop}
    For FS, given $\mathbf{F}_{\mathcal{S}_{l}}=\begin{bmatrix}
	\mathbf{F}_{{\rm opt},\mathcal{S}_{l}}[1] &\cdots& \mathbf{F}_{{\rm opt},\mathcal{S}_{l}}[K]\end{bmatrix}$, the sub-optimal $\mathbf{F}_{\rm RF}$ to (\ref{opt_final_eq}) is $\mathbf{F}_{\rm RF}=\text{\rm blkdiag}(\mathbf{f}_{{\rm RF},\mathcal{S}_1},\cdots,\mathbf{f}_{{\rm RF},\mathcal{S}_{N_t^{\rm RF}}})$, where $\mathbf{f}_{{\rm RF},\mathcal{S}_l}=\frac{1}{{\sqrt {{N_t^{\rm sub}}} }}{\rm exp}(j\angle(\mathbf{u}_{\mathcal{S}_l,1}))$, $\mathbf{u}_{\mathcal{S}_{l},1}\in\mathbb{C}^{N_t^{\rm sub}\times 1}$ is the right singular vector of the largest singular value of $\mathbf{F}_{\mathcal{S}_{l}}$, $\forall {l}$.
\end{prop}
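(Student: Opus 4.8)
The plan is to exploit the block-diagonal structure that the fixed-subarray constraint (\ref{F_bar}) imposes on $\mathbf{\bar{F}}_{\rm RF}$ in order to decouple (\ref{opt_final_eq}) into $N_t^{\rm RF}$ independent per-subarray problems, each of which is a constant-modulus-constrained Rayleigh quotient. First I would substitute the partitioned forms of $\mathbf{\bar{F}}_{\rm RF}$ in (\ref{F_bar}) and of ${\mathbf{F}_{\rm FD}^{\rm opt}}^H[k]$ in (\ref{blk_F_opt}) into the objective of (\ref{opt_final_eq}). Writing $\mathbf{\bar{f}}_l=\mathbf{f}_{{\rm RF},\mathcal{S}_l}/\|\mathbf{f}_{{\rm RF},\mathcal{S}_l}\|_2$, the $l$-th column of $\mathbf{\bar{F}}_{\rm RF}$ is supported only on the antenna indices in $\mathcal{S}_l$, so the $l$-th column of ${\mathbf{F}_{\rm FD}^{\rm opt}}^H[k]\mathbf{\bar{F}}_{\rm RF}$ equals $\mathbf{F}_{{\rm opt},\mathcal{S}_l}^H[k]\mathbf{\bar{f}}_l$. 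Consequently the Frobenius norm separates across subarrays and subcarriers,
\begin{equation}
\sum_{k=1}^{K}\|{\mathbf{F}_{\rm FD}^{\rm opt}}^H[k]\mathbf{\bar{F}}_{\rm RF}\|_F^2
=\sum_{l=1}^{N_t^{\rm RF}}\sum_{k=1}^{K}\|\mathbf{F}_{{\rm opt},\mathcal{S}_l}^H[k]\mathbf{\bar{f}}_l\|_2^2 .
\end{equation}
This decoupling is the crux of the argument: since the feasible set also factorizes (the block-diagonal structure and the per-block CMC act on each $\mathbf{f}_{{\rm RF},\mathcal{S}_l}$ separately), problem (\ref{opt_final_eq}) splits into $N_t^{\rm RF}$ independent subproblems, one per RF chain.

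Next I would recast each per-subarray objective as a quadratic form in the unit-norm vector $\mathbf{\bar{f}}_l$. Using $\|\mathbf{A}\mathbf{x}\|_2^2=\mathbf{x}^H\mathbf{A}^H\mathbf{A}\mathbf{x}$ together with the definition $\mathbf{F}_{\mathcal{S}_l}=[\mathbf{F}_{{\rm opt},\mathcal{S}_l}[1]\ \cdots\ \mathbf{F}_{{\rm opt},\mathcal{S}_l}[K]]$, we obtain
\begin{equation}
\sum_{k=1}^{K}\|\mathbf{F}_{{\rm opt},\mathcal{S}_l}^H[k]\mathbf{\bar{f}}_l\|_2^2
=\mathbf{\bar{f}}_l^H\Big(\sum_{k=1}^{K}\mathbf{F}_{{\rm opt},\mathcal{S}_l}[k]\mathbf{F}_{{\rm opt},\mathcal{S}_l}^H[k]\Big)\mathbf{\bar{f}}_l
=\mathbf{\bar{f}}_l^H\mathbf{F}_{\mathcal{S}_l}\mathbf{F}_{\mathcal{S}_l}^H\mathbf{\bar{f}}_l .
\end{equation}
Because $\mathbf{\bar{f}}_l$ is unit-norm by construction, maximizing this Rayleigh quotient with the CMC momentarily dropped yields the dominant singular vector $\mathbf{u}_{\mathcal{S}_l,1}$ of $\mathbf{F}_{\mathcal{S}_l}$ associated with its largest singular value, equivalently the leading eigenvector of $\mathbf{F}_{\mathcal{S}_l}\mathbf{F}_{\mathcal{S}_l}^H$ (which indeed lives in $\mathbb{C}^{N_t^{\rm sub}\times 1}$), as the relaxed optimum.

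Finally, since $\mathbf{u}_{\mathcal{S}_l,1}$ will in general violate the constant-modulus constraint, I would project it onto the feasible CMC set by retaining only its phases, exactly as in Propositions 1 and 2, giving $\mathbf{f}_{{\rm RF},\mathcal{S}_l}=\frac{1}{\sqrt{N_t^{\rm sub}}}\exp(j\angle(\mathbf{u}_{\mathcal{S}_l,1}))$; each entry then has modulus $1/\sqrt{N_t^{\rm sub}}$, so $\|\mathbf{f}_{{\rm RF},\mathcal{S}_l}\|_2=1$ and the normalization implicit in (\ref{F_bar}) is automatically met. Reassembling the blocks through $\text{blkdiag}(\cdot)$ recovers the claimed $\mathbf{F}_{\rm RF}$. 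The main obstacle — and the reason the result is stated only as ``sub-optimal'' — is precisely this last projection: phase extraction is the nearest-constant-modulus heuristic rather than an exact maximizer of the CMC-constrained quadratic, so, just as in Proposition 1, global optimality is rigorously established only for the relaxed per-subarray problem, while the CMC-feasible answer is a principled approximation whose tightness rests on the same ``close-to-fully-digital'' premise invoked in Lemma 1.
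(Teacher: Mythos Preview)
Your proposal is correct and follows essentially the same route as the paper's own proof: substitute the block partitions (\ref{F_bar}) and (\ref{blk_F_opt}) into (\ref{opt_final_eq}), decouple the objective into the per-subarray Rayleigh quotients $\mathbf{\bar f}_l^H\mathbf{F}_{\mathcal S_l}\mathbf{F}_{\mathcal S_l}^H\mathbf{\bar f}_l$, identify the unconstrained maximizer as the dominant singular vector of $\mathbf{F}_{\mathcal S_l}$, and then extract phases to enforce the CMC. If anything, you are more careful than the paper in stating that the ``sub-optimal'' qualifier stems from the phase-extraction step rather than from the relaxed Rayleigh-quotient maximization, and your eigenvector characterization via $\mathbf{F}_{\mathcal S_l}\mathbf{F}_{\mathcal S_l}^H$ correctly pins $\mathbf u_{\mathcal S_l,1}$ down as the \emph{left} singular vector (the paper's statement says ``right'' while its proof says ``left''; the latter is the correct one, as the dimension $N_t^{\rm sub}$ already forces).
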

\begin{proof}
	By substituting (\ref{F_bar}) and (\ref{blk_F_opt}) into the objective function of (\ref{opt_final_eq}), we can further have
\begin{equation}\label{r_F_PCS}
	\begin{aligned}
    &\sum_{k=1}^{K}||{\mathbf{F}_{\rm FD}^{\rm opt}}^H[k]\mathbf{\bar{F}}_{\rm RF}||_F^2
    \\=&\sum\nolimits_{k=1}^{K}||\begin{bmatrix}
	\frac{\mathbf{F}_{{\rm opt},\mathcal{S}_1}^H[k]\mathbf{f}_{{\rm RF},\mathcal{S}_1}}{||\mathbf{f}_{{\rm RF},\mathcal{S}_1}||_2} &
	\cdots &
	\frac{\mathbf{F}_{{\rm opt},\mathcal{S}_{N_t^{\rm RF}}}^H[k]\mathbf{f}_{{\rm RF},\mathcal{S}_{N_t^{\rm RF}}}}{||\mathbf{f}_{{\rm RF},\mathcal{S}_{N_t^{\rm RF}}}||_2}
	\end{bmatrix}||_F^2
    \\=&\sum\nolimits_{{l}=1}^{N_t^{\rm RF}}\frac{\sum_{k=1}^{K}||\mathbf{F}_{{\rm opt},\mathcal{S}_{l}}^H[k]\mathbf{f}_{{\rm RF},\mathcal{S}_{l}}||_2^2}{||\mathbf{f}_{{\rm RF},\mathcal{S}_{l}}||_2^2}
    \\=&\sum\nolimits_{{l}=1}^{N_t^{\rm RF}}\frac{\mathbf{f}_{{\rm RF},\mathcal{S}_{l}}^H\mathbf{F}_{\mathcal{S}_{l}}\mathbf{F}_{\mathcal{S}_{l}}^H\mathbf{f}_{{\rm RF},\mathcal{S}_{l}}}{||\mathbf{f}_{{\rm RF},\mathcal{S}_{l}}||_2^2}\leq\sum\nolimits_{{l}=1}^{N_t^{\rm RF}}\lambda_1^2(\mathbf{F}_{\mathcal{S}_{l}}),
	\end{aligned}
	\end{equation}
where $\lambda_1(\mathbf{F}_{\mathcal{S}_{l}})$ means the largest singular value of matrix $\mathbf{F}_{\mathcal{S}_{l}}$. The maximum value can only be obtained if $\mathbf{f}_{{\rm RF},\mathcal{S}_l}=\frac{1}{{\sqrt {{N_t^{\rm sub}}} }}{\rm exp}(j\angle(\mathbf{u}_{\mathcal{S}_l,1}))$, where $\mathbf{u}_{\mathcal{S}_{l},1}$ is the left singular vector of the largest singular value of the matrix $\mathbf{F}_{\mathcal{S}_r}$.
\end{proof}

By taking quantization of phase shifters into account, the final $\mathbf{F}_{\rm RF}$ is
${\bf{f}}_{{\rm{RF}},{{\cal S}_{l}}}= \frac{1}{{\sqrt {{N_t^{\rm sub}}} }}\\\times{\rm exp}(j{\textstyle{2\pi\over {2^Q}}}
{\rm{round}}({\textstyle{{2^Q\!\angle ({{\bf{u}}_{{{\cal S}_{l}},1}})} \over {2\pi }}}))$. Meanwhile,
$\left\{ {{{\bf{F}}_{{\rm{BB}}}}[k]} \right\}_{k = 1}^K$ can be obtained according to~(\ref{F_BB}).

At the RX, we consider the subset of antenna indices connected to the ${l}$th RF chain as $\mathcal{T}_{l}=\{({l}-1)N_r^{\rm sub}+1,\cdots,{l}N_r^{\rm sub}\}$, $\forall {l}$, where ${\text{card}(\mathcal{T}_{l})}=N_r^{\rm sub}=N_r/N_r^{\rm RF}\in \mathbb{Z}$ for ease of analysis. Similar to the TX, we have $\mathbf{W}_{\rm RF}=\text{blkdiag}(\mathbf{w}_{{\rm RF},\mathcal{T}_1},\cdots,\mathbf{w}_{{\rm RF},\mathcal{T}_{N_r^{\rm RF}}})\in\mathbb{C}^{N_r\times N_r^{\rm RF}}$ and $\mathbf{W}_{\rm BB}[k]=[
\mathbf{w}_{{\rm BB},\mathcal{T}_1}[k]
\ \cdots
\ \mathbf{w}_{{\rm BB},\mathcal{T}_{N_r^{\rm RF}}}[k]]^H\in\mathbb{C}^{N_r^{\rm RF}\times N_s}$,
where $\mathbf{w}_{{\rm BB},\mathcal{T}_{l}}[k]\in\mathbb{C}^{N_s\times 1}$ is the baseband combiner of the ${l}$th receive subarray. Hence 
$      \mathbf{W}_{\rm RF}\mathbf{W}_{\rm BB}[k]\!\!=\!\!
[\mathbf{w}_{{\rm RF},\mathcal{T}_1}\mathbf{w}_{{\rm BB},\mathcal{T}_1}^H\![k]
\ \cdots
\ \mathbf{w}_{{\rm RF},\mathcal{T}_{N_r^{\rm RF}}}\mathbf{w}_{{\rm BB},\mathcal{T}_{N_r^{\rm RF}}}^H\![k]]^T\in\mathbb{C}^{N_r\times N_s}$.
Moreover, consider the effective channel
\begin{equation}\label{H_eff}
  \mathbf{H}_{\rm eff}[k]=\mathbf{H}[k]\mathbf{F}_{\rm RF}\mathbf{F}_{\rm BB}[k],
\end{equation}
the received signal $\mathbf{y}[k]=[(\mathbf{y}_{\mathcal{T}_1}[k])^T\ \cdots \ (\mathbf{y}_{\mathcal{T}_{N_r^{\rm RF}}}[k])^T]^T\in\mathbb{C}^{N_r\times 1}$ with
$\mathbf{y}_{\mathcal{T}_{l}}[k]\in\mathbb{C}^{N_r^{\rm sub}\times 1}$, the effective channel
$\mathbf{H}_{\rm eff}[k]=[(\mathbf{H}_{\rm eff,\mathcal{T}_1}[k])^T\ \cdots \ (\mathbf{H}_{{\rm eff},\mathcal{T}_{N_r^{\rm RF}}}[k])^T]^T\in\mathbb{C}^{N_r\times N_s}$ with
$\mathbf{H}_{\rm eff,\mathcal{T}_{l}}[k]\in\mathbb{C}^{N_r^{\rm sub}\times N_s}$, and the noise $\mathbf{n}[k]=[(\mathbf{n}_{\mathcal{T}_1}[k])^T\ \cdots \ (\mathbf{n}_{\mathcal{T}_{N_r^{\rm RF}}}[k])^T]^T\in\mathbb{C}^{N_r\times 1}$ with $\mathbf{n}_{\mathcal{T}_{l}}[k]\in\mathbb{C}^{N_r^{\rm sub}\times 1}$, we have
\begin{equation}\label{blk_y}
{{\bf{y}}_{{{\cal T}_{l}}}}[k]{\rm{ = }}{{\bf{H}}_{{\rm{eff}},{{\cal T}_{l}}}}[k]{\bf{x}}[k] + {{\bf{n}}_{{{\cal T}_{l}}}}[k], \forall {l}.
\end{equation}
By substituting (\ref{blk_y}) into (\ref{MSE}), we have
\begin{equation}\label{MSE_2}
\begin{aligned}
&\sum_{k=1}^{K}\!\mathbb{E}[||\mathbf{x}\![k]\!\!-\!\!\mathbf{W}_{\rm BB}^H\![k]\mathbf{W}_{\rm RF}^H\mathbf{y}\![k]||_2^2]
\\\!=\!\!&\sum_{k=1}^{K}\!(\text{Tr}(\mathbb{E}[\mathbf{x}\![k]\mathbf{x}^H\![k]]\!)
\\\!&-\!\!2\!\!\sum_{{l}=1}^{N_r^{\rm RF}}\!\mathfrak{R}\{\text{Tr}(\mathbb{E}[\mathbf{x}\![k]\mathbf{y}_{\mathcal{T}_{l}}^H\![k]]\mathbf{w}_{{\rm RF},\mathcal{T}_{l}}\!\mathbf{w}_{{\rm BB},\mathcal{T}_{l}}^H\![k]\!)\!\}
\\&+\sum_{{l}=1}^{N_r^{\rm RF}}\text{Tr}(\mathbf{w}_{{\rm BB},\mathcal{T}_{l}}[k]\mathbf{w}_{{\rm RF},\mathcal{T}_{l}}^H\mathbb{E}[\mathbf{y}_{\mathcal{T}_{l}}[k]
\mathbf{y}_{\mathcal{T}_{l}}^H[k]]\mathbf{w}_{{\rm RF},\mathcal{T}_{l}}
\mathbf{w}_{{\rm BB},\mathcal{T}_{l}}^H[k])),
\end{aligned}
\end{equation}
where $\mathbb{E}[\mathbf{y}_{\mathcal{T}_{l}}[k]\mathbf{y}_{\mathcal{T}_{l}}^H[k]]=
\mathbf{H}_{{\rm eff},\mathcal{T}_{l}}[k]\mathbf{H}_{{\rm eff},\mathcal{T}_{l}}^H[k]+\sigma_n^2\mathbf{I}_{N_r^{\rm sub}}$.
To design the $\{\mathbf{W}_{\rm BB}[k]\}_{k=1}^K$ and $\mathbf{W}_{\rm RF}$ for minimizing (\ref{MSE_2}),
we first use the MMSE criterion to obtain the optimal fully-digital combiner according to (\ref{W_opt}), which can be expressed as
${\mathbf{W}_{\rm FD}^{\rm opt}}^H[k]=\begin{bmatrix}
\mathbf{W}_{{\rm opt},\mathcal{T}_1}^H[k] \
\cdots \
\mathbf{W}_{{\rm opt},\mathcal{T}_{N_r^{\rm RF}}}^H[k]
\end{bmatrix}\in\mathbb{C}^{N_s\times N_r}$ 
 with $\mathbf{W}_{{\rm opt},\mathcal{T}_{l}}[k]\in\mathbb{C}^{N_r^{\rm sub}\times N_s}$, $\forall k$. Furthermore, we can transform (\ref{MSE_2}) into equation (\ref{min_f}) below by adding a constant term $\sum\limits_{k = 1}^K {\sum\limits_{{l} = 1}^{N_r^{{\rm{RF}}}} {{\rm{Tr}}} } ({\bf{W}}_{{\rm{opt}},{{\cal T}_{l}}}^H[k]\mathbb{E}[{{\bf{y}}_{{{\cal T}_{l}}}}[k]{\bf{y}}_{{{\cal T}_{l}}}^H[k]] {{\bf{W}}_{{\rm{opt}},{{\cal T}_{l}}}}[k]) - \sum\limits_{k = 1}^K {{\rm{Tr}}} (\mathbb{E}[{\bf{x}}[k]{{\bf{x}}^H}[k]])$ irrelevant to the optimization object variables $\{\mathbf{W}_{\rm BB}[k]\}_{k=1}^K$ and $\mathbf{W}_{\rm RF}$
\begin{equation}\label{min_f}
\sum_{k=1}^{K}\sum_{{l}=1}^{N_r^{\rm RF}}||\mathbb{E}[\mathbf{y}_{\mathcal{T}_{l}}[k]
\mathbf{y}_{\mathcal{T}_{l}}^H[k]]^{\frac{1}{2}}(\mathbf{W}_{{\rm opt},\mathcal{T}_{l}}[k]
-\mathbf{w}_{{\rm RF},\mathcal{T}_{l}}\mathbf{w}_{{\rm BB},\mathcal{T}_{l}}^H[k])||_F^2.
\end{equation}
To minimize (\ref{min_f}), we consider
the weighted LS estimation of $\mathbf{W}_{\rm BB}[k]$, denoted by $\mathbf{W}^{\rm WLS}_{\rm BB}[k]$, according to
(\ref{W_BB_LS}), then we
have a sub-optimal $\mathbf{w}_{{\rm RF},\mathcal{T}_l}=\frac{1}{{\sqrt {{N_r^{\rm sub}}} }}{\rm exp}(j\angle(\mathbf{u}_{\mathcal{T}_l,1}))$,
where $\mathbf{u}_{\mathcal{T}_{l},1}$ is the left singular vector of the largest singular value of $\mathbf{W}_{\mathcal{T}_{l}}=[ \mathbb{E}[\mathbf{y}_{\mathcal{T}_{l}}[1]
\mathbf{y}_{\mathcal{T}_{l}}^H[1]]^{\frac{1}{2}}\mathbf{W}_{{\rm opt},\mathcal{T}_{l}}[1] \cdots\ \mathbb{E}[\mathbf{y}_{\mathcal{T}_{l}}[K]
\mathbf{y}_{\mathcal{T}_{l}}^H[K]]^{\frac{1}{2}} \mathbf{W}_{{\rm opt},\mathcal{T}_{l}}[K]]$. At last, $\{\mathbf{W}_{\rm BB}[k]\}_{k=1}^K$ can be obtained by (\ref{W_BB_LS}). 
\vspace*{-3mm}
\subsection{Antenna Grouping for Hybrid Precoder in AS}

For AS, how to group the transmit/receive antennas, i.e., design $\{\mathcal{S}_{l}\}_{{l}=1}^{N_t^{\rm RF}}$ and $\{\mathcal{T}_{l}\}_{{l}=1}^{N_r^{\rm RF}}$ can further improve the SE performance.
At the TX, the optimization of transmit antenna grouping $\{\mathcal{S}_{l}\}_{{l}=1}^{N_t^{\rm RF}}$ can be formulated as the following optimization problem according to (\ref{r_F_PCS})
\begin{equation}\label{pro_dy}
\setlength{\abovedisplayskip}{12pt}
\setlength{\belowdisplayskip}{12pt}
\begin{aligned}
\max\limits_{\mathcal{S}_1,\cdots,\mathcal{S}_{N_t^{\rm RF}}}&\sum\nolimits_{{l}=1}^{N_t^{\rm RF}}\lambda_1^2(\mathbf{F}_{\mathcal{S}_{l}})
\\&\text{s.t. }\cup_{{l}=1}^{N_t^{\rm RF}}\mathcal{S}_{l}=\{1,\cdots,N_t\},
\\&\mathcal{S}_i\cap\mathcal{S}_j=\emptyset\text{ for }i\not=j,\ \mathcal{S}_{l}\not=\emptyset\ \forall {l}.
\end{aligned}
\end{equation}
This optimization problem is a combinational optimization problem, which requires an exhaustive search to reach the optimal solution. The number of all possible combinations to obtain the optimal solution is $\frac{1}{(N_t^{\rm RF})!}\sum_{n=0}^{N_r^{\rm RF}}(-1)^{N_t^{\rm RF}-n}\binom{N_t^{\rm RF}}{n}n^{N_t}$ according to \cite{S_num}. For example, when $N_t=64$ and $N_t^{\rm RF}=4$, the number of all possible combinations can be up to $1.4178\times 10^{37}$.
Therefore, we will design a low-complexity antenna grouping algorithm to maximize (\ref{pro_dy}). Specifically, given $\mathbf{R}_{\mathcal{S}_{l}}=\mathbf{F}_{\mathcal{S}_{l}}\mathbf{F}_{\mathcal{S}_{l}}^H$ and $\mathbf{R}_F=\mathbf{F}\mathbf{F}^H$, we have $\lambda_1^2(\mathbf{F}_{\mathcal{S}_{l}})=\lambda_1(\mathbf{R}_{\mathcal{S}_{l}})$ and the following approximation
\begin{equation}\label{lambda_approx}
\setlength{\abovedisplayskip}{12pt}
\setlength{\belowdisplayskip}{12pt}
\begin{aligned}
\lambda_1(\mathbf{R}_{\mathcal{S}_{l}})&\approx \frac{1}{\text{card}(\mathcal{S}_{l})}\sum\nolimits_{i=1}^{\text{card}(\mathcal{S}_{l})}\sum\nolimits_{j=1}^{\text{card}(\mathcal{S}_{l})}|[\mathbf{R}_{\mathcal{S}_{l}}]_{i,j}|
\\&=\frac{1}{\text{card}(\mathcal{S}_{l})}\sum_{i\in\mathcal{S}_r }\sum_{j\in\mathcal{S}_{l}}|[\mathbf{R}_F]_{i,j}|,
\end{aligned}
\end{equation}
which is due to the tight lower bound and upper bound of $\lambda_1(\mathbf{R}_{\mathcal{S}_{l}})$ as proven in \cite{170825}.
Hence, the objective function of (\ref{pro_dy}) becomes $\sum_{{l}=1}^{N_t^{\rm RF}}\frac{1}{\text{card}(\mathcal{S}_{l})}\sum_{i\in\mathcal{S}_{l} }\sum_{j\in\mathcal{S}_{l}}|[\mathbf{R}_F]_{i,j}|$.
This is still a combinational optimization problem, which requires the exhaustive search with high complexity.

\begin{algorithm}[!tb]
	\caption{Proposed Shared-AHC Algorithm to Group Antennas for AS.}
	\label{alg:AHC}
	\begin{algorithmic}[1]
		\renewcommand{\algorithmicrequire}{\textbf{Input:}}
		\renewcommand\algorithmicensure {\textbf{Output:}}
		\Require
		The correlation matrix $\mathbf{R}_F=\mathbf{F}\mathbf{F}^H$, number of antennas $N_t$, number of RF chains $N_t^{\rm RF}$.
		\Ensure
		Antenna grouping results $\mathcal{S}_1,\cdots,\mathcal{S}_{N_t^{\rm RF}}$.
		\State $N_{\rm sub}=N_t$, $\mathcal{S}_i=\{i\}$ for $i=1,\cdots,N_t$
		\While{$N_{\rm sub}>N_t^{\rm RF}$}
		\State $\mathcal{S}_i^0=\mathcal{S}_i$ for $i=1,\cdots,N_{\rm sub}$, $n_{\rm sub}=1$
		\For{$i=1:N_{\rm sub}$}
		\State \textbf{if} $\exists r_0\ne i\text{ s.t. }\mathcal{S}_i^0\subseteq\mathcal{S}_{r_0}$ \textbf{then continue}
		\State \textbf{else if} $i=N_{\rm sub}$ \textbf{then} $\mathcal{S}_{n_{\rm sub}}=\mathcal{S}_i^0$
		\State \textbf{else} $j=\arg\max\limits_{l\in\{i+1,\cdots,N_{\rm sub}\}}g(\mathcal{S}_i,\mathcal{S}_l)$, $i^0=\arg\max\limits_{l\in\{1,\cdots,N_{\rm sub}\}\setminus\{j\}}g(\mathcal{S}_j,\mathcal{S}_l)$
		\State \qquad \textbf{if} $i=i^0$ \textbf{then} $\mathcal{S}_{n_{\rm sub}}=\mathcal{S}_i\cup\mathcal{S}_j$
		\State \qquad \textbf{else} $\mathcal{S}_{n_{\rm sub}}=\mathcal{S}_i$
		\State \qquad\textbf{end if}
		\State \textbf{end if}
		\State $n_{\rm sub}=n_{\rm sub}+1$
		\EndFor
		\State $N_{\rm sub}^0=n_{\rm sub}-1$
		\State \textbf{if} {$N_{\rm sub}^0<N_r^{\rm RF}$} \textbf{then} $\mathcal{S}_i=\mathcal{S}_i^0$ for $i=1,\cdots,N_{\rm sub}$ \textbf{break}
		\State \textbf{else} $N_{\rm sub}=N_{\rm sub}^0$
		\State \textbf{end if}
		\EndWhile
		\State \textbf{if} $N_{\rm sub}>N_t^{\rm RF}$ \textbf{then} sort $\mathcal{S}_i$ according to the ascending order of cardinality
		\State \qquad \textbf{for} $i=1:(N_{\rm sub}-N_t^{\rm RF})$ \textbf{do} $j=\arg\max\limits_{l=\{N_t^{\rm RF}-N_{\rm sub}+1,\cdots,N_{\rm sub}\}}g(\mathcal{S}_i,\mathcal{S}_l)$, $\mathcal{S}_i=\mathcal{S}_i\cup\mathcal{S}_j$
		\State \qquad \textbf{end for}
		\State \qquad Rearrange the subscript to guarantee that the order of subscripts are from 1 to $N_t^{\rm RF}$
		\State \textbf{end if}
	\end{algorithmic}
\end{algorithm}

In this paper, we formulate the antenna grouping problem as the clustering analysis problem in machine learning.
Since $\mathbf{R}_F$ is a correlation metric rather than the distance metric, we focus on the correlation-based clustering approach
 and propose the shared-AHC algorithm as listed in Algorithm \ref{alg:AHC}.
 The proposed algorithm is developed from the AHC algorithm \cite{AHC}, and it can divide the antennas into multiple groups connected to different RF chains.
Note that the traditional AHC algorithm builds a cluster hierarchy from the bottom up, and it starts by adding all data to multiple clusters, followed by iteratively pair-wise merging these clusters until only one cluster is left at the top of the hierarchy \cite{AHC}.
By contrast, the proposed shared-AHC algorithm simultaneously builds $N_t^{\rm RF}$ clusters, rather than only one cluster in conventional AHC algorithm. Besides, the pair-wise merging criterion in the proposed algorithm is ``shared", while the conventional AHC algorithm only focuses on the target cluster. To further illustrate this ``shared" mechanism, we  introduce the metric of mutual correlation $g(\mathcal{S}_n,\mathcal{S}_m)$ for any two clusters $\mathcal{S}_n$ and $\mathcal{S}_m$ as
\begin{equation}\label{mul_cor}
g(\mathcal{S}_n,\mathcal{S}_m)=\frac{1}{\text{card}(\mathcal{S}_n)\text{card}(\mathcal{S}_m)}
\sum_{i\in\mathcal{S}_n}\sum_{j\in\mathcal{S}_m}|[\mathbf{R}_F]_{i,j}|,m \ne n.
\end{equation}
In each clustering iteration, we first focus on the cluster $\mathcal{S}_n$ and search for a cluster $\mathcal{S}_m$ that maximizes $g(\mathcal{S}_n,\mathcal{S}_l)$ among all possible $\mathcal{S}_l$.
If the cluster $\mathcal{S}_n$ also maximizes $g(\mathcal{S}_m,\mathcal{S}_l)$ among all possible $\mathcal{S}_l$,
we merge $\mathcal{S}_n$ and $\mathcal{S}_m$.
Otherwise, the cluster $\mathcal{S}_n$ and cluster $\mathcal{S}_m$ are not merged, and algorithm goes into the next iteration.
Therefore, our proposed algorithm is featured as ``shared", since two clusters mutually share the maximum correlation in the sense of (\ref{mul_cor}).

The steps of the proposed shared-AHC algorithm are elaborated as follows. Step 1 performs the initialization. 
Step 3 saves the clustering results in the last iteration and initializes the
clustering process counter $n_{\rm sub}$. Step 5 considers the special situation that the target cluster is already
merged into a former cluster, and step 6 considers the situation that the target cluster is $\mathcal{S}_{N_{\rm sub}}$ and not merged into any of the former clusters.
To
maximize the correlation $g(\mathcal{S}_i,\mathcal{S}_l)$ for the target cluster $\mathcal{S}_i$,
the operation $j=\arg\max\limits_{l\in\{i+1,\cdots,N_{\rm sub}\}}g(\mathcal{S}_i,\mathcal{S}_l)$ in step 7
will search for the cluster $\mathcal{S}_j$
from the clusters that have not been searched, i.e., $\left\{ {{{\cal S}_j}} \right\}_{l = {\rm{i + 1}}}^{{N_{{\rm{sub}}}}}$.
The operation $i^0=\arg\max\limits_{l\in\{1,\cdots,N_{\rm sub}\}\setminus\{j\}}g(\mathcal{S}_j,\mathcal{S}_l)$ in step 7 further judges whether
$\mathcal{S}_i$ also has the maximum mutual correlation for the chosen $\mathcal{S}_j$ or not. 
$i=i^0$ in step 8 indicates this judge holds, then clusters $\mathcal{S}_i$ and
$\mathcal{S}_j$ are merged. Otherwise, these two clusters will not be merged (step 9). After processing $\mathcal{S}_i$, the
cluster counter $n_{\rm sub}$ is increased by one, and the next target cluster will be processed (step 12).
After the loop including steps 4-13 finishes, 
if the resulting number of clusters $N_{\rm sub}<N_r^{\rm RF}$ (step 15), the iteration stops and the
clustering result of the last iteration will be considered, i.e.,
$\mathcal{S}_i=\mathcal{S}_i^0$ for $i=1,\cdots,N_{\rm sub}$.
Otherwise, we continue the iteration
(step 16). Steps 19-23 ensure the result $N_{\rm sub}=N_t^{\rm RF}$. If $N_{\rm sub}>N_t^{\rm RF}$,
the $(N_{\rm sub}-N_t^{\rm RF})$ clusters with $(N_{\rm sub}-N_t^{\rm RF})$ smallest
cardinality are merged within the rest $N_t^{\rm RF}$ clusters (steps 19-21). Step 22
guarantees that the subscripts of clustering result match the notation of RF chains.

Note that although the antenna grouping $\{\mathcal{S}_{l}\}_{{l}=1}^{N_t^{\rm RF}}$ designed by
Algorithm \ref{alg:AHC} is based on the instantaneous CSI
$\mathbf{H}[k]$, $\{\mathcal{S}_{l}\}_{{l}=1}^{N_t^{\rm RF}}$ mainly depends on
the steering vectors having the first $N_s$ largest path gains for massive MIMO with large $N_t$ (Proposition 4).
On the other hand, for time-varying MIMO channels, the variation rates
for channel angles and the absolute values of channel gains are usually much
slower than that for channel gains \cite{tvchannel}.
%
This indicates that once
$\{\mathcal{S}_{l}\}_{{l}=1}^{N_t^{\rm RF}}$ is determined, it can remain unchangeable
for a long period of time with negligible performance loss.
The proof for Proposition 4 is provided as follows.
\begin{prop}\label{change}
    For massive MIMO with large $N_t$, the correlation matric $\mathbf{R}_F$ for antenna grouping 
     only depends on the steering vectors associated with the first $N_s$ largest path gains.
\end{prop}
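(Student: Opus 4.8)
The plan is to pass to the angular (beamspace) representation of the channel and exploit the asymptotic orthogonality of array steering vectors in the large-antenna regime. First I would stack the transmit and receive steering vectors of all $N_{\rm cl}N_{\rm ray}$ paths as the columns of $\mathbf{A}_t$ and $\mathbf{A}_r$, so that (\ref{H_d}) together with the definition of $p_{i,l}[k]$ gives the compact factorization $\mathbf{H}[k]=\mathbf{A}_r\mathbf{D}[k]\mathbf{A}_t^H$, where $\mathbf{D}[k]=\mathrm{diag}(\{p_{i,l}[k]\})$ collects the frequency-domain path coefficients. The first key step is to show that the columns of $\mathbf{A}_t$ become orthonormal as $N_t\to\infty$: writing each steering vector through its Kronecker structure $\mathbf{a}_t=\mathbf{e}_t^v\otimes\mathbf{e}_t^h$, the inner product of two distinct steering vectors factorizes into a product of Dirichlet kernels such as $\frac{1}{N_t^h}\sum_{n=0}^{N_t^h-1}e^{-j2\pi n(\Omega^h_{i,l}-\Omega^h_{i',l'})}$, each of which tends to zero whenever the corresponding spatial frequencies differ. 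Hence $\mathbf{A}_t^H\mathbf{A}_t\to\mathbf{I}$, and, in the massive-MIMO regime where $N_r$ is likewise large, $\mathbf{A}_r^H\mathbf{A}_r\to\mathbf{I}$.

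Second, I would use these orthogonality relations to identify the SVD of $\mathbf{H}[k]$. Forming the transmit-side Gram product, $\mathbf{H}^H[k]\mathbf{H}[k]=\mathbf{A}_t\mathbf{D}^H[k]\mathbf{A}_r^H\mathbf{A}_r\mathbf{D}[k]\mathbf{A}_t^H\to\mathbf{A}_t\mathbf{D}^H[k]\mathbf{D}[k]\mathbf{A}_t^H$, and because $\mathbf{A}_t$ now has orthonormal columns this is (asymptotically) an eigendecomposition with eigenvalues $|p_{i,l}[k]|^2$. Consequently the right singular vectors $\mathbf{V}[k]$ of $\mathbf{H}[k]$ coincide with the transmit steering vectors and the singular values equal $|p_{i,l}[k]|$. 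The optimal fully-digital precoder $\mathbf{F}_{\rm FD}^{\rm opt}[k]=[\mathbf{V}[k]]_{:,1:N_s}$ therefore selects exactly the $N_s$ steering vectors whose paths carry the $N_s$ largest $|p_{i,l}[k]|$, so that $[\mathbf{V}[k]]_{:,1:N_s}[\mathbf{V}[k]]_{:,1:N_s}^H\to\sum_{(i,l)\in\mathcal{L}[k]}\mathbf{a}_t(\phi^t_{i,l},\theta^t_{i,l})\mathbf{a}_t^H(\phi^t_{i,l},\theta^t_{i,l})$, where $\mathcal{L}[k]$ indexes those dominant paths.

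Third, since $|p_{i,l}[k]|$ equals $|\alpha_{i,l}|$ scaled by a pulse-shaping factor that is essentially common across paths, the dominant set $\mathcal{L}[k]=\mathcal{L}$ is the same for every subcarrier and is governed solely by the $N_s$ largest path gains $|\alpha_{i,l}|$. Summing over $k$ in $\mathbf{R}_F=\mathbf{F}\mathbf{F}^H=\sum_{k=1}^{K}[\mathbf{V}[k]]_{:,1:N_s}[\mathbf{V}[k]]_{:,1:N_s}^H$ then yields $\mathbf{R}_F\to K\sum_{(i,l)\in\mathcal{L}}\mathbf{a}_t(\phi^t_{i,l},\theta^t_{i,l})\mathbf{a}_t^H(\phi^t_{i,l},\theta^t_{i,l})$, which depends only on the steering vectors — equivalently the AoDs — of the $N_s$ strongest paths, establishing the claim.

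The main obstacle is making the large-antenna limit rigorous: the Dirichlet-kernel inner products decay only like $O(1/N_t)$ rather than vanishing exactly, so a perturbation (Davis--Kahan) argument together with an eigenvalue-gap assumption separating the top $N_s$ path powers from the rest is needed to bound the deviation of $\mathrm{span}([\mathbf{V}[k]]_{:,1:N_s})$ from the span of the dominant steering vectors. A secondary subtlety is the receive side: for finite $N_r$ the factor $\mathbf{A}_r^H\mathbf{A}_r$ is not exactly $\mathbf{I}$, so one must either also take $N_r$ large or restrict to well-separated dominant gains; and one must justify that the ordering of $|p_{i,l}[k]|$ is stable in $k$, which reduces to the pulse-shaping magnitude being approximately path-independent so that the ranking collapses to that of $|\alpha_{i,l}|$.
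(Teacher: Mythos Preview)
Your proposal is correct and follows essentially the same route as the paper: both factorize $\mathbf{H}[k]=\mathbf{A}_r\mathbf{D}[k]\mathbf{A}_t^H$, invoke the asymptotic orthogonality $\mathbf{A}_t^H\mathbf{A}_t\to\mathbf{I}$, $\mathbf{A}_r^H\mathbf{A}_r\to\mathbf{I}$ to identify the right singular vectors of $\mathbf{H}[k]$ with the transmit steering vectors, and then conclude $\mathbf{R}_F=K\mathbf{A}_{t,\max\_N_s}\mathbf{A}_{t,\max\_N_s}^H$. The only notable difference is that the paper disposes of your ``ranking stable in $k$'' subtlety by simply \emph{assuming} an ideal pulse $p(t)=\delta(t)$, so that $|p_{i,l}[k]|\propto|\alpha_{i,l}|$ exactly for every $k$, rather than arguing it informally as you do.
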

\begin{proof}
	For massive MIMO, the transmit/receive steering vectors are asymptotic orthogonal, i.e., $\lim\limits_{N_t\to \infty}\mathbf{A}_t^H\mathbf{A}_t=\mathbf{I}_{N_{\rm cl}N_{\rm ray}}$ and $\lim\limits_{N_r\to \infty}\mathbf{A}_r^H\mathbf{A}_r=\mathbf{I}_{N_{\rm cl}N_{\rm ray}}$~\cite{mao}.
Furthermore, we assume an ideal pulse-shaping $p(t)=\delta(t)$ and $|{\alpha _{1,1}}| > |{\alpha _{1,2}}| >  \cdots  > |{\alpha _{{N_{{\rm{cl}}}},{N_{{\rm{ray}}}}}}|$ for ease of analysis.
So the SVD of $\mathbf{H}[k]$ in (\ref{SVD}) can be written as
$\mathbf{U}[k]\!\!=\!\!\begin{bmatrix}{\mathbf{A}}_r & \mathbf{U}_r\end{bmatrix}$,
                   $\mathbf{\Sigma}[k]\!\!=\!\!\text{blkdiag}(|\mathbf{P}[k]|,\mathbf{0}_{(N_r\!-\!N_{\rm cl}N_{\rm ray})\times(N_t\!-\!N_{\rm cl}N_{\rm ray})})$, and
                   $\mathbf{V}[k]=\text{blkdiag}(e^{j\angle\mathbf{P}[k]},\mathbf{I}_{N_t-N_{\rm cl}N_{\rm ray}})\begin{bmatrix}
                                                               {\mathbf{A}}_t & \mathbf{V}_t
                                                             \end{bmatrix}$,
where $\mathbf{U}_r\in\mathbb{C}^{N_r\times(N_r-N_{\rm cl}N_{\rm ray})}$ and
$\mathbf{V}_t\in\mathbb{C}^{N_t\times(N_t-N_{\rm cl}N_{\rm ray})}$
are semi-unitary matrices respectively satisfying $Col\mathbf{U}_t\!\!=\!\!(Col\mathbf{A}_t)^\bot$ and $Col\mathbf{U}_t\!=(Col\mathbf{A}_t)^\bot$,
${\bf{P}}[k]\!\!=\!\!\!\sqrt {\frac{{{N_t}{N_r}}}{{{N_{{\rm{cl}}}}{N_{{\rm{ray}}}}}}} {\bf{G\tilde P}}[k]$,
${\bf{\tilde P}}[k]\!\!=\!{\rm{diag}}({e^{ - j2\pi {\tau _{1,1}}k\!/\!KT_s}},{e^{ - j2\pi{\tau _{1,2}}k\!/\!KT_s}}, \cdots ,{e^{ - j2\pi{\tau _{{N_{{\rm{cl}}}},{N_{{\rm{ray}}}}}}k\!/\!KT_s}})$, and
${\bf{G}} = {\rm{diag}}({\alpha _{1,1}},{\alpha _{1,2}}, \cdots ,{\alpha _{{N_{{\rm{cl}}}},{N_{{\rm{ray}}}}}})$.
Define the matrix consisting of the steering vectors associated with the first $N_s$ largest singular values as
$\mathbf{A}_{t,{\rm max\_}{N_s}}=[\tilde{\mathbf{A}}_t]_{:,1:N_s}$, we have ${\mathbf{F}_{\rm FD}^{\rm opt}}[1]=\cdots={\mathbf{F}_{\rm FD}^{\rm opt}}[K]=\mathbf{A}_{t,{\rm max\_}{N_s}}$.
Therefore, $\mathbf{R}_F=\mathbf{F}\mathbf{F}^H=K\mathbf{A}_{t,{\rm max}}\mathbf{A}_{t,{\rm max}}^H$, that is to say,
$\mathbf{R}_F$ only depends on the steering vectors associated with the first $N_s$ largest path gains.
\end{proof}
\vspace*{-5mm}
\subsection{Antenna Grouping for AS on Hybrid Combiner}
Furthermore, we consider the antenna grouping for AS at the RX. To decouple $\mathbf{w}_{{\rm RF},\mathcal{T}_{l}}$ and $\{\!\mathbf{w}_{{\rm BB},\mathcal{T}_{l}}\![k]\!\}_{k=1}^K$ in (\ref{min_f}), we rewrite the problem (\ref{MSE}) as
\begin{equation}\label{MSE_32}
\begin{aligned}
	\min\limits_{{{{\bf{w}}_{{\rm{RF}},{{\cal T}_{l}}}},{{\bf{w}}_{{\rm{BB}},{{\cal T}_{l}}}}[k],\forall {l},k}}&\sum\nolimits_{k=1}^{K}\sum\nolimits_{{l}=1}^{N_r^{\rm RF}}||\mathbb{E}[\mathbf{y}_{\mathcal{T}_{l}}[k]
\mathbf{y}_{\mathcal{T}_{l}}^H[k]]^{\frac{1}{2}}
\\&(\mathbf{W}_{{\rm opt},\mathcal{T}_{l}}[k]
-\mathbf{w}_{{\rm RF},\mathcal{T}_{l}}\mathbf{w}_{{\rm BB},\mathcal{T}_{l}}^H[k])||_F^2
	\\&\text{s.t. }\mathbf{w}_{{\rm RF},\mathcal{T}_{l}}\in\mathcal{W}_{{\rm RF},\mathcal{T}_{l}},
\end{aligned}
\end{equation}
where $\mathcal{W}_{{\rm RF},\mathcal{T}_{l}}$ is a set of feasible RF combiner satisfying the CMC. Given the RF combiner $\mathbf{w}_{{\rm RF},\mathcal{T}_{l}}$, the objective function (\ref{MSE_32}) can be rewritten as
\begin{equation}\label{MSE_33}
\begin{aligned}
	\min\limits_{\{\mathbf{w}_{{\rm BB},\mathcal{T}_{l}}[k]\}_{k=1}^K}&\sum\nolimits_{k=1}^{K}||\mathbb{E}[\mathbf{y}_{\mathcal{T}_{l}}[k]
\mathbf{y}_{\mathcal{T}_{l}}^H[k]]^{\frac{1}{2}}(\mathbf{W}_{{\rm opt},\mathcal{T}_{l}}[k]
\\&-\mathbf{w}_{{\rm RF},\mathcal{T}_{l}}\mathbf{w}_{{\rm BB},\mathcal{T}_{l}}^H[k])||_F^2,
\end{aligned}
\end{equation}
for $1\leq {l}\leq N_r^{\rm RF}$. For (\ref{MSE_33}), the optimal baseband combiner can be obtained by weighted LS as
\begin{equation}\label{w_bb_row}
\begin{aligned}
\mathbf{w}_{{\rm BB},\mathcal{T}_{l}}^H[k]=&(\mathbf{w}_{{\rm RF},\mathcal{T}_{l}}^H\mathbb{E}[\mathbf{y}_{\mathcal{T}_{l}}[k]
\mathbf{y}_{\mathcal{T}_{l}}^H[k]]\mathbf{w}_{{\rm RF},\mathcal{T}_{l}})^{-1}
\mathbf{w}_{{\rm RF},\mathcal{T}_{l}}^H
\\\times&\mathbb{E}[\mathbf{y}_{\mathcal{T}_{l}}[k]
\mathbf{y}_{\mathcal{T}_{l}}^H[k]]\mathbf{W}_{{\rm opt},\mathcal{T}_{l}}[k].
\end{aligned}
\end{equation}
By substituting (\ref{w_bb_row}) into (\ref{min_f}), we obtain
\begin{equation}\label{min_dy}
\begin{aligned}
\sum_{k=1}^{K}&\!\sum_{{l}=1}^{N_r^{\rm RF}}\!\text{Tr}(\mathbf{W}_{{\rm opt},\mathcal{T}_{l}}^H[k]\mathbb{E}[\mathbf{y}_{\mathcal{T}_{l}}[k]
\mathbf{y}_{\mathcal{T}_{l}}^H[k]]\mathbf{W}_{{\rm opt},\mathcal{T}_{l}}[k]
\\-&\!\!\mathbf{W}_{{\rm opt},\mathcal{T}_{l}}^H\![k]\mathbb{E}[\mathbf{y}_{\mathcal{T}_{l}}\![k]
\mathbf{y}_{\mathcal{T}_{l}}^H\![k]]
\mathbf{w}_{{\rm RF},\mathcal{T}_{l}}(\!\mathbf{w}_{{\rm RF},\mathcal{T}_{l}}^H\mathbb{E}[\mathbf{y}_{\mathcal{T}_{l}}\![k]
\mathbf{y}_{\mathcal{T}_{l}}^H[k]]
\\\times&\mathbf{w}_{{\rm RF},\mathcal{T}_{l}})^{-1}\mathbf{w}_{{\rm RF},\mathcal{T}_{l}}^H\mathbb{E}[\mathbf{y}_{\mathcal{T}_{l}}[k]
\mathbf{y}_{\mathcal{T}_{l}}^H[k]]\mathbf{W}_{{\rm opt},\mathcal{T}_{l}}[k]).
\end{aligned}
\end{equation}
Note that minimizing (\ref{min_dy}) is equivalent to maximizing the following function
\begin{equation}\label{max_dy}
\begin{aligned}
&\sum_{k=1}^{K}\!\sum_{{l}=1}^{N_r^{\rm RF}}\!\text{Tr}(\mathbf{W}^H_{{\rm opt},\mathcal{T}_{l}}\![k]\mathbb{E}[\mathbf{y}_{\mathcal{T}_{l}}\![k]
\mathbf{y}^H_{\mathcal{T}_{l}}\![k]]\mathbf{w}_{{\rm RF},\mathcal{T}_{l}}
(\!\mathbf{w}^H_{{\rm RF},\mathcal{T}_{l}}
\\&\times\mathbb{E}[\mathbf{y}_{\mathcal{T}_{l}}\![k]
\mathbf{y}^H_{\mathcal{T}_{l}}\![k]]\mathbf{w}_{{\rm RF},\mathcal{T}_{l}}\!)^{-1}
\mathbf{w}^H_{{\rm RF},\mathcal{T}_{l}}
\mathbb{E}[\mathbf{y}_{\mathcal{T}_{l}}\![k]
\mathbf{y}^H_{\mathcal{T}_{l}}\![k]]\mathbf{W}_{{\rm opt},\mathcal{T}_{l}}\![k]\!).
\end{aligned}
\end{equation}
Furthermore, we consider
\begin{equation}\label{w_ass}
\begin{aligned}
\mathbb{E}[\mathbf{y}_{\mathcal{T}_{l}}[1]
\mathbf{y}_{\mathcal{T}_{l}}^H[1]]\approx\cdots\approx\mathbb{E}[\mathbf{y}_{\mathcal{T}_{l}}[K]
\mathbf{y}_{\mathcal{T}_{l}}^H[K]]&\approx\mathbb{E}[\mathbf{y}_{\mathcal{T}_{l}}
\mathbf{y}_{\mathcal{T}_{l}}^H],
\\&1\leq {l}\leq N_r^{\rm RF}.
\end{aligned}
\end{equation}
Note that the approximation error in (\ref{w_ass}) can be ignored in large antennas regime as proven in Appendix D. By substituting (\ref{w_ass}) into (\ref{max_dy}), we can obtain
\begin{equation}\label{w_lambda}
\begin{aligned}
&\sum_{k=1}^{K}\!\!\sum_{{l}=1}^{N_r^{\rm RF}}\!\!\text{Tr}(\mathbf{W}_{{\rm opt},\mathcal{T}_{l}}^H[k]\mathbb{E}[\mathbf{y}_{\mathcal{T}_{l}}
\mathbf{y}_{\mathcal{T}_{l}}^H]\mathbf{w}_{{\rm RF},\mathcal{T}_{l}}
(\mathbf{w}_{{\rm RF},\mathcal{T}_{l}}^H\mathbb{E}[\mathbf{y}_{\mathcal{T}_{l}}
\mathbf{y}_{\mathcal{T}_r}^H]
\\&\times\mathbf{w}_{{\rm RF},\mathcal{T}_{l}})^{-1}\mathbf{w}_{{\rm RF},\mathcal{T}_{l}}^H
\mathbb{E}[\mathbf{y}_{\mathcal{T}_{l}}
\mathbf{y}_{\mathcal{T}_{l}}^H]\mathbf{W}_{{\rm opt},\mathcal{T}_{l}}[k])
\\=&\sum_{k=1}^{K}\!\!\sum_{{l}=1}^{N_r^{\rm RF}}\frac{||\mathbf{W}_{{\rm opt},\mathcal{T}_{l}}^H\![k]\mathbb{E}[\mathbf{y}_{\mathcal{T}_{l}}\!
	\mathbf{y}_{\mathcal{T}_{l}}^H]\mathbf{w}_{{\rm RF},\mathcal{T}_{l}}||_2^2}{||\mathbb{E}[\mathbf{y}_{\mathcal{T}_{l}}
	\mathbf{y}_{\mathcal{T}_{l}}^H]^{\frac{1}{2}}\mathbf{w}_{{\rm RF},\mathcal{T}_{l}}||_2^2}
\\=&\sum_{{l}=1}^{N_r^{\rm RF}}\!\!\frac{\mathbf{w}_{{\rm RF},\mathcal{T}_{l}}^H\mathbb{E}[\mathbf{y}_{\mathcal{T}_{l}}
	\mathbf{y}_{\mathcal{T}_{l}}^H]^{\frac{1}{2}}\mathbf{W}_{{\rm opt},\mathcal{T}_{l}}\![k]\mathbf{W}_{{\rm opt},\mathcal{T}_{l}}^H\![k]\mathbb{E}[\mathbf{y}_{\mathcal{T}_{l}}
	\mathbf{y}_{\mathcal{T}_{l}}^H]^{\frac{1}{2}}\mathbf{w}_{{\rm RF},\mathcal{T}_{l}}}{||\mathbb{E}[\mathbf{y}_{\mathcal{T}_r}
	\mathbf{y}_{\mathcal{T}_{l}}^H]^{\frac{1}{2}}\mathbf{w}_{{\rm RF},\mathcal{T}_{l}}||_2^2}.
\end{aligned}
\end{equation}
The maximum of (\ref{w_lambda}) is
$\sum_{{l}=1}^{N_r^{\rm RF}}\lambda_1^2(\mathbf{W}_{\mathcal{T}_{l}})$ when $\mathbf{w}_{{\rm RF},\mathcal{T}_l}=\frac{1}{{\sqrt {{N_r^{\rm sub}}} }}{\rm exp}(j\angle(\mathbf{u}_{\mathcal{T}_l,1}))$ as discussed in Section \ref{HDCD}. Therefore, the optimization of $\{\mathcal{T}_{l}\}_{{l}=1}^{N_r^{\rm RF}}$ can be formulated as\footnote{Note that (\ref{w_pro_dy}) is more difficult to reach than (\ref{pro_dy}) because of the existence of the weight matrix $\mathbb{E}[\mathbf{y}_{\mathcal{T}_{l}}[k]
\mathbf{y}_{\mathcal{T}_{l}}^H[k]]^{\frac{1}{2}}$.}
\begin{equation}\label{w_pro_dy}
\setlength{\abovedisplayskip}{10pt}
\setlength{\belowdisplayskip}{10pt}
\begin{aligned}
\max\limits_{\mathcal{T}_1,\cdots,\mathcal{T}_{N_r^{\rm RF}}}&\sum\nolimits_{{l}=1}^{N_r^{\rm RF}}\lambda_1^2(\mathbf{W}_{\mathcal{T}_{l}})
\\&\text{s.t. }\cup_{{l}=1}^{N_r^{\rm RF}}\mathcal{T}_{l}=\{1,\cdots,N_r\},
\\&\mathcal{T}_i\cap\mathcal{T}_j=\emptyset\text{ for }i\not=j,\ \mathcal{T}_{l}\not=\emptyset\ \forall {l}.
\end{aligned}
\end{equation}

Similar to (\ref{lambda_approx}), we further obtain $\lambda_1^2(\mathbf{W}_{\mathcal{T}_{l}})=\lambda_1(\mathbf{R}_{\mathcal{T}_{l}})$ and
\begin{equation}
\setlength{\abovedisplayskip}{10pt}
\setlength{\belowdisplayskip}{10pt}
\begin{aligned}
\lambda_1(\mathbf{R}_{\mathcal{T}_{l}})&\approx \frac{1}{\text{card}(\mathcal{T}_{l})}\sum\nolimits_{i=1}^{\text{card}(\mathcal{T}_{l})}\sum\nolimits_{j=1}^{\text{card}(\mathcal{T}_{l})}|[\mathbf{R}_{\mathcal{T}_{l}}]_{i,j}|
\\&=\frac{1}{\text{card}(\mathcal{T}_{l})}\sum_{i\in\mathcal{T}_{l} }\sum_{j\in\mathcal{T}_{l}}|[\mathbf{R}_W]_{i,j}|,
\end{aligned}
\end{equation}
where $\mathbf{R}_{\mathcal{T}_{l}}=\mathbf{W}_{\mathcal{T}_{l}}\mathbf{W}_{\mathcal{T}_{l}}^H$ and $\mathbf{R}_W=\mathbf{W}\mathbf{W}^H$. Finally, the antenna grouping $\{\mathcal{T}_{l}\}_{{l}=1}^{N_r^{\rm RF}}$ at the RX can be obtained using Algorithm \ref{alg:AHC} by replacing the input
parameters $\mathbf{R}_F$, $N_t$, and $N_t^{\rm RF}$ for the TX
with $\mathbf{R}_W$, $N_r$, and $N_r^{\rm RF}$ for the RX.
By contrast, the antenna grouping solution in \cite{170825} considers the hybrid transmit array but the fully-digital receive array, and how to group
the antennas for hybrid receive array is not explicitly specified.


\vspace*{-4mm}
\section{Performance Evaluation}
\begin{figure}[t]
    	\centering
	\includegraphics[width=1\columnwidth, keepaspectratio]{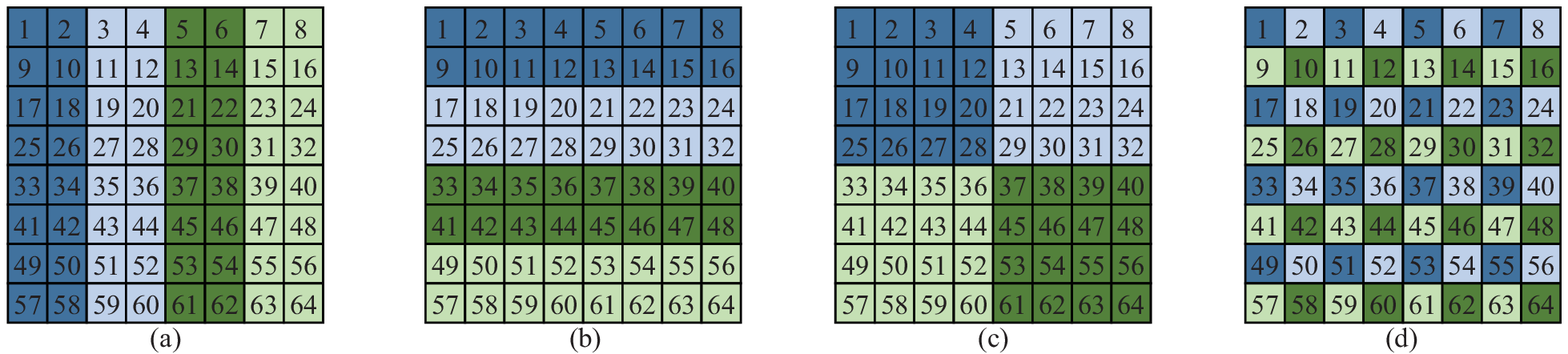}
\vspace*{-12mm}
	\caption{Four types of typical FS patterns: (a) Vertical type; (b) Horizontal type; (c) Squared type; (d) Interlaced type.}\label{FS}
\vspace*{-8mm}
\end{figure}
In this section, we will investigate the SE, EE, and BER performance of the proposed hybrid precoder/combiner design.
In simulations, the pulse shaping filter is $p(\tau ){\rm{ = }}\delta \left( \tau  \right)$, the length of cyclic prefix is $D=64$,
and the number of subcarriers is $K=512$.
The carrier frequency is $f_c=30$ GHz, the bandwidth is $B_s=500$ MHz, the path delay is uniformly distributed in $[0,DT_s]$ ($T_s=1/B_s$ is the symbol period),
the number of clusters is $N_{\rm cl}=8$, the azimuth/elevation angle spread of each cluster is $7.5^{\circ}$ for both
AoD and AoA, and there are $N_{\rm ray}=10$ rays within each cluster.
Both the TX and RX adopt $8\times8$ hybrid UPA with $N_t^{\rm RF}=N_r^{\rm RF}=4$ unless otherwise stated. The number of data stream is $N_s=3$.
Four types of classical FS patterns  shown in Fig. \ref{FS} will be investigated, where the antennas with the same color are connected to the same RF chain for constituting a subarray. The channel estimation overhead is not considered in the evaluation.

State-of-the-art solutions will be compared as benchmarks. 1) \textbf{Optimal fully-digital} scheme considers the fully-digital MIMO system, where the SVD-based precoder/combiner is adopted as the performance upper bound. 2) Since the OMP-based spatially sparse precoding \cite{OMP} is proposed for narrowband channels, we consider a broadband version that can simultaneously design the RF precoder/combiner for all subcarriers, denoted by \textbf{simultaneous OMP (SOMP)} scheme. 3) \textbf{Discrete Fourier transform (DFT) codebook} scheme~\cite{170825} designs the RF precoder/combiner from the DFT codebook instead of steering vectors codebook in SOMP scheme~\cite{DFT_cb}. 4) \textbf{Covariance eigenvalue decomposition (EVD)} scheme
focuses on the hybrid precoder design based on the EVD of the channel covariance matrix $\mathbf{R}_{\rm cov}=\frac{1}{K}\sum_{k=1}^{K}\mathbf{H}^H[k]\mathbf{H}[k]$ and assumes the fully-digital receive array.
To extend this scheme to hybrid receive arrays, the RF combiner is designed using the same processing as the RF precoder by replacing $\mathbf{R}_{\rm cov}$ with $\mathbf{\tilde R}_{\rm cov}=\frac{1}{K}\sum_{k=1}^{K}\mathbf{H}[k]\mathbf{H}^H[k]$, and the digital combiner is designed based on MMSE criterion.

\subsection{SE and BER Performance Evaluation}
\begin{figure}[tb]
    \centering \subfigure{\includegraphics[width=\columnwidth, keepaspectratio]{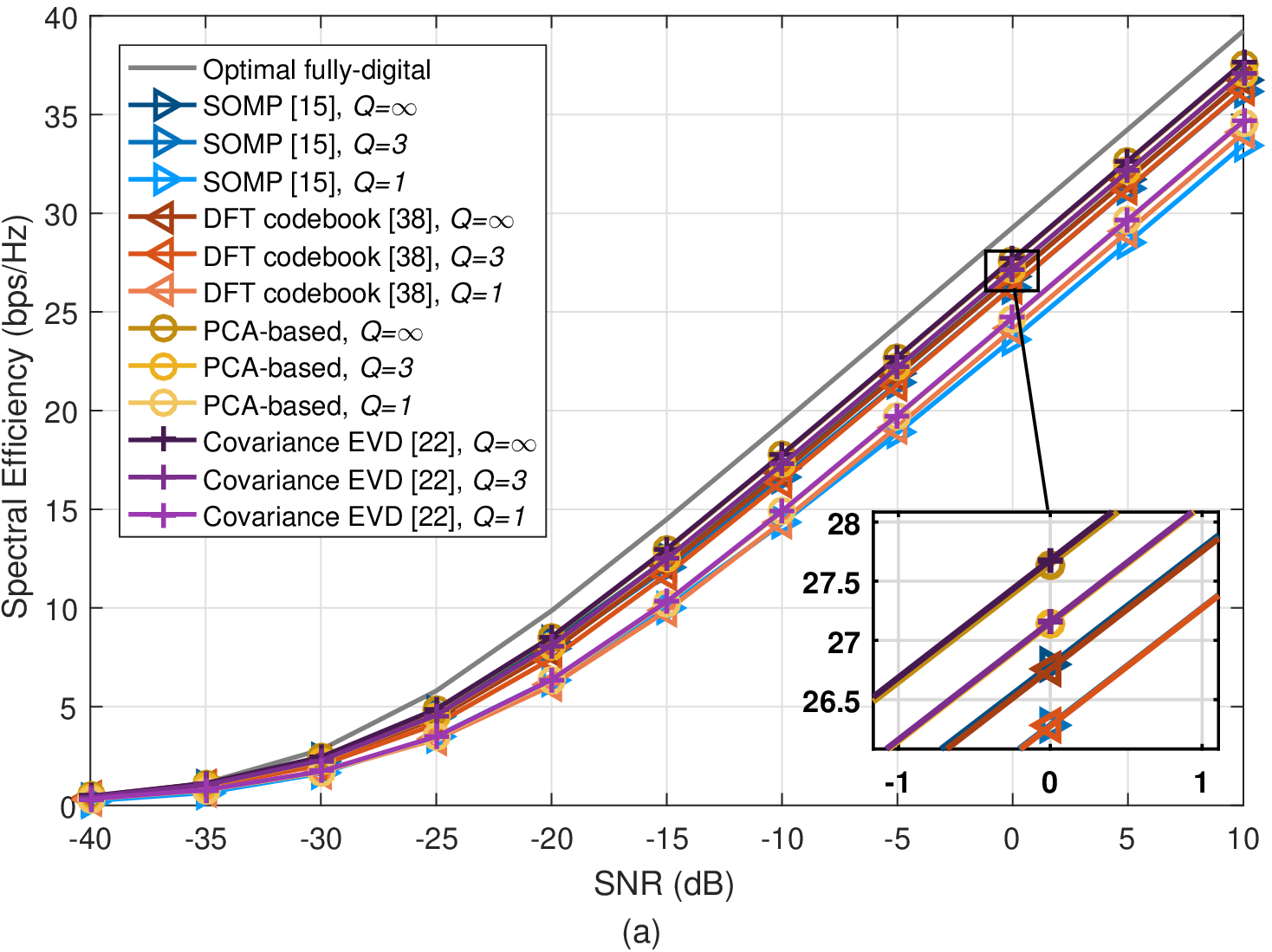}}\\
    \subfigure{\includegraphics[width=\columnwidth, keepaspectratio]{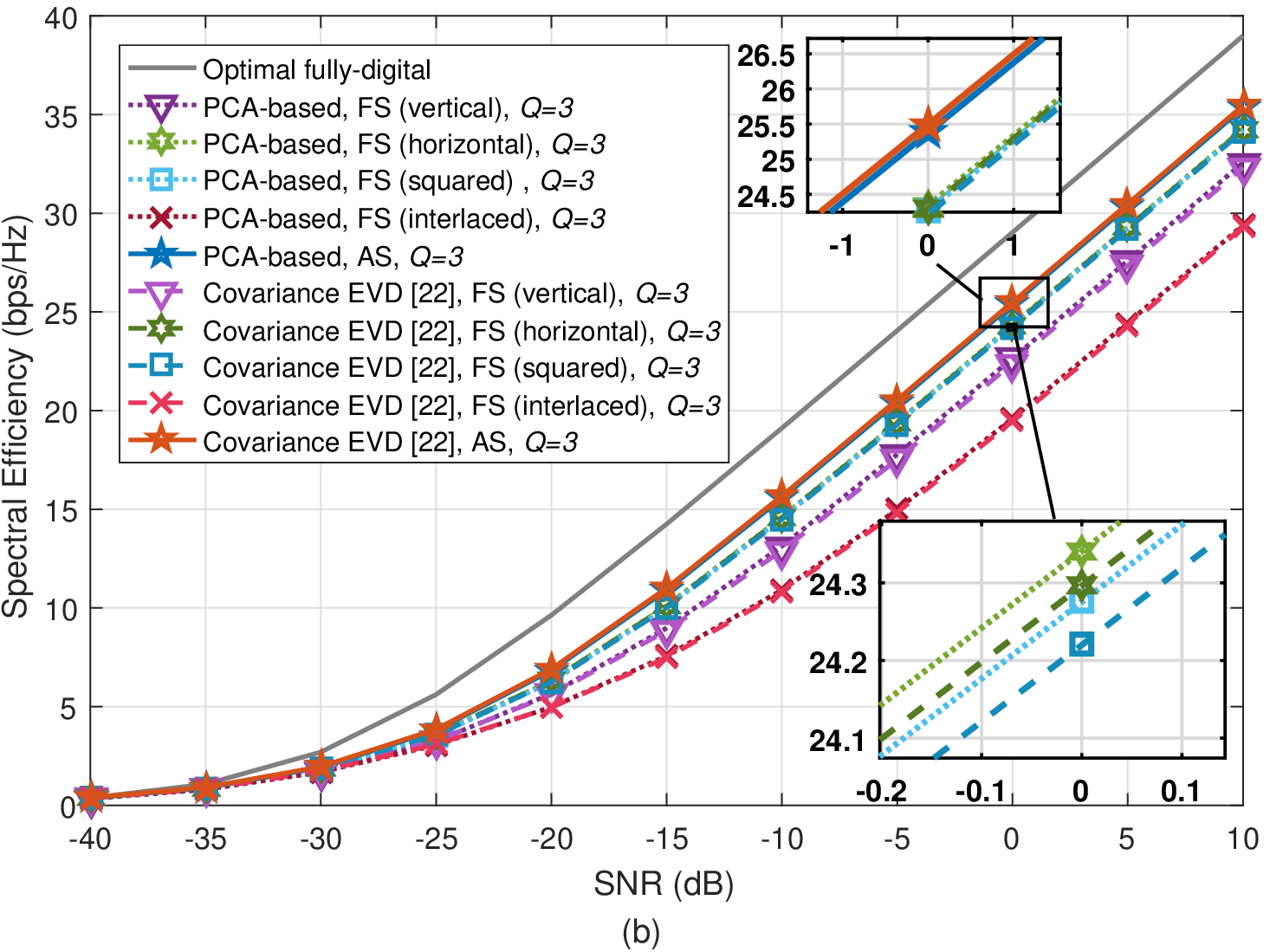}}\\
	\caption{SE performance comparison of different hybrid precoder schemes, where the hybrid transmit array and fully-digital receive array are considered: (a) FCA with $Q=\infty$, $Q=3$, and $Q=1$; (b) PCS with $Q=3$.}\label{fig_pre}
\end{figure}
In Fig. \ref{fig_pre}, we compare the SE performance of different hybrid precoder schemes,
where the system adopts the hybrid transmit array and fully-digital receive array.
For the hybrid transmit array, FCA and PCS are investigated in Fig. \ref{fig_pre} (a) and (b), respectively.
Besides, the practical phase shifters with resolutions $Q=3$ and $Q=1$ are considered, and the ideal
phase shifters without quantization, denoted by $Q=\infty$, is also compared to examine the impact of $Q$ \cite{quan}.
For the FCA, Fig. \ref{fig_pre} (a) shows that the proposed PCA-based solution and the covariance EVD-based solution have the similar performance,
and our proposed solution has the considerable superiority over conventional DFT codebook-based and SOMP-based solutions.
This is because that our proposed solution exploits the principal components of the common column
space of $\left\{ {{\bf{H}}[k]} \right\}_{k = 1}^K$ to design the analog precoder $\mathbf{F}_{\rm RF}$.
By contrast, the analog precoders designed by the SOMP-based and DFT codebook-based solutions are
based on the codebooks, whose entries are limited to the steering vector form. This kind of inflexible RF precoder designs
would lead to the poor performance.
Additionally, for the proposed hybrid precoder scheme,
we can observe the performance gap between adopting $Q=\infty$ and $Q=3$ is negligible,
 but that between adopting $Q=3$ and $Q=1$ is around 3 (bps/Hz) at high SNR conditions.
As for the PCS, we only investigate the SE performance under $Q=3$ in Fig. \ref{fig_pre} (b),
which manifests that our proposed scheme is as good as the state-of-the-art covariance EVD-based hybrid precoder design for different FS patterns and AS.
Finally,
the performance gain achieved by the proposed antenna grouping algorithm for AS over several typical FS patterns is more than 1 (bps/Hz) at high SNR conditions.
So the advantage by using AS is self-evident.
\begin{figure}[tb]
    \centering \subfigure{\includegraphics[width=\columnwidth, keepaspectratio]{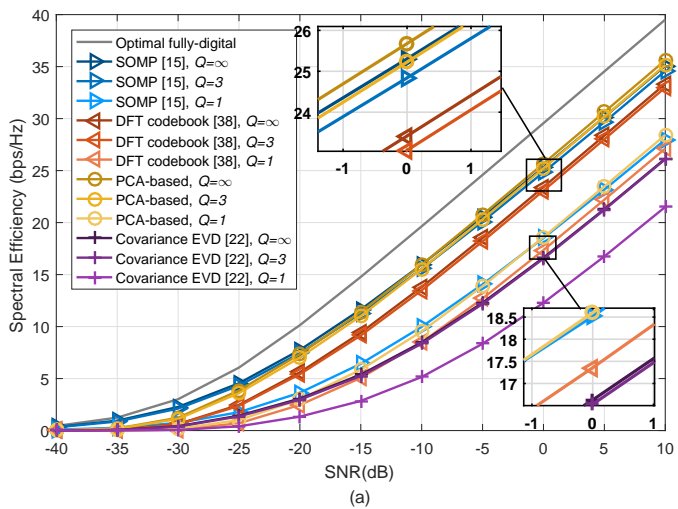}}\\
    \subfigure{\includegraphics[width=\columnwidth, keepaspectratio]{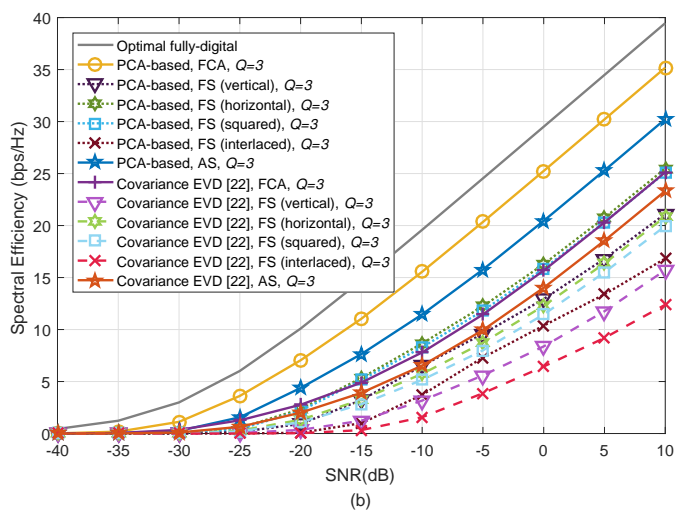}}
	\caption{SE performance comparison of different hybrid precoder/combiner schemes, where both TX and RX employ the hybrid array: (a) FCA with $Q=\infty$, $Q=3$, and $Q=1$; (b) PCS with $Q=3$.}\label{fig_SE}
\end{figure}

Fig. \ref{fig_SE} examines the SE performance of different hybrid precoder/combiner schemes,
where both the TX and RX adopt the hybrid arrays, and the phase shifters under different $Q$ are considered.
From Fig. \ref{fig_SE} (a),
we can observe that the proposed solution is superior to
other three state-of-art solutions,
and the performance of covariance EVD-based scheme becomes poor. When expanding to RX with hybrid combiner, the covariance EVD-based scheme  performs poorly because \cite{170825} initially considers the fully-digital combiner and its extension to hybrid combiner by using the channel reciprocity suffers from a large performance loss. While in our proposed scheme, both the hybrid precoder and hybrid combiner are jointly designed based on PCA framework so the better SE performance can be achieved.
Especially,
for the proposed solution, the performance gap between adopting $Q=\infty$ and $Q=3$ is negligible,
 but that between adopting $Q=3$ and $Q=1$ is larger than 5 (bps/Hz) at high SNR conditions.
%
For the PCS under $Q=3$, Fig. \ref{fig_SE} (b) shows the superiority of
our scheme over state-of-the-art covariance EVD-based scheme for different FS
patterns and AS. This is because the antenna grouping scheme in \cite{170825} is based on the greedy search so that the local optimal solution may be acquired.
This could lead to the extremely unbalanced antenna grouping case that no antenna is assigned to one RF chain, and thus the SE performance is degraded.
By contrast, the proposed antenna grouping algorithm introduces the mutual correlation metric (\ref{mul_cor}),
which can effectively avoid this issue. Besides, Fig. 4 (b) also indicates that at least 5 (bps/Hz) SE gains can be achieved by the AS over the FS,
since the proposed shared-AHC algorithm can group the antennas adapted to the CSI for the enhanced SE performance. 

In Fig. \ref{fig_BER}, we compare the BER performance of different hybrid precoder/combiner schemes, where the same channel parameters as considered in Fig. \ref{fig_SE} are used, and 16 QAM is adopted for transmission.
From Fig. \ref{fig_BER}, similar
conclusions to those observed for Fig. \ref{fig_SE} can be obtained.
In particular, it can be seen that the AS by using our proposed antenna grouping scheme significantly outperforms four typical FS patterns.
\begin{figure}[tb]
    \centering \subfigure{\includegraphics[width=\columnwidth, keepaspectratio]{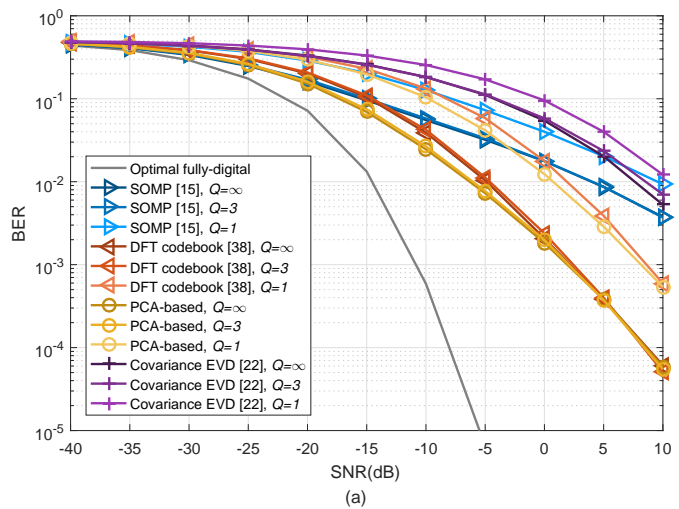}}\\
        \subfigure{\includegraphics[width=\columnwidth, keepaspectratio]{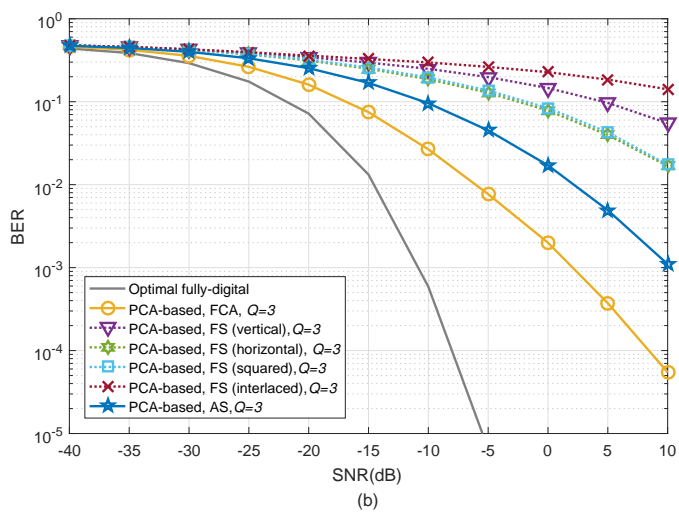}}
	\caption{BER performance comparison of different hybrid precoder/combiner schemes,  where both TX and RX employ
the hybrid array: (a) FCA with $Q=\infty$, $Q=3$ and $Q=1$; (b) PCS with typical quantization $Q=3$.}\label{fig_BER}
\end{figure}

\subsection{EE Performance Evaluation}
{To further explore the performance of different array structure, we analyze the EE performance of different array structure in this subsection.} The EE metric is defined as $\eta =RB_s/P$, where $B_s$ is the transmission bandwidth, $R$ is the SE in (\ref{SE}), and $P$ is the total power consumption of the antenna arrays.
Here $P$ depends on the following two factors.
Firstly, FCA and PCS have the different power consumption due to the different numbers of phase shifters required, where
the FCA requires $N_{\rm PS}=N_tN_t^{\rm RF}=64\times 4$ phase shifters, but the PCS only requires $N_{\rm PS}=N_t=64$ phase shifters.
Secondly, passive and active arrays have different power consumption, since they have different architectures. From Fig. \ref{fig_ant_str} (a) and (b), 
we can observe that both of them consist of analog-digital/digital-analog convertors (AD/DA), low noise amplifiers (LNA), power amplifiers (PA), local oscillators (LO), duplexers or switches (DPX/S)\footnote{DPX/S is used to transmit/receive signals
by sharing the same antenna hardware. When working in transmitting (receiving) mode, the DPX/S ensures
the transmit (receive) signal delivered from PA (antennas) to antennas (LNA). 
So the power consumption of DPX/S
can be ignored since the switching duration between the transmitting mode and receiving mode is neglected. Besides, the power consumption of switches when the antenna grouping patterns for AS changes is also ignored.},
and mixers etc.
However, they have the different numbers of PAs/LNAs. For passive antennas, the number of PAs/LNAs is the same as that of RF chains. While for active antennas, the number of PAs/LNAs is the same as that of antennas. This difference can lead to the different power consumption. 
The power values of electronic components that dominate the power consumption are listed as follows:
$P_{\rm PS}=15$ mW for 3-bit phase shifter \cite{27}, $P_{\rm AD}=P_{\rm DA}=200$ mW for AD/DA \cite{27}, $P_{\rm mix}=39$ mW for
mixer \cite{39}, $P_{\rm PA}=138$ mW for PA \cite{36},
$P_{\rm LNA}=39$ mW for LNA \cite{36}, $P_{\rm LO}=5$ mW for LO \cite{27}, and $P_{\rm syn}=50$ mW for synchronizer \cite{syn}.
\begin{figure*}[t]
	\centering
	\includegraphics[width=1.8\columnwidth, keepaspectratio]{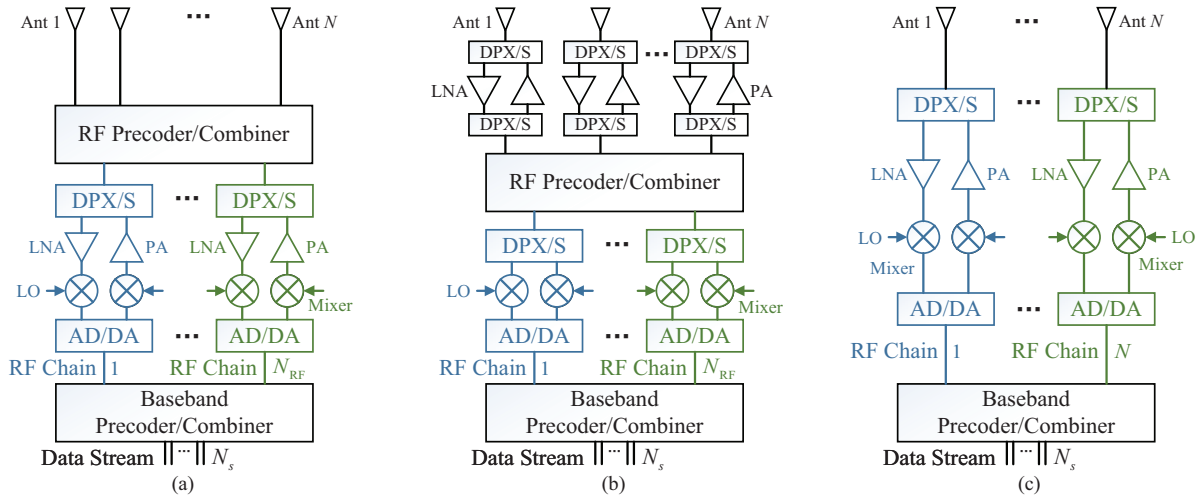}
	\caption{
(a) Hybrid MIMO with passive antennas; (b) Hybrid MIMO with active antennas; (c) Fully-digital MIMO array~\cite{ant_str}.}
\label{fig_ant_str}
\end{figure*}
Therefore, the power consumption for FCA and PCS adopting active antenna architecture can be respectively calculated as
\begin{equation}\nonumber
\begin{aligned}
P_{\rm FCA}^p\!\!&=\!N_t^{\rm RF}\!(P_{\rm DA}\!+\!P_{\rm mix}\!+\!P_{\rm LO}\!+\!P_{\rm PA})\!+\!N_tN_t^{\rm RF}P_{\rm PS}+\!2P_{\rm syn}
\\&\!+\!N_r^{\rm RF}\!(P_{\rm AD}\!+\!P_{\rm mix}\!+\!P_{\rm LO}\!+\!P_{\rm LNA})
\!+\!N_rN_r^{\rm RF}P_{\rm PS}\!,
\end{aligned}
\end{equation}
\begin{equation}\nonumber
\begin{aligned}
P_{\rm PCS}^p\!&=\!N_t^{\rm RF}(P_{\rm DA}\!+\!P_{\rm mix}\!+\!P_{\rm LO}\!+\!P_{\rm PA})\!+\!N_tP_{\rm PS}+\!2P_{\rm syn}
\\&\!+\!N_r^{\rm RF}(P_{\rm AD}\!+\!P_{\rm mix}\!+\!P_{\rm LO}\!+\!P_{\rm LNA})
\!+\!N_rP_{\rm PS}\!.
\end{aligned}
\end{equation}
Moreover, the power consumption for FCA and PCS adopting active antenna architecture are
\begin{equation}\nonumber
\begin{aligned}
P_{\rm FCA}^a\!\!&=\!N_t^{\rm RF}\!(P_{\rm DA}\!\!+\!P_{\rm mix}\!\!+\!P_{\rm LO}\!)\!+\!N_t\!N_t^{\rm RF}\!P_{\rm PS}\!+\!N_t\!P_{\rm PA}\!\!
+\!2P_{\rm syn}
\\&\!+\!N_r^{\rm RF}\!(P_{\rm AD}\!\!+\!P_{\rm mix}\!\!+\!P_{\rm LO}\!)\!+\!N_r\!N_r^{\rm RF}\!P_{\rm PS}\!+\!N_r\!P_{\rm LNA},
\end{aligned}
\end{equation}
\begin{equation}\nonumber
\begin{aligned}
P_{\rm PCS}^a\!&=\!N_t^{\rm RF}(P_{\rm DA}\!+\!P_{\rm mix}\!+\!P_{\rm LO})\!+\!N_tP_{\rm PS}\!\!+\!N_tP_{\rm PA}
\!+\!2P_{\rm syn}
\\&\!+\!N_r^{\rm RF}(P_{\rm AD}\!+\!P_{\rm mix}\!+\!P_{\rm LO})\!+\!N_rP_{\rm PS}
\!+\!N_rP_{\rm LNA}.
\end{aligned}
\end{equation}
Besides, for fully-digital array (FDA) as shown in Fig. 2 (c), the power consumption is
\begin{equation}\nonumber
\begin{aligned}
P_{\rm FDA}&=N_t(P_{\rm PA}+P_{\rm DA}+P_{\rm mix}+P_{\rm LO})+2P_{\rm syn}
\\&+N_r(P_{\rm LNA}+P_{\rm AD}+P_{\rm mix}+P_{\rm LO}).
\end{aligned}
\end{equation}

\begin{figure}[tb]
    \centering \subfigure{\includegraphics[width=\columnwidth, keepaspectratio]{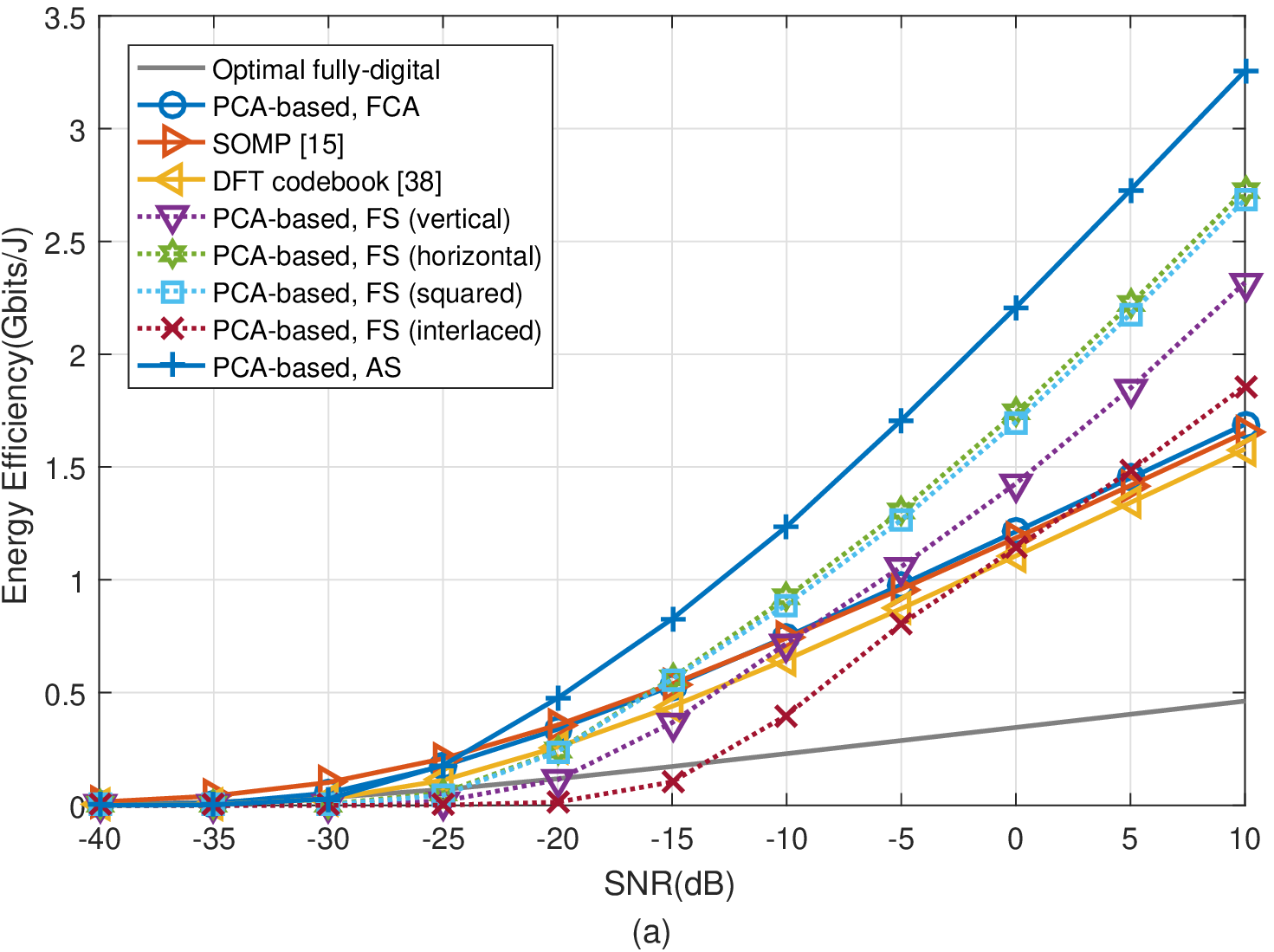}} \\ 
    \subfigure{\includegraphics[width=\columnwidth, keepaspectratio]{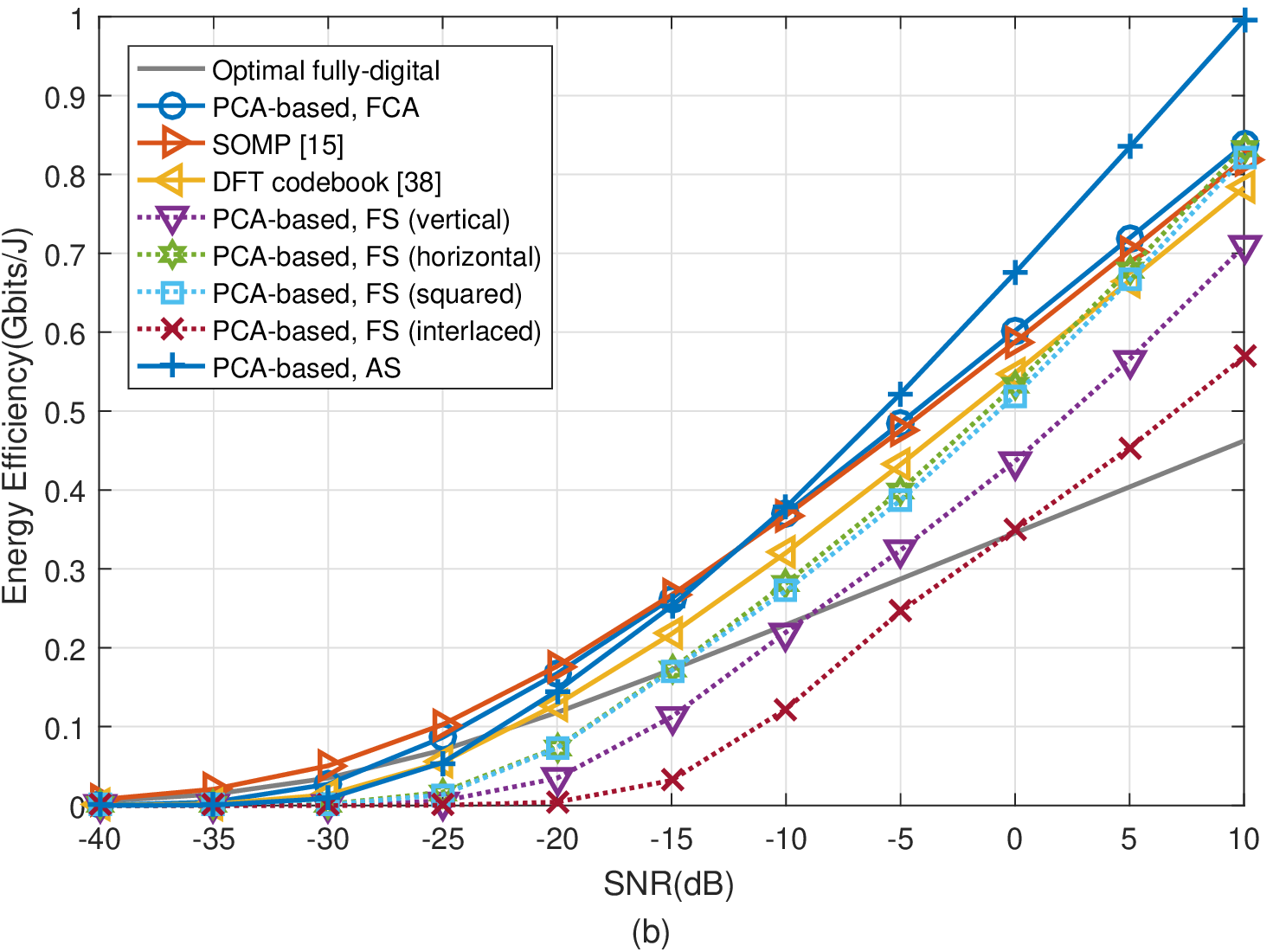}} 
	\caption{EE performance comparison of different hybrid precoder/combiner schemes based on different antenna architectures with $Q=3$: (a) Passive antenna; (b) Active antenna.}\label{fig_EE}
\end{figure}

In Fig. \ref{fig_EE}, we compare the EE performance of different hybrid precoder/combiner schemes under $Q=3$,
where passive and active antenna architectures are investigated in Fig. \ref{fig_EE} (a) and (b), respectively. 
For passive antenna architecture examined in Fig. \ref{fig_EE} (a),
the EE performance of PCS by using the proposed PCA-based
hybrid precoder/combiner scheme
outperforms that of FCA by using the proposed and other state-of-the-art schemes.
The reason is that PCS adopts a much smaller number of phase shifters than FCA, though the SE performance
achieved by PCS is inferior to that achieved by FCA.
Moreover, AS obviously outperforms the other FS patterns in SE, and it consumes the very similar power with the other FS patterns.
Therefore, AS
outperforms other four types of FS patterns in EE.
Note that the optimal fully-digital scheme has the worst EE performance,
since the numbers of power-consuming PAs, LNAs, ADs/DAs, mixers are
proportional to that of antennas.

For active antennas investigated in Fig. \ref{fig_EE} (b),
the EE advantage for different FS patterns by using the proposed hybrid precoder/combiner
scheme over the FCA with several typical hybrid precoder/combiner schemes and fully-digital array with the optimal precoder/combiner scheme is not obvious.
This is because the active antenna architecture
requires the power-hungry PAs/LNAs for each antenna.
Meanwhile, the advantage of the reduced power consumption benefiting from FS structure is greatly weakened by its disadvantage in SE performance
when compared with FCA.
Finally, the EE performance of AS with the proposed hybrid precoder/combiner scheme still has the considerable advantage
over the other schemes. This observation reveals the appealing
advantage of AS in practical situation when both the power consumption and SE should be well balanced.

\subsection{Computational Complexity and Robustness of The Proposed Shared-AHC Algorithm}

\begin{figure}[tb]
    \centering \subfigure{\includegraphics[width=\columnwidth, keepaspectratio]{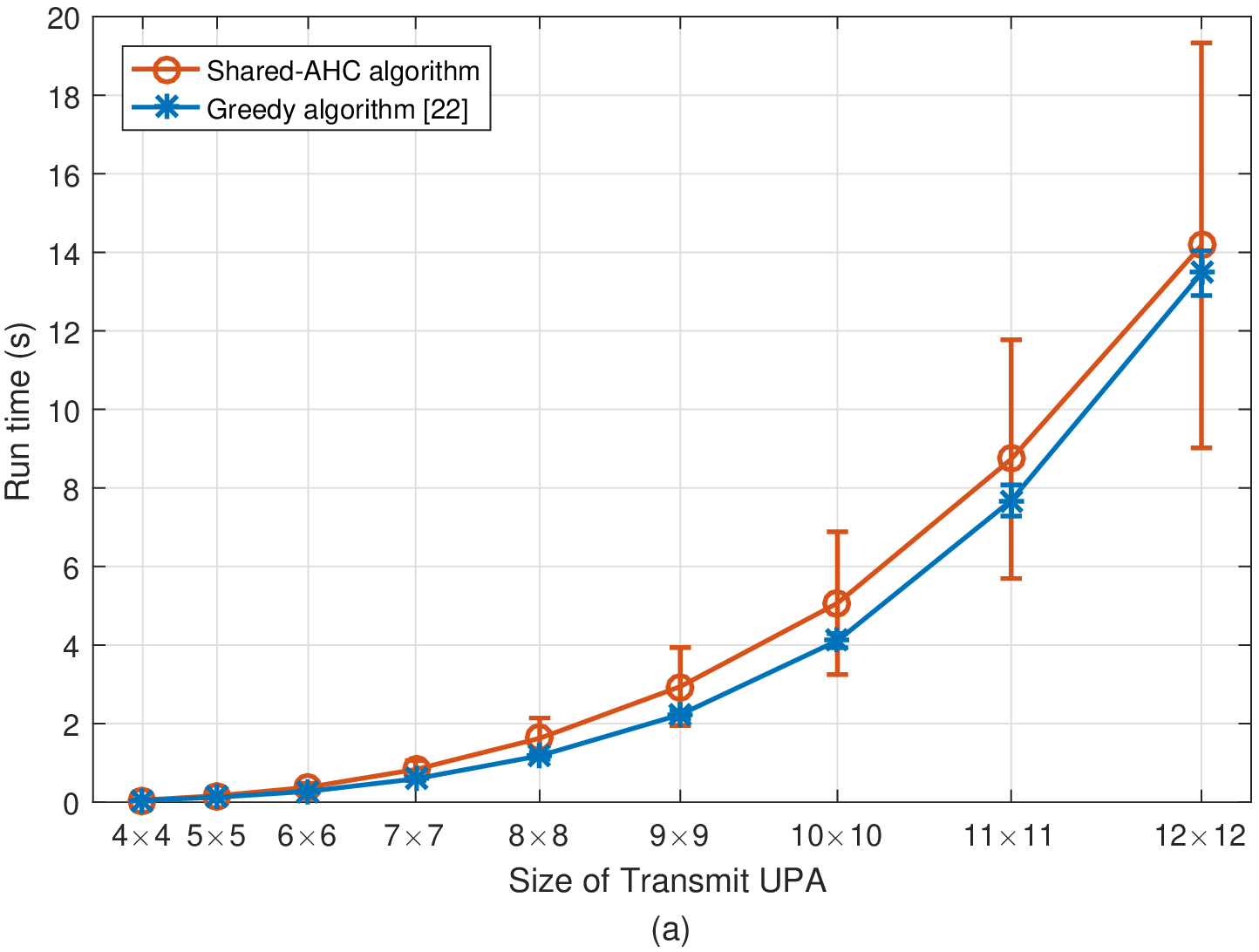}}\\ \subfigure{\includegraphics[width=\columnwidth, keepaspectratio]{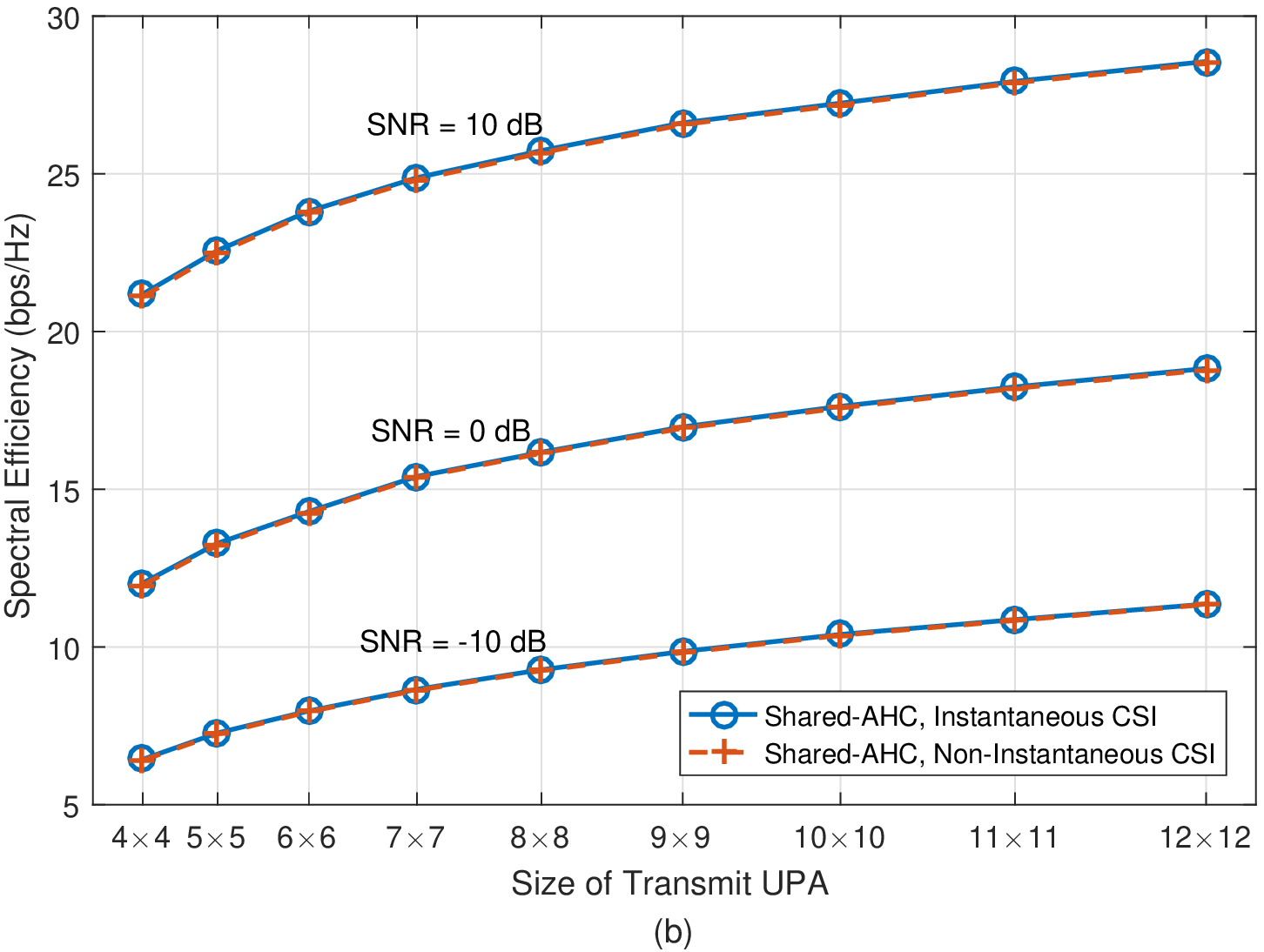}}
    	\caption{(a) Runtime comparison; (b) Robustness of the proposed antenna grouping scheme to time-varying channels.
    Here the same simulation configuration as considered in Fig. \ref{fig_pre} is used except for the size of transmit UPA.}\label{fig_DS}
\end{figure}
With respect of the antenna grouping, the computational complexity of the shared-AHC algorithm for the proposed PCA-based method and the greedy algorithm for the covariance EVD method \cite{170825} are compared.
However, their computational complexity is difficult to be accurately calculated.
On the one hand, both two algorithms have the selection statements of ``if" and ``else", and the complexity can be different for different selections.
On the other hand, the total number of iterations for the shared-AHC algorithm is adaptive.
Therefore, Fig. \ref{fig_DS} (a) compares the practical runtime of these two algorithms instead,
where the simulations are based on the software MATLAB 2016a and the hardware Intel Core i7-7700 CPU and 16 GB RAM.
Fig. \ref{fig_DS} (a) shows the mean and standard deviation of the runtime of two antenna grouping algorithms, and their similar run time versus different sizes of transmit UPA can be observed.

Fig. \ref{fig_DS} (b) investigates the robustness of the shared-AHC algorithm to time-varying channels.
In simulations, we consider the time-varying block-fading channels, where each time block consists of 10 OFDM symbols,
the channels of different OFDM symbols from the same time blocks
are correlated, but the channels from different time blocks are mutually independent.
For time-varying channels, the variation rate of
channel AoAs/AoDs is much slower than that of the channel gains \cite{tvchannel}. So we consider that the
channels of different OFDM symbols in the same time block
share the same AoAs/AoDs and the modulus values of channel gains, but have the mutually independent phase values of channel gains.
In Fig. \ref{fig_DS} (b), the curves labeled with `Instantaneous CSI' indicate the antenna grouping is updated in each OFDM symbol
by using the proposed algorithm based on the instantaneous CSI,
and the curves labeled with `Non-Instantaneous CSI' indicate the antenna grouping is updated in every time block
by using the proposed algorithm based on the CSI of the first OFDM symbol.
We can observe negligible SE performance loss between two groups of curves. Therefore, the robustness of the proposed
antenna grouping algorithm to time-varying channels is confirmed. 

\subsection{Robustness of The Proposed Hybrid Precoder/Combiner Design to Channel Perturbation}
\begin{figure}[tb]
    \centering \subfigure{\includegraphics[width=\columnwidth, keepaspectratio]{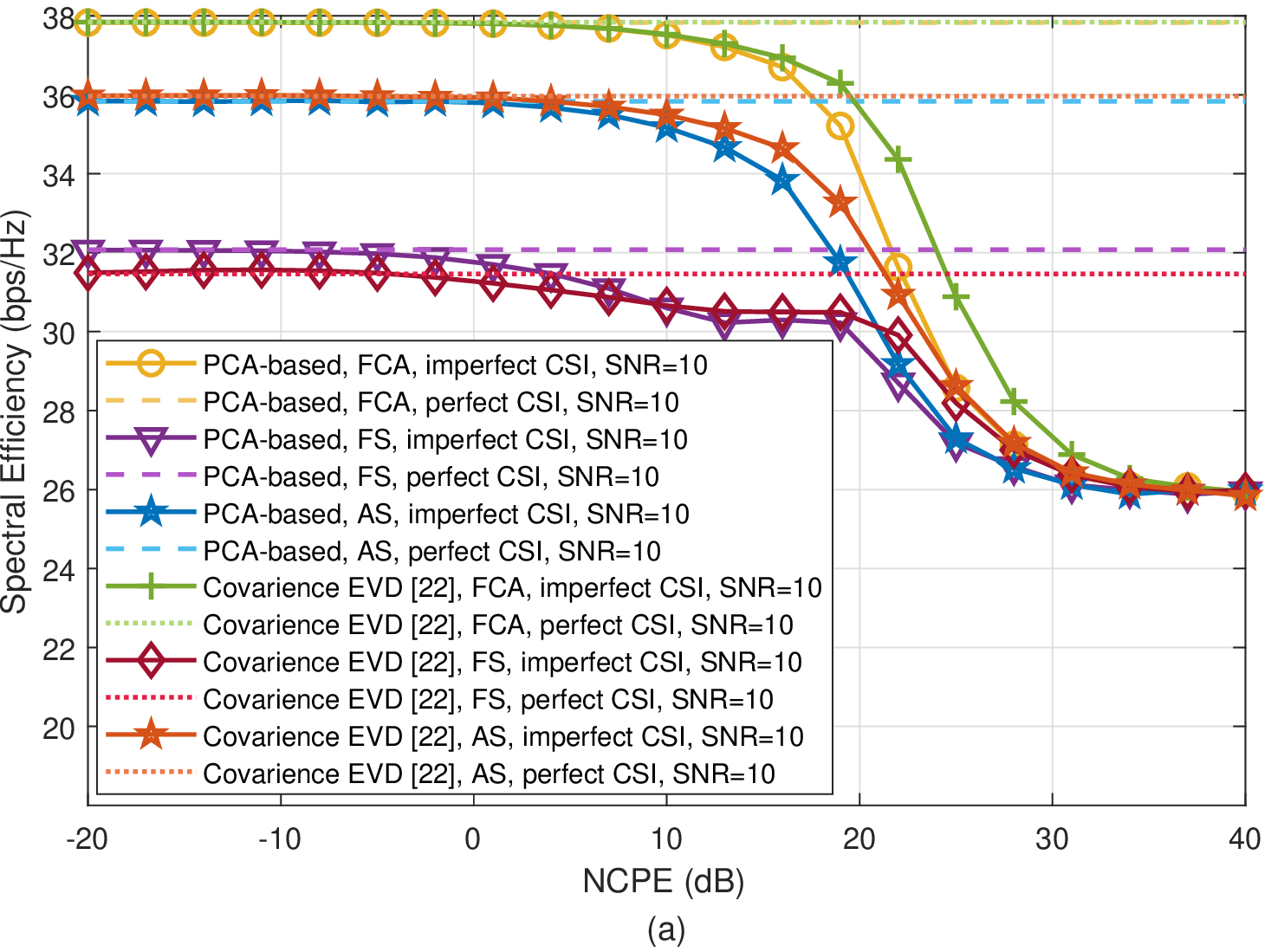}}\\ \subfigure{\includegraphics[width=\columnwidth, keepaspectratio]{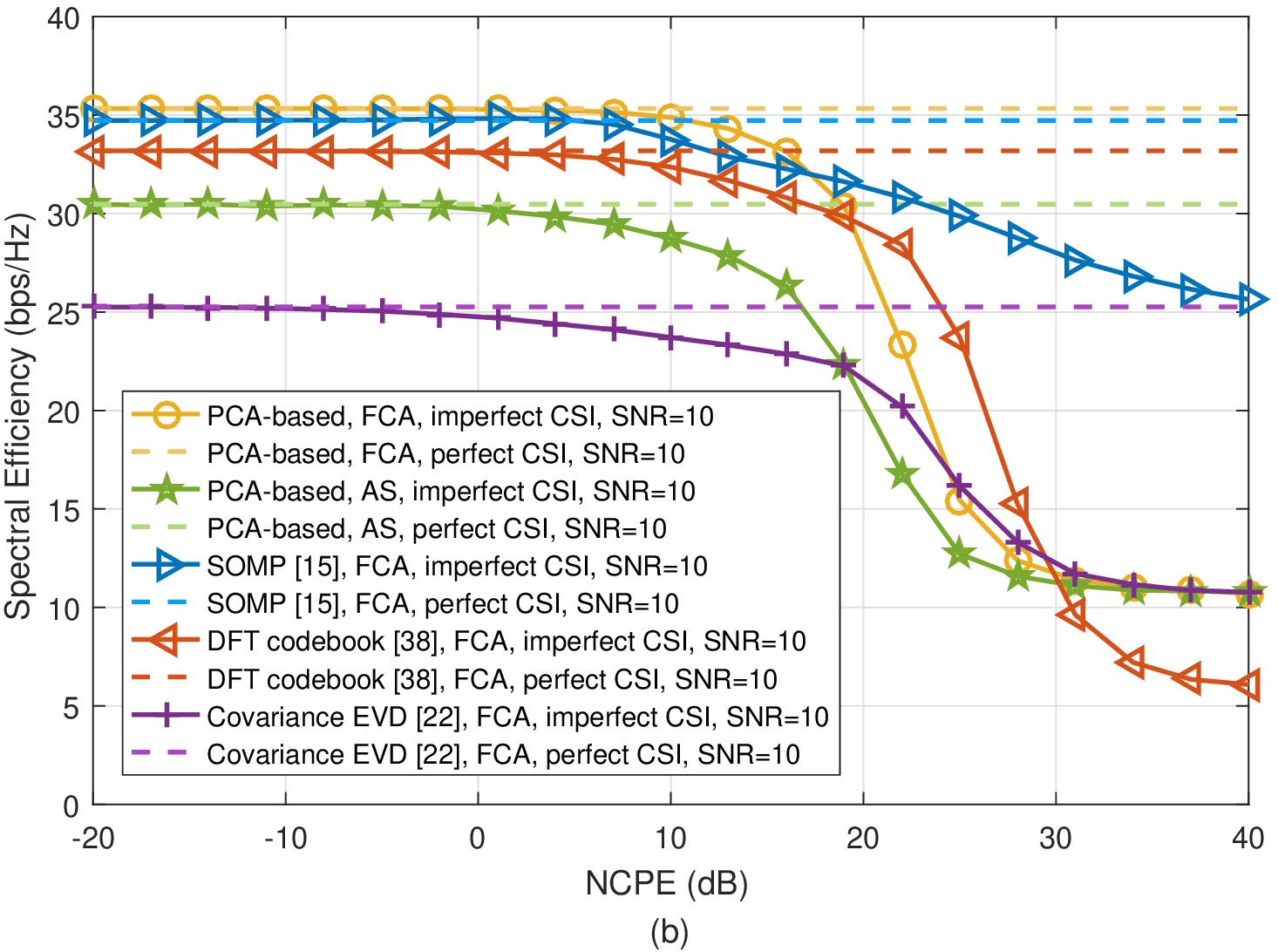}}
    	\caption{(a) SE of different hybrid precoder schemes under the hybrid transmit array and fully-digital receive array; (b) SE of different hybrid precoder/combiner under the hybrid transmit and hybrid receive arrays.}\label{fig_perturbation}
\end{figure}
Fig. \ref{fig_perturbation} (a) and (b) compare
 the robustness of different hybrid precoder/combiner schemes to imperfect CSI, which results from
  the channel perturbation including channel estimation error, the CSI quantization in channel feedback, and/or outdated CSI.
 We define the normalized channel perturbation error (NCPE) 
 as ${\rm NCPE}=\frac{\sum_{k=1}^{K}||\mathbf{H}[k]-\mathbf{H}_{\rm per}[k]||_F^2}{\sum_{k=1}^{K}||\mathbf{H}[k]||_F^2}$, where the imperfect CSI matrix is modeled as $\mathbf{H}_{\rm per}[k]=\mathbf{H}[k]+\mathbf{N}_{\rm per}[k]$, and we assume the entries of channel perturbation error $\mathbf{N}_{\rm per}[k]$ follows the independent and identically distributed complex Gaussian distribution $\mathcal{CN}(0,\sigma_{\rm per}^2)$.
 From Fig. \ref{fig_perturbation} (a), we can observe that both the proposed scheme and covariance EVD-based approach have the similar robustness to channel perturbation.
This is because both our proposed scheme and covariance EVD scheme can extract the correct frequency-flat component from the frequency-selective channels with perturbation.
 From Fig. \ref{fig_perturbation} (b), we can observe that both the proposed scheme and SOMP-based approach have the similar robustness to channel perturbation and outperform other schemes. Note that the SOMP-based approach requires the full knowledge of steering vectors, which can be impractical. When the full knowledge of steering vectors is inaccessible (SOMP-based scheme reduces to the DFT codebook scheme), the performance degrades drastically. By contrast, our proposed scheme does not require the knowledge of steering vectors and have the robustness performance to channel perturbation.


\section{Conclusions}
In this paper, we proposed a hybrid precoding scheme based on PCA for broadband mmWave massive MIMO
systems. We first designed a low-dimensional frequency-flat precoder/combiner from the optimal frequency-selective precoder/combiner based on PCA for fully-connected array. Moreover, we extended the proposed PCA-based hybrid precoder/combiner design to the partially-connected subarray given the antenna grouping pattern. 
For the adaptive subarray, we further proposed the shared-AHC algorithm inspired by cluster analysis in the field of machine learning to group the antennas for the further improved SE performance. Finally, the better SE, BER, and EE performance of the proposed hybrid precoder/combiner solution over state-of-the-art solutions was verified in simulations. 


%

\appendices
\section{Proof of Lemma 1}
To start with, we make the following system approximation.
\newtheorem{app}{\textbf{Approximation}}
\begin{app}
    We assume that the hybrid precoder $\mathbf{F}_{\rm RF}\mathbf{F}_{\rm BB}[k]$ can be sufficiently ``close" to the optimal fully-digital precoder ${\mathbf{F}_{\rm FD}^{\rm opt}}[k]=\mathbf{F}_{\rm RF}\mathbf{F}_{\rm BB}[k]$, $\forall k$, under the given system model and parameters (e.g., $N_t,N_r,N_t^{\rm RF},N_r^{\rm RF},N_s,N_{\rm cl},N_{\rm ray},...$). Define $\mathbf{\Sigma}[k]=\text{\rm blkdiag}(\mathbf{\Sigma}_1[k], \mathbf{\Sigma}_2[k])$, where $\mathbf{\Sigma}_1[k]=[\mathbf{\Sigma}[k]]_{1:N_s,1:N_s}$, and
      $\mathbf{V}[k]=\begin{bmatrix}
                   {\mathbf{F}_{\rm FD}^{\rm opt}}[k] & \mathbf{V}_2[k]
                 \end{bmatrix}$ in (\ref{SVD}), $\forall k$, this ``closeness" is defined based on the following two equivalent approximations:
    \begin{enumerate}[(1)]
      \item The eigenvalues of the matrix $\mathbf{I}_{N_s}-{\mathbf{F}_{\rm FD}^{\rm opt}}^H[k]\mathbf{F}_{\rm RF}\mathbf{F}_{\rm BB}[k]\mathbf{F}_{\rm BB}^H[k]\mathbf{F}_{\rm RF}^H{\mathbf{F}_{\rm FD}^{\rm opt}}[k]$ are small. In this case, it can be equivalently stated as ${\mathbf{F}_{\rm FD}^{\rm opt}}^H[k]\mathbf{F}_{\rm RF}\mathbf{F}_{\rm BB}[k]\approx\mathbf{I}_{N_s}$.
      \item The singular values of the matrix $\mathbf{V}_2^H[k]\mathbf{F}_{\rm RF}\mathbf{F}_{\rm BB}[k]$ are small, i.e. $\mathbf{V}_2^H[k]\mathbf{F}_{\rm RF}\mathbf{F}_{\rm BB}[k]\approx\mathbf{0}$.
    \end{enumerate}
\end{app}
According to (\ref{SVD}), the objective function of problem (\ref{opt}) can be written as
\begin{equation}\label{opt_1}
  \sum_{k=1}^{K}\log_2(\det(\mathbf{I}_{N_r}+\tfrac{1}{\sigma_n^2}\mathbf{\Sigma}^2[k]\mathbf{V}^H[k]\mathbf{F}_{\rm RF}\mathbf{F}_{\rm BB}[k]\mathbf{F}_{\rm BB}^H[k]\mathbf{F}_{\rm RF}^H\mathbf{V}[k])).
\end{equation}
Following the similar derivation for formula (12) in \cite{OMP}, we define (*).
Based on the above definition, (\ref{opt_1}) can be approximated as
\begin{equation}\label{opt_2}
  \sum_{k=1}^{K}\log_2(\det(\mathbf{I}_{N_r}+\frac{1}{\sigma_n^2}
\begin{bmatrix}
  \mathbf{\Sigma}_1^2[k] & \mathbf{0} \\
  \mathbf{0} & \mathbf{\Sigma}_2^2[k]
\end{bmatrix}\begin{bmatrix}
          \mathbf{M}_{11}[k] & \mathbf{M}_{12}[k] \\
          \mathbf{M}_{21}[k] & \mathbf{M}_{22}[k]
        \end{bmatrix})).
\end{equation}
\begin{figure*}[tb]
\begin{equation*}
\begin{aligned}
    &\mathbf{V}^H[k]\mathbf{F}_{\rm RF}\mathbf{F}_{\rm BB}[k]\mathbf{F}_{\rm BB}^H[k]\mathbf{F}_{\rm RF}^H\mathbf{V}[k]
    \\=&\!\!\begin{bmatrix}
          {\mathbf{F}_{\rm FD}^{\rm opt}}^H\![k]\mathbf{F}_{\rm RF}\mathbf{F}_{\rm BB}\![k]
          \mathbf{F}_{\rm BB}^H\![k]\mathbf{F}_{\rm RF}^H{\mathbf{F}_{\rm FD}^{\rm opt}}\![k] & {\mathbf{F}_{\rm FD}^{\rm opt}}^H\![k]\mathbf{F}_{\rm RF}\mathbf{F}_{\rm BB}\![k]
          \mathbf{F}_{\rm BB}^H\![k]\mathbf{F}_{\rm RF}^H\!\mathbf{V}_2\![k] \\
          \mathbf{V}_2^H\![k]\mathbf{F}_{\rm RF}\mathbf{F}_{\rm BB}\![k]
          \mathbf{F}_{\rm BB}^H\![k]\mathbf{F}_{\rm RF}^H{\mathbf{F}_{\rm FD}^{\rm opt}}\![k] & \mathbf{V}_2^H\![k]\mathbf{F}_{\rm RF}\mathbf{F}_{\rm BB}\![k]
          \mathbf{F}_{\rm BB}^H[k]\mathbf{F}_{\rm RF}^H\mathbf{V}_2\![k]
        \end{bmatrix}
    \!\!=\!\!\begin{bmatrix}
          \mathbf{M}_{11}\![k] & \mathbf{M}_{12}\![k] \\
          \mathbf{M}_{21}\![k] & \mathbf{M}_{22}\![k]
        \end{bmatrix}
\end{aligned}\tag{*}
\end{equation*}
\end{figure*}
According to Schur complement identity for matrix determinants, (\ref{opt_2}) is equivalent to
\begin{equation}\label{opt_3}
  \begin{aligned}
  &\sum_{k=1}^{K}(\log_2(\det(\mathbf{I}_{N_s}
  \\&+\tfrac{1}{\sigma_n^2}
\mathbf{\Sigma}_1^2[k]\mathbf{M}_{11}[k]))+\log_2(\det(\mathbf{I}_{N_r-N_s}+\tfrac{1}{\sigma_n^2}
\mathbf{\Sigma}_2^2[k]\mathbf{M}_{22}[k]
\\&-\tfrac{1}{\sigma_n^2}
\mathbf{\Sigma}_2^2[k]\mathbf{M}_{21}[k](\mathbf{I}_{N_s}+\tfrac{1}{\sigma_n^2}
\mathbf{\Sigma}_1^2[k]\mathbf{M}_{11}[k])^{-1}\mathbf{\Sigma}_1^2[k]\mathbf{M}_{12}[k]))).
  \end{aligned}
\end{equation}
According to {\it Approximation 1 (2)}, $\mathbf{M}_{21}[k]$, $\mathbf{M}_{12}[k]$, and $\mathbf{M}_{22}[k]$ are approximately 0, so (\ref{opt_3}) can be approximated as
\begin{equation}\label{opt_drv}
\begin{aligned}
    \sum_{k=1}^{K}&\log_2(\det(\mathbf{I}_{N_s}\!\!+\!\!\tfrac{1}{\sigma_n^2}
    \mathbf{\Sigma}_1^2[k]{\mathbf{F}_{\rm FD}^{\rm opt}}^H[k]\mathbf{F}_{\rm RF}\mathbf{F}_{\rm BB}[k]
    \mathbf{F}_{\rm BB}^H[k]
    \\&\times\mathbf{F}_{\rm RF}^H{\mathbf{F}_{\rm FD}^{\rm opt}}[k])),
\end{aligned}
\end{equation}
where the equation holds when Approximation 1 holds.
Based on $\mathbf{I}+\mathbf{B}\mathbf{A}=(\mathbf{I}+\mathbf{B})
      (\mathbf{I}-(\mathbf{I}+\mathbf{B})^{-1}\mathbf{B}(\mathbf{I}-\mathbf{A}))$ with the definition $\mathbf{B}=\frac{\mathbf{\Sigma}_1^2}{\sigma_n^2}$ and $\mathbf{A}={\mathbf{F}_{\rm FD}^{\rm opt}}^H[k]\mathbf{F}_{\rm RF}\mathbf{F}_{\rm BB}[k]\mathbf{F}_{\rm BB}^H[k]\mathbf{F}_{\rm RF}^H{\mathbf{F}_{\rm FD}^{\rm opt}}[k]$, (\ref{opt_drv}) is equivalent to
\begin{equation}\label{opt_d1}
  \begin{aligned}
    &\sum_{k=1}^{K}(\log_2(\det(\mathbf{I}_{N_s}+\tfrac{1}{\sigma_n^2}
\mathbf{\Sigma}_1^2[k]))
\\&+\log_2(\det(\mathbf{I}_{N_s}-(\mathbf{I}_{N_s}+\tfrac{1}{\sigma_n^2}
\mathbf{\Sigma}_1^2[k])^{-1}\tfrac{1}{\sigma_n^2}\mathbf{\Sigma}_1^2[k]
(\mathbf{I}_{N_s}
\\&-{\mathbf{F}_{\rm FD}^{\rm opt}}^H[k]\mathbf{F}_{\rm RF}\mathbf{F}_{\rm BB}[k]
\mathbf{F}_{\rm BB}^H[k]\mathbf{F}_{\rm RF}^H{\mathbf{F}_{\rm FD}^{\rm opt}}[k])))).
  \end{aligned}
\end{equation}
{\it Approximation 1 (1)} implies the eigenvalues of matrix $(\mathbf{I}_{N_s}+\mathbf{\Sigma}_1^2[k]/\sigma_n^2)^{-1}\mathbf{\Sigma}_1^2[k]/\sigma_n^2(\mathbf{I}_{N_s}-{\mathbf{F}_{\rm FD}^{\rm opt}}^H[k]\mathbf{F}_{\rm RF}\mathbf{F}_{\rm BB}[k]
\mathbf{F}_{\rm BB}^H[k]\mathbf{F}_{\rm RF}^H{\mathbf{F}_{\rm FD}^{\rm opt}}[k])$ are small.
So $\log_2(\det(\mathbf{I}_{N_s}-\mathbf{X}))\approx \log_2(1-\text{Tr}(\mathbf{X}))\approx -\text{Tr}(\mathbf{X})$. Thus (\ref{opt_d1}) can be approximated as
\begin{equation}\label{opt_d2}
  \begin{aligned}
     \sum_{k=1}^{K}&(\log_2(\det(\mathbf{I}_{N_s}\!\!+\!\!\tfrac{1}{\sigma_n^2}
\mathbf{\Sigma}_1^2[k]))
\\&-\text{Tr}((\mathbf{I}_{N_s}\!\!+\!\!\tfrac{1}{\sigma_n^2}
\mathbf{\Sigma}_1^2[k])^{-1}\tfrac{1}{\sigma_n^2}\mathbf{\Sigma}_1^2[k]
(\mathbf{I}_{N_s}
\\&\!\!-\!\!{\mathbf{F}_{\rm FD}^{\rm opt}}^H[k]\mathbf{F}_{\rm RF}\mathbf{F}_{\rm BB}[k]
\mathbf{F}_{\rm BB}^H[k]\mathbf{F}_{\rm RF}^H{\mathbf{F}_{\rm FD}^{\rm opt}}[k]))),
  \end{aligned}
\end{equation}
where the equation holds when Approximation 1 holds. Based on the high SNR approximation $(\mathbf{I}_{N_s}+\frac{\mathbf{\Sigma}_1^2}{\sigma_n^2})^{-1}\frac{\mathbf{\Sigma}_1^2}{\sigma_n^2}\approx \mathbf{I}_{N_s}$, (\ref{opt_d2}) can be further approximated as
\begin{equation}\label{opt_d3}
\begin{aligned}
  \sum_{k=1}^{K}&(\log_2(\det(\mathbf{I}_{N_s}+\tfrac{1}{\sigma_n^2}
  \mathbf{\Sigma}_1^2[k]))
  -(N_s
  \\&-||{\mathbf{F}_{\rm FD}^{\rm opt}}^H[k]\mathbf{F}_{\rm RF}\mathbf{F}_{\rm BB}[k]||_F^2)),
\end{aligned}
\end{equation}
where this equation holds at high SNR conditions.
Therefore, the optimization problem (\ref{opt}) can be approximated as (\ref{opt_final}).

\section{Proof of Proposition 1}
The SVD of $\mathbf{F}_{\rm RF}$ can be written as $\mathbf{U}_{\mathbf{F}_{\rm RF}}[\mathbf{\Sigma}_{\mathbf{F}_{\rm RF}} \ \mathbf{0}_{(N_t-N_t^{\rm RF})\times N_t^{\rm RF}}]^T\mathbf{V}_{\mathbf{F}_{\rm RF}}^H=\check{\mathbf{U}}_{\mathbf{F}_{\rm RF}}\mathbf{\Sigma}_{\mathbf{F}_{\rm RF}}\mathbf{V}_{\mathbf{F}_{\rm RF}}^H$, where $\mathbf{U}_{\mathbf{F}_{\rm RF}}\in\mathbb{C}^{N_t\times N_t}$, $\mathbf{\Sigma}_{\mathbf{F}_{\rm RF}}\in\mathbb{C}^{N_t^{\rm RF}
\times N_t^{\rm RF}}$, $\mathbf{V}_{\mathbf{F}_{\rm RF}}\in\mathbb{C}^{N_t^{\rm RF}
\times N_t^{\rm RF}}$, and $\check{\mathbf{U}}_{\mathbf{F}_{\rm RF}}=[\mathbf{U}_{\mathbf{F}_{\rm RF}}]_{:,1:N_t^{\rm RF}}$. The formula (\ref{F_BB}) can be further expressed as
$\mathbf{F}_{\rm BB}[k]
\!=\!\mathbf{V}_{\mathbf{F}_{\rm RF}}\mathbf{\Sigma}_{\mathbf{F}_{\rm RF}}^{-1}\mathbf{V}_{\mathbf{F}_{\rm RF}}^H\mathbf{\widetilde{F}}_{\rm BB}[k]$.
Hence the objective function in (\ref{opt_final}) can be written as
\begin{equation}\label{bef_item}
\sum\nolimits_{k=1}^{K}
||{\mathbf{F}_{\rm FD}^{\rm opt}}^H[k]\check{\mathbf{U}}_{\mathbf{F}_{\rm RF}}\mathbf{V}_{\mathbf{F}_{\rm RF}}^H\mathbf{\widetilde{F}}_{\rm BB}[k]||_F^2.
\end{equation}
According to previous work \cite{uni_cons}, unitary constraints offer a close performance to the total power constraint and provide a relatively simple form of solution. To simplify the problem, we consider the condition under unitary power constraints instead. Therefore, the water-filling power allocation coefficients can be ignored. Specifically, the equivalent baseband precoder is $\mathbf{\widetilde{F}}_{\rm BB}[k]=[\mathbf{\widetilde{V}}[k]]_{:,1:N_s}$. Hence, $\mathbf{\widetilde{F}}_{\rm BB}[k]$ is a unitary or semi-unitary matrix depending on $N_s=N_t^{\rm RF}$ or $N_s<N_t^{\rm RF}$. Therefore, in the following part, we discuss the two conditions separately.

When $N_s=N_t^{\rm RF}$, $\mathbf{\widetilde{F}}_{\rm BB}[k]$ is a unitary matrix. Therefore, (\ref{bef_item}) can be simplified as
\begin{equation}\label{s_eq_RF}
\begin{aligned}
&\sum\nolimits_{k=1}^{K}
||{\mathbf{F}_{\rm FD}^{\rm opt}}^H[k]\check{\mathbf{U}}_{\mathbf{F}_{\rm RF}}\mathbf{V}_{\mathbf{F}_{\rm RF}}^H\mathbf{\widetilde{F}}_{\rm BB}[k]||_F^2
\\&=\text{Tr}(\sum\nolimits_{k=1}^{K}\check{\mathbf{U}}_{\mathbf{F}_{\rm RF}}^H{\mathbf{F}_{\rm FD}^{\rm opt}}[k]{\mathbf{F}_{\rm FD}^{\rm opt}}^H[k]\check{\mathbf{U}}_{\mathbf{F}_{\rm RF}})
\\=&\!\text{Tr}(\!\begin{bmatrix}
\check{\mathbf{U}}_{\mathbf{F}_{\rm RF}}^H{\mathbf{F}_{\rm FD}^{\rm opt}}\![1] & \cdots & \check{\mathbf{U}}_{\mathbf{F}_{\rm RF}}^H{\mathbf{F}_{\rm FD}^{\rm opt}}\![K]
\end{bmatrix}\!\!\!\!
\begin{bmatrix}
{\mathbf{F}_{\rm FD}^{\rm opt}}^H\![1]\check{\mathbf{U}}_{\mathbf{F}_{\rm RF}} \\
\vdots \\
{\mathbf{F}_{\rm FD}^{\rm opt}}^H\![K]\check{\mathbf{U}}_{\mathbf{F}_{\rm RF}}
\end{bmatrix}\!\!)
\\=&\text{Tr}(\check{\mathbf{U}}_{\mathbf{F}_{\rm RF}}^H
{\widetilde{\mathbf{F}}_{\rm FD}^{\rm opt}}
{\widetilde{\mathbf{F}}_{\rm FD}^{\rm opt}}{}^H
\check{\mathbf{U}}_{\mathbf{F}_{\rm RF}})
\\=&\text{Tr}(\check{\mathbf{U}}_{\mathbf{F}_{\rm RF}}^H\mathbf{U}_{\widetilde{\mathbf{F}}_{\rm FD}^{\rm opt}}\mathbf{\Sigma}_{\widetilde{\mathbf{F}}_{\rm FD}^{\rm opt}}^2\mathbf{U}_{\widetilde{\mathbf{F}}_{\rm FD}^{\rm opt}}^H\check{\mathbf{U}}_{\mathbf{F}_{\rm RF}}).
\end{aligned}
\end{equation}
Since $\mathbf{U}_{\mathbf{F}_{\rm RF}}$ and $\mathbf{U}_F$ are semi-unitary and unitary matrix, (\ref{s_eq_RF}) reaches the maximum only when $\check{\mathbf{U}}_{\rm RF}=[\mathbf{U}_F]_{:,1:N_t^{\rm RF}}$. Moreover, the rank of $\mathbf{F}_{\rm RF}$ is $N_t^{\rm RF}$.
Hence, {the sub-optimal RF precoder can be expressed as
$\mathbf{F}_{\rm RF}=\frac{1}{{\sqrt {{N_t}} }}{\rm exp}(j\angle([\mathbf{U}_{\widetilde{\mathbf{F}}_{\rm FD}^{\rm opt}}]_{:,1:N_t^{\rm RF}}))$\footnote{This ``sub-optimal'' is due to the approximation by considering the CMC.}.}

When $N_s<N_t^{\rm RF}$, $\mathbf{\widetilde{F}}_{\rm BB}[k]$ is a semi-unitary matrix. Given the SVD $\mathbf{\widetilde{F}}_{\rm BB}[k]=\mathbf{U}_{\rm BB}[k]\left[\mathbf{I}_{N_s} \ \mathbf{0}\right]^T\\\times\mathbf{V}_{\rm BB}^H[k]$,
the objective function (\ref{bef_item}) can be simplified as
\begin{equation}\label{s_l_RF}
\begin{aligned}
\sum_{k=1}^{K}
&\text{Tr}({\mathbf{F}_{\rm FD}^{\rm opt}}^H[k]\check{\mathbf{U}}_{\mathbf{F}_{\rm RF}}\!\mathbf{V}_{\mathbf{F}_{\rm RF}}\!\mathbf{U}_{\rm BB}[k]\text{blkdiag}(\mathbf{I}_{N_s},\mathbf{0})\mathbf{U}_{\rm BB}^H[k]
\\&\times\mathbf{V}_{\mathbf{F}_{\rm RF}}^H\!
\check{\mathbf{U}}_{\mathbf{F}_{\rm RF}}^H\!{\mathbf{F}_{\rm FD}^{\rm opt}}[k]).
\end{aligned}
\end{equation}
It is obvious that the solution {maximizing} (\ref{s_eq_RF}) also {maximizing} (\ref{s_l_RF}). Therefore, following the similar derivation of (\ref{s_eq_RF}), the conclusion of $\mathbf{F}_{\rm RF}=\frac{1}{{\sqrt {{N_t}} }}{\rm exp}(j\angle([\mathbf{U}_{\widetilde{\mathbf{F}}_{\rm FD}^{\rm opt}}]_{:,1:N_t^{\rm RF}}))$ is easy to be reached.

\section{Proof of Proposition 2}
Substituting $\mathbf{W}^{\rm WLS}_{\rm BB}[k]$ (\ref{W_BB_LS}) into the objective function of (\ref{problem_W}), we obtain
\begin{equation}
\begin{aligned}
&\sum_{k=1}^{K}||\mathbb{E}[\mathbf{y}[k]\mathbf{y}^H[k]]^{\frac{1}{2}}({\mathbf{W}_{\rm FD}^{\rm opt}}[k]-\mathbf{W}_{\rm RF}\mathbf{W}_{\rm BB}[k])||_F^2
\\=&\!\sum_{k=1}^{K}\!||\mathbb{E}[\mathbf{y}[k]\mathbf{y}^H[k]]^{\frac{1}{2}}{\mathbf{W}_{\rm FD}^{\rm opt}}[k]-\mathbb{E}[\mathbf{y}[k]\mathbf{y}^H[k]]^{\frac{1}{2}}\mathbf{W}_{\rm RF}
\\&\times(\mathbf{W}_{\rm RF}^H\mathbb{E}[\mathbf{y}[k]\mathbf{y}^H[k]]\mathbf{W}_{\rm RF})^{-1}\!\mathbf{W}_{\rm RF}^H\mathbb{E}[\mathbf{y}[k]\mathbf{y}^H[k]]{\mathbf{W}_{\rm FD}^{\rm opt}}[k])||_F^2.
\end{aligned}
\end{equation}
Defining $\mathbf{A}[k]=\mathbb{E}[\mathbf{y}[k]\mathbf{y}^H[k]]^{\frac{1}{2}}{\mathbf{W}_{\rm FD}^{\rm opt}}[k]$ and $\mathbf{B}[k]=\mathbb{E}[\mathbf{y}[k]\mathbf{y}^H[k]]^{\frac{1}{2}}\mathbf{W}_{\rm RF}$, we further obtain
\begin{equation}
\begin{aligned}
&\sum_{k=1}^{K}||\mathbf{A}[k]-\mathbf{B}[k](\mathbf{B}^H[k]
\mathbf{B}[k])^{-1}\mathbf{B}^H[k]\mathbf{A}[k]||_F^2
\\=&\sum_{k=1}^{K}\text{Tr}(\mathbf{A}^H[k]\mathbf{A}[k])
\\&-\sum_{k=1}^{K}\text{Tr}(\mathbf{A}^H[k]\mathbf{B}[k](\mathbf{B}^H[k]
\mathbf{B}[k])^{-1}\mathbf{B}^H[k]\mathbf{A}[k]).
\end{aligned}
\end{equation}
Hence, the minimization problem can be formulated as the following maximization problem
\begin{equation}\label{max_W}
\setlength{\abovedisplayskip}{4pt}
\setlength{\belowdisplayskip}{4pt}
\begin{aligned}
\max_{\mathbf{W}_{\rm RF},\mathbf{W}_{\rm BB}[k]}&\sum\nolimits_{k=1}^{K}\text{Tr}(\mathbf{A}^H[k]\mathbf{B}[k](\mathbf{B}^H[k]
\mathbf{B}[k])^{-1}\mathbf{B}^H[k]\mathbf{A}[k])
\\&\text{s.t. }\mathbf{B}[k]=\mathbb{E}[\mathbf{y}[k]\mathbf{y}^H[k]]^{\frac{1}{2}}\mathbf{W}_{\rm RF},\mathbf{W}_{\rm RF}\in\mathcal{W}_{\rm RF}.
\end{aligned}
\end{equation}
{The SVD of $\mathbf{W}_{\rm RF}$ can be written as $\mathbf{U}_{\mathbf{W}_{\rm RF}}[\mathbf{\Sigma}_{\mathbf{W}_{\rm RF}} \ \mathbf{0}_{(N_r-N_r^{\rm RF})\times N_r^{\rm RF}}]^T\mathbf{V}_{\mathbf{W}_{\rm RF}}^H=\check{\mathbf{U}}_{\mathbf{W}_{\rm RF}}\mathbf{\Sigma}_{\mathbf{W}_{\rm RF}}\mathbf{V}_{\mathbf{W}_{\rm RF}}^H$, where $\mathbf{U}_{\mathbf{W}_{\rm RF}}\in\mathbb{C}^{N_r\times N_r}$, $\mathbf{\Sigma}_{\mathbf{W}_{\rm RF}}\in\mathbb{C}^{N_r^{\rm RF}
\times N_r^{\rm RF}}$, $\mathbf{V}_{\mathbf{W}_{\rm RF}}\in\mathbb{C}^{N_r^{\rm RF}
\times N_r^{\rm RF}}$, and $\check{\mathbf{U}}_{\mathbf{W}_{\rm RF}}=[\mathbf{U}_{\mathbf{W}_{\rm RF}}]_{:,1:N_r^{\rm RF}}$. Similarly, the SVD of $\mathbf{B}[k]$ can be written as $\mathbf{U}_B[k][\mathbf{\Sigma}_B[k] \ \mathbf{0}_{(N_r\!-\!N_r^{\rm RF})\!\times\! N_r^{\rm RF}}]^T\mathbf{V}_B[k]^H\!\!=\!\!\check{\mathbf{U}}_B[k]
\mathbf{\Sigma}_B[k]\\\times\mathbf{V}_B[k]^H$, where $\mathbf{U}_B[k]\in\mathbb{C}^{N_r\times N_r}$, $\mathbf{\Sigma}_B[k]\in\mathbb{C}^{N_r^{\rm RF}
\times N_r^{\rm RF}}$, $\mathbf{V}_B[k]\in\mathbb{C}^{N_r^{\rm RF}
\times N_r^{\rm RF}}$, and $\check{\mathbf{U}}_B[k]=[\mathbf{U}_B[k]]_{:,1:N_r^{\rm RF}}$. Substituting $\mathbf{B}[k]$ with its SVD and comparing with (\ref{s_eq_RF}), the objective function of the problem (\ref{max_W}) can be further simplified as
\begin{equation}\label{max_W_2}
\begin{aligned}
  &\sum_{k=1}^{K}\text{Tr}(\mathbf{A}^H[k]\mathbf{B}[k](\mathbf{B}^H[k]
\mathbf{B}[k])^{-1}\mathbf{B}^H[k]\mathbf{A}\![k])
\\\!\!=&\!\!\sum_{k=1}^{K}\!||\hat{\mathbf{U}}_B^H\![k]\mathbf{A}\![k]||_F^2
\!=\!\text{Tr}(\check{\mathbf{U}}_B^H\!
\mathbf{U}_W\!\mathbf{\Sigma}_W^2\!
\mathbf{U}_W^H\check{\mathbf{U}}_B\!),
\end{aligned}
\end{equation}
where $\mathbf{U}_W\mathbf{\Sigma}_W\mathbf{V}_W^H$ is the SVD of $\mathbf{W}$ and $\mathbf{W}=\begin{bmatrix}
    \mathbf{A}[1] \ \cdots \ \mathbf{A}[K]
	\end{bmatrix}$. Thus the optimal $\check{\mathbf{U}}_B=[\mathbf{U}_W]_{:,1:N_r^{\rm RF}}$.
We can find a unitary matrix $\mathbf{U}_R[k]\in\mathbb{C}^{N_r
\times N_r}$ satisfying $\mathbf{U}_B[k]=\mathbf{U}_R[k]\mathbf{U}_{\mathbf{W}_{\rm RF}}$, thus $\hat{\mathbf{U}}_{\mathbf{W}_{\rm RF}}=\mathbf{U}_R^H[k]\hat{\mathbf{U}}_B[k]=\mathbf{U}_R^H[k][\mathbf{U}_W]_{:,1:N_r^{\rm RF}}$. Hence, a sub-optimal solution to problem (\ref{max_W}) is $\mathbf{W}_{\rm RF}=\frac{1}{{\sqrt {{N_r}} }}{\rm exp}(j\angle([\mathbf{U}_W]_{:,1:N_r^{\rm RF}}))$.}
\vspace{-3mm}
\section{Proof of (\ref{w_ass})}
By substituting (\ref{SVD}) and (\ref{F_BB}) into (\ref{H_eff}), we obtain
    $\mathbf{H}_{\rm eff}\![k]\!=\!\mathbf{H}\![k]\mathbf{F}_{\rm RF}\mathbf{F}_{\rm BB}\![k]
    \!=\!\mathbf{U}\![k]\mathbf{\widetilde{U}}\![k]\mathbf{\widetilde{\Sigma}}\![k]\mathbf{\widetilde{V}}^H\![k]
    \\\times[\mathbf{\widetilde{V}}\![k]]_{:,1:N_s}\!\mathbf{\Lambda}\![k]$.
Defining $[\mathbf{\widetilde{V}}[k]]_{:,1:N_s}=\mathbf{\widetilde{V}}_{N_s}[k]$ and $\mathbf{\widetilde{V}}[k]=\begin{bmatrix}
\mathbf{\widetilde{V}}_{N_s}[k] & \mathbf{\widetilde{V}}_0[k]
\end{bmatrix}$, $\mathbf{H}_{\rm eff}[k]$ is
\begin{equation}\label{H_eff_AC}
\begin{aligned}
    \mathbf{H}_{\rm eff}[k]&=\mathbf{U}[k]\mathbf{\widetilde{U}}[k]\mathbf{\widetilde{\Sigma}}[k]\!\begin{bmatrix}
                                                                                   \mathbf{\widetilde{V}}_{N_s}^H[k] \\
                                                                                   \mathbf{\widetilde{V}}_{0}^H[k]
                                                                                 \end{bmatrix}\!\mathbf{\widetilde{V}}_{N_s}[k]\mathbf{\Lambda}[k]
    \\&=[\mathbf{U}[k]\mathbf{\widetilde{U}}[k]\mathbf{\widetilde{\Sigma}}[k]]_{:,1:N_s}\mathbf{\Lambda}[k].
\end{aligned}
\end{equation}
Combine (\ref{wf_1}) with (\ref{H_eff_AC}), it arrives
\begin{equation}
\begin{aligned}
    \mathbf{H}_{\rm eff}[k]&=[\mathbf{U}[k]\mathbf{\widetilde{U}}[k]\mathbf{\widetilde{\Sigma}}[k]]_{:,1:N_s}\mathbf{\Lambda}[k]
   \\&=[\mathbf{U}[k]\mathbf{\widetilde{U}}[k]]_{:,1:N_s}(\mu[\mathbf{\widetilde{\Sigma}}\![k]]_{1:N_s,1:N_s}-N_s\mathbf{I}_{N_s}).
\end{aligned}
\end{equation}
In large antenna, $\mathbf{\widetilde{U}}[k]=\mathbf{I}$ and $\mathbf{\widetilde{\Sigma}}[k]=\mathbf{\Sigma}[k]$. The effective channel can be further written as
\begin{equation}
\begin{aligned}
    \mathbf{H}_{\rm eff}[k]&=[\mathbf{U}[k]]_{:,1:N_s}(\mu[\mathbf{\Sigma}[k]]_{1:N_s,1:N_s}-N_s\mathbf{I}_{N_s})
    \\&=[\mathbf{A}_r[k]]_{:,1:N_s}(\mu[|\mathbf{P}[k]|]_{1:N_s,1:N_s}-N_s\mathbf{I}_{N_s}).
\end{aligned}
\end{equation}
Considering (\ref{blk_y}), the effective channel matrix $\mathbf{H}_{{\rm eff},\mathcal{T}_r}[k]=[\mathbf{A}_r]_{\mathcal{T}_r,1:N_s}(\mu[|\mathbf{P}[k]|]_{1:N_s,1:N_s}-N_s\mathbf{I}_{N_s})$.
Therefore, the covariance matrix of $\mathbf{y}_{\mathcal{T}_r}[k]$ can be expressed as
\begin{equation}
\begin{aligned}
    \mathbb{E}[\mathbf{y}_{\mathcal{T}_r}\![k]\mathbf{y}_{\mathcal{T}_r}^H\![k]]=&
    \mathbf{H}_{{\rm eff},\mathcal{T}_r}\![k]\mathbf{H}_{{\rm eff},\mathcal{T}_r}^H\![k]+\sigma_n^2\mathbf{I}_{N_r^{\rm sub}}
    \\=&
    [\mathbf{A}_r]_{\mathcal{T}_r,1:N_s}\!(\mu[|\mathbf{P}\![k]|]_{1:N_s,1:N_s}\!-\!N_s\mathbf{I}_{N_s}\!)^2
    \\&\times\!([\mathbf{A}_r]_{\mathcal{T}_r,1:N_s}\!)^H\!+\sigma_n^2\mathbf{I}_{N_r^{\rm sub}}.
\end{aligned}
\end{equation}
Since $|\mathbf{P}[k]|$ remains unchanged for different $k$, $\mathbb{E}[\mathbf{y}_{\mathcal{T}_r}[k]\mathbf{y}_{\mathcal{T}_r}^H[k]]$ is irrelevant with $k$ for large $N_t$.
Therefore, (\ref{w_ass}) is valid in the regime of very large number of antennas, and this approximation error can usually be negligible for mmWave massive MIMO with large number of antennas.


\ifCLASSOPTIONcaptionsoff
  \newpage
\fi



%
\vspace*{-3mm}

%








\end{document}